\newcommand{\blk}{\color{black}}
\newcommand{\ulmshort}{Institute of Theoretical Physics, Ulm University, Ulm, Germany}
\newcommand{\gdanskshort}{ICTQT -- International Centre for Theory of Quantum Technologies, Gda\'nsk, Poland} 
\newcommand{\perimetershort}{Perimeter Institute for Theoretical Physics, Waterloo, Canada} 
\newcommand{\waterlooshort}{Department of Physics and Astronomy, University of Waterloo, Waterloo, Canada} 
\tikzset{->-/.style={decoration={
  markings,
  mark=at position .5 with {\arrow{>}}},postaction={decorate}}}
\tikzset{-<-/.style={decoration={
  markings,
  mark=at position .5 with {\arrow{<}}},postaction={decorate}}}
\tikzstyle{bwSpider}=[
 \tikzstyle{wbSpider}=[
\tikzstyle{epiCopoint}=[regular polygon,regular polygon sides=3,draw,scale=0.75,inner sep=-0.5pt,minimum width=5mm,fill=white,regular polygon rotate=0,line width=1pt]
\tikzstyle{epiPoint}=[regular polygon,regular polygon sides=3,draw,scale=0.75,inner sep=-0.5pt,minimum width=5mm,fill=white,regular polygon rotate=180,line width=1pt]
\tikzstyle{epiPointWide}=[regular polygon,regular polygon sides=3,draw,scale=0.75,inner sep=-0.5pt,minimum width=8mm,fill=white,regular polygon rotate=180,line width=1pt]
\tikzstyle{epiBox}=[fill=white,draw, line width = 1pt,inner sep=0.6mm,font=\footnotesize,minimum height=3mm,minimum width=3mm]
\tikzstyle{epiBoxWide}=[fill=white,draw, line width = 1pt,inner sep=0.6mm,font=\footnotesize,minimum height=3mm,minimum width=5mm]
\tikzstyle{epiBoxVeryWide}=[fill=white,draw, line width = 1pt,inner sep=0.6mm,font=\footnotesize,minimum height=3mm,minimum width=7mm]
\tikzstyle{qWire}=[line width = 1pt, color=black]
\tikzstyle{cWire}=[color=gray,line width = .75pt]
\tikzstyle{CqWire}=[color=gray,line width = .75pt,->-]
\tikzstyle{CcWire}=[color=gray,line width = .75pt,->-]
\tikzstyle{RqWire}=[line width = 1pt, color=black,-<-]
\tikzstyle{RcWire}=[color=gray,line width = .75pt,-<-]
\tikzstyle{env}=[copoint,regular polygon rotate=0,minimum width=0.2cm, fill=black]
\tikzstyle{probs}=[shape=semicircle,fill=white,draw=black,shape border rotate=180,minimum width=1.2cm]
\tikzstyle{every picture}=[baseline=-0.25em,scale=0.5]
\tikzstyle{dotpic}=[] 
\tikzstyle{diredges}=[every to/.style={diredge}]
\tikzstyle{math matrix}=[matrix of math nodes,left delimiter=(,right delimiter=),inner sep=2pt,column sep=1em,row sep=0.5em,nodes={inner sep=0pt},text height=1.5ex, text depth=0.25ex]
\tikzstyle{inline text}=[text height=1.5ex, text depth=0.25ex,yshift=0.5mm]
\tikzstyle{label}=[font=\footnotesize,text height=1.5ex, text depth=0.25ex,yshift=0.5mm]
\tikzstyle{left label}=[label,anchor=east,xshift=1.5mm]
\tikzstyle{right label}=[label,anchor=west,xshift=-1mm]
\tikzstyle{up label}=[label,anchor=south,yshift=-1mm]
\tikzstyle{braceedge}=[decorate,decoration={brace,amplitude=2mm,raise=-1mm}]
\tikzstyle{small braceedge}=[decorate,decoration={brace,amplitude=1mm,raise=-1mm}]
\tikzstyle{doubled}=[line width=1.6pt] 
\tikzstyle{boldedge}=[doubled,shorten <=-0.17mm,shorten >=-0.17mm]
\tikzstyle{boldedgegray}=[doubled,gray,shorten <=-0.17mm,shorten >=-0.17mm]
\tikzstyle{singleedgegray}=[gray]
\tikzstyle{semidoubled}=[line width=1.4pt] 
\tikzstyle{semiboldedgegray}=[semidoubled,gray,shorten <=-0.17mm,shorten >=-0.17mm]
\tikzstyle{boxedge}=[semiboldedgegray]
\tikzstyle{boldedgedashed}=[very thick,dashed,shorten <=-0.17mm,shorten >=-0.17mm]
\tikzstyle{vboldedgedashed}=[doubled,dashed,shorten <=-0.17mm,shorten >=-0.17mm]
\tikzstyle{left hook arrow}=[left hook-latex]
\tikzstyle{right hook arrow}=[right hook-latex]
\tikzstyle{sembracket}=[line width=0.5pt,shorten <=-0.07mm,shorten >=-0.07mm]
\tikzstyle{causal edge}=[->,thick,gray]
\tikzstyle{causal nondir}=[thick,gray]
\tikzstyle{timeline}=[thick,gray, dashed]
\tikzstyle{cedge}=[<->,thick,gray!70!white]
\tikzstyle{empty diagram}=[draw=gray!40!white,dashed,shape=rectangle,minimum width=1cm,minimum height=1cm]
\tikzstyle{empty diagram small}=[draw=gray!50!white,dashed,shape=rectangle,minimum width=0.6cm,minimum height=0.5cm]
\tikzstyle{dot}=[inner sep=0mm,minimum width=2mm,minimum height=2mm,draw,shape=circle]
\tikzstyle{bigdot}=[inner sep=0mm,minimum width=5mm,minimum height=5mm,draw,shape=circle]
\tikzstyle{leak}=[white dot, shape=regular polygon, minimum size=3.3 mm, regular polygon sides=3, outer sep=-0.2mm, regular polygon rotate=270]
\tikzstyle{proj}=[regular polygon,regular polygon sides=4,draw,scale=0.75,inner sep=-0.5pt,minimum width=6mm,fill=white]
\tikzstyle{projOut}=[regular polygon,regular polygon sides=3,draw,scale=0.75,inner sep=-0.5pt,minimum width=7.5mm,fill=white,regular polygon rotate=180]
\tikzstyle{projIn}=[regular polygon,regular polygon sides=3,draw,scale=0.75,inner sep=-0.5pt,minimum width=7.5mm,fill=white]
\tikzstyle{Vleak}=[white dot, shape=regular polygon, minimum size=3.3 mm, regular polygon sides=3, outer sep=-0.2mm, regular polygon rotate=90]
\tikzstyle{dleak}=[white dot, line width=1.6pt, shape=regular polygon, minimum size=3.3 mm, regular polygon sides=3, outer sep=-0.2mm, regular polygon rotate=270]
\tikzstyle{Wsquare}=[white dot, shape=regular polygon, rounded corners=0.8 mm, minimum size=3.3 mm, regular polygon sides=3, outer sep=-0.2mm]
\tikzstyle{Wsquareadj}=[white dot, shape=regular polygon, rounded corners=0.8 mm, minimum size=3.3 mm, regular polygon sides=3, outer sep=-0.2mm, regular polygon rotate=180]
\tikzstyle{ddot}=[inner sep=0mm, doubled, minimum width=2.5mm,minimum height=2.5mm,draw,shape=circle]
\tikzstyle{clear dot}=[dot,fill=none,text depth=-0.2mm,draw=gray, line width = .75pt]
\tikzstyle{tall clear dot}=[dot,fill=none,text depth=-0.2mm,draw=gray, line width = .75pt,shape=ellipse, minimum height=5mm]
\tikzstyle{wide clear dot}=[dot,fill=none,text depth=-0.2mm,draw=gray, line width = .75pt, shape=ellipse, minimum width = 5mm]
\tikzstyle{very wide clear dot}=[dot,fill=none,text depth=-0.2mm,draw=gray, line width = .75pt, shape=ellipse, minimum width = 7mm ]
\tikzstyle{black dot}=[dot,fill=black]
\tikzstyle{white dot}=[dot,fill=white,,text depth=-0.2mm]
\tikzstyle{white Wsquare}=[Wsquare,fill=gray,,text depth=-0.2mm]
\tikzstyle{white Wsquareadj}=[Wsquareadj,fill=white,,text depth=-0.2mm]
\tikzstyle{green dot}=[white dot] 
\tikzstyle{gray dot}=[dot,fill=gray!40!white,,text depth=-0.2mm]
\tikzstyle{red dot}=[gray dot] 
\tikzstyle{black ddot}=[ddot,fill=black]
\tikzstyle{white ddot}=[ddot,fill=white]
\tikzstyle{gray ddot}=[ddot,fill=gray!40!white]
\tikzstyle{gray edge}=[gray!60!white]
\tikzstyle{small dot}=[inner sep=0.2mm,minimum width=0pt,minimum height=0pt,draw,shape=circle]
\tikzstyle{small black dot}=[small dot,fill=black]
\tikzstyle{small white dot}=[small dot,fill=white]
\tikzstyle{small gray dot}=[small dot,fill=gray,draw=gray]
\tikzstyle{causal dot}=[inner sep=0.4mm,minimum width=0pt,minimum height=0pt,draw=white,shape=circle,fill=gray!40!white]
\tikzstyle{phase dimensions}=[minimum size=5mm,font=\footnotesize,rectangle,rounded corners=2.5mm,inner sep=0.2mm,outer sep=-2mm]
\tikzstyle{dphase dimensions}=[minimum size=5mm,font=\footnotesize,rectangle,rounded corners=2.5mm,inner sep=0.2mm,outer sep=-2mm]
\tikzstyle{white phase dot}=[dot,fill=white,phase dimensions]
\tikzstyle{white phase ddot}=[ddot,fill=white,dphase dimensions]
\tikzstyle{white rect ddot}=[draw=black,fill=white,doubled,minimum size=5mm,font=\footnotesize,rectangle,rounded corners=2.5mm,inner sep=0.2mm]
\tikzstyle{gray rect ddot}=[draw=black,fill=gray!40!white,doubled,minimum size=6mm,font=\footnotesize,rectangle,rounded corners=3mm]
\tikzstyle{gray phase dot}=[dot,fill=gray!40!white,phase dimensions]
\tikzstyle{gray phase ddot}=[ddot,fill=gray!40!white,dphase dimensions]
\tikzstyle{grey phase dot}=[gray phase dot]
\tikzstyle{grey phase ddot}=[gray phase ddot]
\tikzstyle{small phase dimensions}=[minimum size=4mm,font=\tiny,rectangle,rounded corners=2mm,inner sep=0.2mm,outer sep=-2mm]
\tikzstyle{small dphase dimensions}=[minimum size=4mm,font=\tiny,rectangle,rounded corners=2mm,inner sep=0.2mm,outer sep=-2mm]
\tikzstyle{small gray phase dot}=[dot,fill=gray!40!white,small phase dimensions]
\tikzstyle{small gray phase ddot}=[ddot,fill=gray!40!white,small dphase dimensions]
\tikzstyle{small map}=[draw,shape=rectangle,minimum height=4mm,minimum width=4mm,fill=white]
\tikzstyle{cnot}=[fill=white,shape=circle,inner sep=-1.4pt]
\tikzstyle{asym hadamard}=[fill=white,draw,shape=NEbox,inner sep=0.6mm,font=\footnotesize,minimum height=4mm]
\tikzstyle{asym hadamard conj}=[fill=white,draw,shape=NWbox,inner sep=0.6mm,font=\footnotesize,minimum height=4mm]
\tikzstyle{asym hadamard dag}=[fill=white,draw,shape=SEbox,inner sep=0.6mm,font=\footnotesize,minimum height=4mm]
\tikzstyle{hadamard}=[fill=white,draw,inner sep=0.6mm,font=\footnotesize,minimum height=4mm,minimum width=4mm]
\tikzstyle{small hadamard}=[fill=white,draw,inner sep=0.6mm,minimum height=1.5mm,minimum width=1.5mm]
\tikzstyle{small hadamard rotate}=[small hadamard,rotate=45]
\tikzstyle{dhadamard}=[hadamard,doubled]
\tikzstyle{small dhadamard}=[small hadamard,doubled]
\tikzstyle{small dhadamard rotate}=[small hadamard rotate,doubled]
\tikzstyle{antipode}=[white dot,inner sep=0.3mm,font=\footnotesize]
\tikzstyle{scalar}=[diamond,draw,inner sep=0.5pt,font=\small]
\tikzstyle{dscalar}=[diamond,doubled, draw,inner sep=0.5pt,font=\small]
\tikzstyle{small box}=[rectangle,inline text,fill=white,draw,minimum height=5mm,yshift=-0.5mm,minimum width=5mm,font=\small]
\tikzstyle{small gray box}=[small box,fill=gray!30]
\tikzstyle{medium box}=[rectangle,inline text,fill=white,draw,minimum height=5mm,yshift=-0.5mm,minimum width=10mm,font=\small]
\tikzstyle{square box}=[small box] 
\tikzstyle{medium gray box}=[small box,fill=gray!30]
\tikzstyle{semilarge box}=[rectangle,inline text,fill=white,draw,minimum height=5mm,yshift=-0.5mm,minimum width=12.5mm,font=\small]
\tikzstyle{large box}=[rectangle,inline text,fill=white,draw,minimum height=5mm,yshift=-0.5mm,minimum width=15mm,font=\small]
\tikzstyle{large gray box}=[small box,fill=gray!30]
\tikzstyle{Bayes box}=[rectangle,fill=black,draw, minimum height=3mm, minimum width=3mm]
\tikzstyle{gray square point}=[small box,fill=gray!50]
\tikzstyle{dphase box white}=[dhadamard]
\tikzstyle{dphase box gray}=[dhadamard,fill=gray!50!white]
\tikzstyle{phase box white}=[hadamard]
\tikzstyle{phase box gray}=[hadamard,fill=gray!50!white]
\tikzstyle{point}=[regular polygon,regular polygon sides=3,draw,scale=0.75,inner sep=-0.5pt,minimum width=9mm,fill=white,regular polygon rotate=180]
\tikzstyle{infpoint}=[regular polygon,regular polygon sides=3,draw,scale=0.75,inner sep=-0.5pt,minimum width=9mm,fill=white,regular polygon rotate=90]
\tikzstyle{point nosep}=[regular polygon,regular polygon sides=3,draw,scale=0.75,inner sep=-2pt,minimum width=9mm,fill=white,regular polygon rotate=180]
\tikzstyle{infcopoint}=[regular polygon,regular polygon sides=3,draw,scale=0.75,inner sep=-0.5pt,minimum width=9mm,fill=white,regular polygon rotate=270]
\tikzstyle{copoint}=[regular polygon,regular polygon sides=3,draw,scale=0.75,inner sep=-0.5pt,minimum width=9mm,fill=white]
\tikzstyle{dpoint}=[point,doubled]
\tikzstyle{dcopoint}=[copoint,doubled]
\tikzstyle{pointgrow}=[shape=cornerpoint,kpoint common,scale=0.75,inner sep=3pt]
\tikzstyle{pointgrow dag}=[shape=cornercopoint,kpoint common,scale=0.75,inner sep=3pt]
\tikzstyle{wide copoint}=[fill=white,draw,shape=isosceles triangle,shape border rotate=90,isosceles triangle stretches=true,inner sep=0pt,minimum width=1.5cm,minimum height=6.12mm]
\tikzstyle{wide point}=[fill=white,draw,shape=isosceles triangle,shape border rotate=-90,isosceles triangle stretches=true,inner sep=0pt,minimum width=1.5cm,minimum height=6.12mm,yshift=-0.0mm]
\tikzstyle{wide point plus}=[fill=white,draw,shape=isosceles triangle,shape border rotate=-90,isosceles triangle stretches=true,inner sep=0pt,minimum width=1.74cm,minimum height=7mm,yshift=-0.0mm]
\tikzstyle{wide dpoint}=[fill=white,doubled,draw,shape=isosceles triangle,shape border rotate=-90,isosceles triangle stretches=true,inner sep=0pt,minimum width=1.5cm,minimum height=6.12mm,yshift=-0.0mm]
\tikzstyle{tinypoint}=[regular polygon,regular polygon sides=3,draw,scale=0.55,inner sep=-0.15pt,minimum width=6mm,fill=white,regular polygon rotate=180]
\tikzstyle{white point}=[point]
\tikzstyle{white dpoint}=[dpoint]
\tikzstyle{green point}=[white point] 
\tikzstyle{white copoint}=[copoint]
\tikzstyle{gray point}=[point,fill=gray!40!white]
\tikzstyle{gray dpoint}=[gray point,doubled]
\tikzstyle{red point}=[gray point] 
\tikzstyle{gray copoint}=[copoint,fill=gray!40!white]
\tikzstyle{gray dcopoint}=[gray copoint,doubled]
\tikzstyle{white point guide}=[regular polygon,regular polygon sides=3,font=\scriptsize,draw,scale=0.65,inner sep=-0.5pt,minimum width=9mm,fill=white,regular polygon rotate=180]
\tikzstyle{black point}=[point,fill=black,font=\color{white}]
\tikzstyle{black copoint}=[copoint,fill=black,font=\color{white}]
\tikzstyle{tiny gray point}=[tinypoint,fill=gray!40!white]
\tikzstyle{diredge}=[->]
\tikzstyle{ddiredge}=[<->]
\tikzstyle{rdiredge}=[<-]
\tikzstyle{thickdiredge}=[->, very thick]
\tikzstyle{pointer edge}=[->,very thick,gray]
\tikzstyle{pointer edge part}=[very thick,gray]
\tikzstyle{dashed edge}=[dashed]
\tikzstyle{thick dashed edge}=[very thick,dashed]
\tikzstyle{thick gray dashed edge}=[thick dashed edge,gray!40]
\tikzstyle{thick map edge}=[very thick,|->]
\newcommand{\boxshape}[3]{%
\pgfdeclareshape{#1}{
\inheritsavedanchors[from=rectangle] 
\inheritanchorborder[from=rectangle]
\inheritanchor[from=rectangle]{center}
\inheritanchor[from=rectangle]{north}
\inheritanchor[from=rectangle]{south}
\inheritanchor[from=rectangle]{west}
\inheritanchor[from=rectangle]{east}
\backgroundpath{
\southwest \pgf@xa=\pgf@x \pgf@ya=\pgf@y
\northeast \pgf@xb=\pgf@x \pgf@yb=\pgf@y

\@tempdima=#2
\@tempdimb=#3

\pgfpathmoveto{\pgfpoint{\pgf@xa - 5pt + \@tempdima}{\pgf@ya}}
\pgfpathlineto{\pgfpoint{\pgf@xa - 5pt - \@tempdima}{\pgf@yb}}
\pgfpathlineto{\pgfpoint{\pgf@xb + 5pt + \@tempdimb}{\pgf@yb}}
\pgfpathlineto{\pgfpoint{\pgf@xb + 5pt - \@tempdimb}{\pgf@ya}}
\pgfpathlineto{\pgfpoint{\pgf@xa - 5pt + \@tempdima}{\pgf@ya}}
\pgfpathclose
}
}}
\tikzstyle{cloud}=[shape=cloud,draw,minimum width=1.5cm,minimum height=1.5cm]
\tikzstyle{map}=[draw,shape=NEbox,inner sep=1pt,minimum height=4mm,fill=white]
\tikzstyle{dashedmap}=[draw,dashed,shape=NEbox,inner sep=2pt,minimum height=6mm,fill=white]
\tikzstyle{mapdag}=[draw,shape=SEbox,inner sep=1pt,minimum height=4mm,fill=white]
\tikzstyle{mapadj}=[draw,shape=SEbox,inner sep=2pt,minimum height=6mm,fill=white]
\tikzstyle{maptrans}=[draw,shape=SWbox,inner sep=2pt,minimum height=6mm,fill=white]
\tikzstyle{mapconj}=[draw,shape=NWbox,inner sep=2pt,minimum height=6mm,fill=white]
\tikzstyle{medium map}=[draw,shape=NEbox,inner sep=2pt,minimum height=6mm,fill=white,minimum width=7mm]
\tikzstyle{medium map dag}=[draw,shape=SEbox,inner sep=2pt,minimum height=6mm,fill=white,minimum width=7mm]
\tikzstyle{medium map adj}=[draw,shape=SEbox,inner sep=2pt,minimum height=6mm,fill=white,minimum width=7mm]
\tikzstyle{medium map trans}=[draw,shape=SWbox,inner sep=2pt,minimum height=6mm,fill=white,minimum width=7mm]
\tikzstyle{medium map conj}=[draw,shape=NWbox,inner sep=2pt,minimum height=6mm,fill=white,minimum width=7mm]
\tikzstyle{semilarge map}=[draw,shape=NEbox,inner sep=2pt,minimum height=6mm,fill=white,minimum width=9.5mm]
\tikzstyle{semilarge map trans}=[draw,shape=SWbox,inner sep=2pt,minimum height=6mm,fill=white,minimum width=9.5mm]
\tikzstyle{semilarge map adj}=[draw,shape=SEbox,inner sep=2pt,minimum height=6mm,fill=white,minimum width=9.5mm]
\tikzstyle{semilarge map dag}=[draw,shape=SEbox,inner sep=2pt,minimum height=6mm,fill=white,minimum width=9.5mm]
\tikzstyle{semilarge map conj}=[draw,shape=NWbox,inner sep=2pt,minimum height=6mm,fill=white,minimum width=9.5mm]
\tikzstyle{large map}=[draw,shape=NEbox,inner sep=2pt,minimum height=6mm,fill=white,minimum width=12mm]
\tikzstyle{large map conj}=[draw,shape=NWbox,inner sep=2pt,minimum height=6mm,fill=white,minimum width=12mm]
\tikzstyle{very large map}=[draw,shape=NEbox,inner sep=2pt,minimum height=6mm,fill=white,minimum width=17mm]
\tikzstyle{medium dmap}=[draw,doubled,shape=NEbox,inner sep=2pt,minimum height=6mm,fill=white,minimum width=7mm]
\tikzstyle{medium dmap dag}=[draw,doubled,shape=SEbox,inner sep=2pt,minimum height=6mm,fill=white,minimum width=7mm]
\tikzstyle{medium dmap adj}=[draw,doubled,shape=SEbox,inner sep=2pt,minimum height=6mm,fill=white,minimum width=7mm]
\tikzstyle{medium dmap trans}=[draw,doubled,shape=SWbox,inner sep=2pt,minimum height=6mm,fill=white,minimum width=7mm]
\tikzstyle{medium dmap conj}=[draw,doubled,shape=NWbox,inner sep=2pt,minimum height=6mm,fill=white,minimum width=7mm]
\tikzstyle{semilarge dmap}=[draw,doubled,shape=NEbox,inner sep=2pt,minimum height=6mm,fill=white,minimum width=9.5mm]
\tikzstyle{semilarge dmap trans}=[draw,doubled,shape=SWbox,inner sep=2pt,minimum height=6mm,fill=white,minimum width=9.5mm]
\tikzstyle{semilarge dmap adj}=[draw,doubled,shape=SEbox,inner sep=2pt,minimum height=6mm,fill=white,minimum width=9.5mm]
\tikzstyle{semilarge dmap dag}=[draw,doubled,shape=SEbox,inner sep=2pt,minimum height=6mm,fill=white,minimum width=9.5mm]
\tikzstyle{semilarge dmap conj}=[draw,doubled,shape=NWbox,inner sep=2pt,minimum height=6mm,fill=white,minimum width=9.5mm]
\tikzstyle{large dmap}=[draw,doubled,shape=NEbox,inner sep=2pt,minimum height=6mm,fill=white,minimum width=12mm]
\tikzstyle{large dmap conj}=[draw,doubled,shape=NWbox,inner sep=2pt,minimum height=6mm,fill=white,minimum width=12mm]
\tikzstyle{large dmap trans}=[draw,doubled,shape=SWbox,inner sep=2pt,minimum height=6mm,fill=white,minimum width=12mm]
\tikzstyle{large dmap adj}=[draw,doubled,shape=SEbox,inner sep=2pt,minimum height=6mm,fill=white,minimum width=12mm]
\tikzstyle{large dmap dag}=[draw,doubled,shape=SEbox,inner sep=2pt,minimum height=6mm,fill=white,minimum width=12mm]
\tikzstyle{very large dmap}=[draw,doubled,shape=NEbox,inner sep=2pt,minimum height=6mm,fill=white,minimum width=19.5mm]
\tikzstyle{muxbox}=[draw,shape=rectangle,minimum height=3mm,minimum width=3mm,fill=white]
\tikzstyle{dmuxbox}=[muxbox,doubled]
\tikzstyle{box}=[draw,shape=rectangle,inner sep=2pt,minimum height=6mm,minimum width=6mm,fill=white]
\tikzstyle{dbox}=[draw,doubled,shape=rectangle,inner sep=2pt,minimum height=6mm,minimum width=6mm,fill=white]
\tikzstyle{dmap}=[draw,doubled,shape=NEbox,inner sep=2pt,minimum height=6mm,fill=white]
\tikzstyle{dmapdag}=[draw,doubled,shape=SEbox,inner sep=2pt,minimum height=6mm,fill=white]
\tikzstyle{dmapadj}=[draw,doubled,shape=SEbox,inner sep=2pt,minimum height=6mm,fill=white]
\tikzstyle{dmaptrans}=[draw,doubled,shape=SWbox,inner sep=2pt,minimum height=6mm,fill=white]
\tikzstyle{dmapconj}=[draw,doubled,shape=NWbox,inner sep=2pt,minimum height=6mm,fill=white]
\tikzstyle{ddmap}=[draw,doubled,dashed,shape=NEbox,inner sep=2pt,minimum height=6mm,fill=white]
\tikzstyle{ddmapdag}=[draw,doubled,dashed,shape=SEbox,inner sep=2pt,minimum height=6mm,fill=white]
\tikzstyle{ddmapadj}=[draw,doubled,dashed,shape=SEbox,inner sep=2pt,minimum height=6mm,fill=white]
\tikzstyle{ddmaptrans}=[draw,doubled,dashed,shape=SWbox,inner sep=2pt,minimum height=6mm,fill=white]
\tikzstyle{ddmapconj}=[draw,doubled,dashed,shape=NWbox,inner sep=2pt,minimum height=6mm,fill=white]
\tikzstyle{smap}=[draw,shape=sNEbox,fill=white]
\tikzstyle{smapdag}=[draw,shape=sSEbox,fill=white]
\tikzstyle{smapadj}=[draw,shape=sSEbox,fill=white]
\tikzstyle{smaptrans}=[draw,shape=sSWbox,fill=white]
\tikzstyle{smapconj}=[draw,shape=sNWbox,fill=white]
\tikzstyle{dsmap}=[draw,dashed,shape=sNEbox,fill=white]
\tikzstyle{dsmapdag}=[draw,dashed,shape=sSEbox,fill=white]
\tikzstyle{dsmaptrans}=[draw,dashed,shape=sSWbox,fill=white]
\tikzstyle{dsmapconj}=[draw,dashed,shape=sNWbox,fill=white]
\tikzstyle{mmap}=[draw,shape=mNEbox]
\tikzstyle{mmapdag}=[draw,shape=mSEbox]
\tikzstyle{mmaptrans}=[draw,shape=mSWbox]
\tikzstyle{mmapconj}=[draw,shape=mNWbox]
\tikzstyle{mmapgray}=[draw,fill=gray!40!white,shape=mNEbox]
\tikzstyle{smapgray}=[draw,fill=gray!40!white,shape=sNEbox]
\pgfmathsetmacro{\pgf@shorten@left}{\pgfkeysvalueof{/tikz/shorten left}}
\pgfmathsetmacro{\pgf@shorten@right}{\pgfkeysvalueof{/tikz/shorten right}}
\pgfmathsetmacro{\pgf@shorten@left}{\pgfkeysvalueof{/tikz/shorten left}}
\pgfmathsetmacro{\pgf@shorten@right}{\pgfkeysvalueof{/tikz/shorten right}}
\tikzstyle{kpoint common}=[draw,fill=white,inner sep=1pt,minimum height=4mm]
\tikzstyle{kpoint sc}=[shape=cornerpoint,kpoint common]
\tikzstyle{kpoint adjoint sc}=[shape=cornercopoint,kpoint common]
\tikzstyle{kpoint}=[shape=cornerpoint,shorten left=5pt,kpoint common]
\tikzstyle{kpoint adjoint}=[shape=cornercopoint,shorten left=5pt,kpoint common]
\tikzstyle{kpoint conjugate}=[shape=cornerpoint,shorten right=5pt,kpoint common]
\tikzstyle{kpoint transpose}=[shape=cornercopoint,shorten right=5pt,kpoint common]
\tikzstyle{kpoint symm}=[shape=cornerpoint,shorten left=5pt,shorten right=5pt,kpoint common]
\tikzstyle{wide kpoint sc}=[shape=cornerpoint,kpoint common, minimum width=1 cm]
\tikzstyle{wide kpointdag sc}=[shape=cornercopoint,kpoint common, minimum width=1 cm]
\tikzstyle{black kpoint}=[shape=cornerpoint,shorten left=5pt,kpoint common,fill=black,font=\color{white}]
\tikzstyle{black kpoint sm}=[shape=cornerpoint,shorten left=5pt,kpoint common,fill=black,font=\color{white},scale=0.75]
\tikzstyle{black kpoint adjoint}=[shape=cornercopoint,shorten left=5pt,kpoint common,fill=black,font=\color{white}]
\tikzstyle{black kpointadj}=[shape=cornercopoint,shorten left=5pt,kpoint common,fill=black,font=\color{white}]
\tikzstyle{black kpointadj sm}=[shape=cornercopoint,shorten left=5pt,kpoint common,fill=black,font=\color{white},scale=0.75]
\tikzstyle{black dkpoint}=[shape=cornerpoint,shorten left=5pt,kpoint common,fill=black, doubled,font=\color{white}]
\tikzstyle{black dkpoint adjoint}=[shape=cornercopoint,shorten left=5pt,kpoint common,fill=black, doubled,font=\color{white}]
\tikzstyle{black dkpointadj}=[shape=cornercopoint,shorten left=5pt,kpoint common,fill=black, doubled,font=\color{white}]
\tikzstyle{black dkpoint sm}=[shape=cornerpoint,shorten left=5pt,kpoint common,fill=black, doubled,font=\color{white},scale=0.75]
\tikzstyle{black dkpointadj sm}=[shape=cornercopoint,shorten left=5pt,kpoint common,fill=black, doubled,font=\color{white},scale=0.75]
\tikzstyle{kpointdag}=[kpoint adjoint]
\tikzstyle{kpointadj}=[kpoint adjoint]
\tikzstyle{kpointconj}=[kpoint conjugate]
\tikzstyle{kpointtrans}=[kpoint transpose]
\tikzstyle{big kpoint}=[kpoint, minimum width=1.2 cm, minimum height=8mm, inner sep=4pt, text depth=3mm]
\tikzstyle{wide kpoint}=[kpoint, minimum width=1 cm, inner sep=2pt]
\tikzstyle{wide kpointdag}=[kpointdag, minimum width=1 cm, inner sep=2pt]
\tikzstyle{wide kpointconj}=[kpointconj, minimum width=1 cm, inner sep=2pt]
\tikzstyle{wide kpointtrans}=[kpointtrans, minimum width=1 cm, inner sep=2pt]
\tikzstyle{wider kpoint}=[kpoint, minimum width=1.25 cm, inner sep=2pt]
\tikzstyle{wider kpointdag}=[kpointdag, minimum width=1.25 cm, inner sep=2pt]
\tikzstyle{wider kpointconj}=[kpointconj, minimum width=1.25 cm, inner sep=2pt]
\tikzstyle{wider kpointtrans}=[kpointtrans, minimum width=1.25 cm, inner sep=2pt]
\tikzstyle{gray kpoint}=[kpoint,fill=gray!50!white]
\tikzstyle{gray kpointdag}=[kpointdag,fill=gray!50!white]
\tikzstyle{gray kpointadj}=[kpointadj,fill=gray!50!white]
\tikzstyle{gray kpointconj}=[kpointconj,fill=gray!50!white]
\tikzstyle{gray kpointtrans}=[kpointtrans,fill=gray!50!white]
\tikzstyle{gray dkpoint}=[kpoint,fill=gray!50!white,doubled]
\tikzstyle{gray dkpointdag}=[kpointdag,fill=gray!50!white,doubled]
\tikzstyle{gray dkpointadj}=[kpointadj,fill=gray!50!white,doubled]
\tikzstyle{gray dkpointconj}=[kpointconj,fill=gray!50!white,doubled]
\tikzstyle{gray dkpointtrans}=[kpointtrans,fill=gray!50!white,doubled]
\tikzstyle{white label}=[draw,fill=white,rectangle,inner sep=0.7 mm]
\tikzstyle{gray label}=[draw,fill=gray!50!white,rectangle,inner sep=0.7 mm]
\tikzstyle{black label}=[draw,fill=black,rectangle,inner sep=0.7 mm]
\tikzstyle{dkpoint}=[kpoint,doubled]
\tikzstyle{wide dkpoint}=[wide kpoint,doubled]
\tikzstyle{dkpointdag}=[kpoint adjoint,doubled]
\tikzstyle{wide dkpointdag}=[wide kpointdag,doubled]
\tikzstyle{dkcopoint}=[kpoint adjoint,doubled]
\tikzstyle{dkpointadj}=[kpoint adjoint,doubled]
\tikzstyle{dkpointconj}=[kpoint conjugate,doubled]
\tikzstyle{dkpointtrans}=[kpoint transpose,doubled]
\tikzstyle{kscalar}=[kpoint common, shape=EBox, inner xsep=-1pt, inner ysep=3pt,font=\small]
\tikzstyle{kscalarconj}=[kpoint common, shape=WBox, inner xsep=-1pt, inner ysep=3pt,font=\small]
\tikzstyle{spekpoint}=[kpoint sc,minimum height=5mm,inner sep=3pt]
\tikzstyle{spekcopoint}=[kpoint adjoint sc,minimum height=5mm,inner sep=3pt]
\tikzstyle{dspekpoint}=[spekpoint,doubled]
\tikzstyle{dspekcopoint}=[spekcopoint,doubled]
 \tikzstyle{upground}=[circuit ee IEC,thick,ground,rotate=90,scale=2.5]
 \tikzstyle{downground}=[circuit ee IEC,thick,ground,rotate=-90,scale=2.5]
 \tikzstyle{infupground}=[circuit ee IEC,thick,ground,rotate=0,scale=2.5]
 \tikzstyle{infdownground}=[circuit ee IEC,thick,ground,rotate=180,scale=2.5]
 \tikzstyle{bigground}=[regular polygon,regular polygon sides=3,draw=gray,scale=0.50,inner sep=-0.5pt,minimum width=10mm,fill=gray]
\tikzstyle{arrs}=[-latex,font=\small,auto]
\tikzstyle{arrow plain}=[arrs]
\tikzstyle{arrow dashed}=[dashed,arrs]
\tikzstyle{arrow bold}=[very thick,arrs]
\tikzstyle{arrow hide}=[draw=white!0,-]
\tikzstyle{arrow reverse}=[latex-]
\tikzstyle{cdnode}=[]
\tikzstyle{tilde}=[draw=blue]
\tikzstyle{tildelabel}=[text=blue]
\newkeycommand{\pointmap}[style=point][1]{\,\tikz{\node[style=\commandkey{style}] (x) at (0,-0.3) {$#1$};\node [style=none] (3) at (0, 1.0) {};\node [style=none] (3) at (0, -1.0) {};\draw (x)--(0,0.7);}\,}
\newkeycommand{\typedkpoint}[style=kpoint,edgestyle=][2]{%
\,\begin{tikzpicture}
  \begin{pgfonlayer}{nodelayer}
    \node [style=none] (0) at (0, 1) {};
    \node [style=\commandkey{style}] (1) at (0, -0.75) {$#2$};
    \node [style=right label] (2) at (0.25, 0.5) {$#1$};
  \end{pgfonlayer}
  \begin{pgfonlayer}{edgelayer}
    \draw [\commandkey{edgestyle}] (1) to (0.center);
  \end{pgfonlayer}
\end{tikzpicture}\,}
\newkeycommand{\typedkpointdag}[style=kpoint adjoint,edgestyle=][2]{%
\,\begin{tikzpicture}
  \begin{pgfonlayer}{nodelayer}
    \node [style=none] (0) at (0, -1) {};
    \node [style=\commandkey{style}] (1) at (0, 0.75) {$#2$};
    \node [style=right label] (2) at (0.25, -0.5) {$#1$};
  \end{pgfonlayer}
  \begin{pgfonlayer}{edgelayer}
    \draw [\commandkey{edgestyle}] (0.center) to (1);
  \end{pgfonlayer}
\end{tikzpicture}\,}
\newcommand{\bistateadj}[1]{\,\begin{tikzpicture}[yshift=-3mm]
  \begin{pgfonlayer}{nodelayer}
    \node [style=kpointadj, minimum width=1 cm, inner sep=2pt] (0) at (0, 0.5) {$\psi$};
    \node [style=none] (1) at (-0.75, 0.25) {};
    \node [style=none] (2) at (0.75, 0.25) {};
    \node [style=none] (3) at (-0.75, -0.5) {};
    \node [style=none] (4) at (0.75, -0.5) {};
  \end{pgfonlayer}
  \begin{pgfonlayer}{edgelayer}
    \draw (1.center) to (3.center);
    \draw (2.center) to (4.center);
  \end{pgfonlayer}
\end{tikzpicture}\,}
\newkeycommand{\pointketbra}[style1=copoint,style2=point][2]{\,%
\begin{tikzpicture}
  \begin{pgfonlayer}{nodelayer}
    \node [style=none] (0) at (0, -1.5) {};
    \node [style=\commandkey{style1}] (1) at (0, -0.75) {$#1$};
    \node [style=\commandkey{style2}] (2) at (0, 0.75) {$#2$};
    \node [style=none] (3) at (0, 1.5) {};
  \end{pgfonlayer}
  \begin{pgfonlayer}{edgelayer}
    \draw (0.center) to (1);
    \draw (2) to (3.center);
  \end{pgfonlayer}
\end{tikzpicture}\,}
\newkeycommand{\twocopointketbra}[style1=copoint,style2=copoint,style3=point][3]{\,%
\begin{tikzpicture}
  \begin{pgfonlayer}{nodelayer}
    \node [style=none] (0) at (-0.75, -1.5) {};
    \node [style=none] (1) at (0.75, -1.5) {};
    \node [style=\commandkey{style1}] (2) at (-0.75, -0.75) {$#1$};
    \node [style=\commandkey{style2}] (3) at (0.75, -0.75) {$#2$};
    \node [style=\commandkey{style3}] (4) at (0, 0.75) {$#3$};
    \node [style=none] (5) at (0, 1.5) {};
  \end{pgfonlayer}
  \begin{pgfonlayer}{edgelayer}
    \draw (0.center) to (2);
    \draw (1.center) to (3);
    \draw (4) to (5.center);
  \end{pgfonlayer}
\end{tikzpicture}\,}
\newkeycommand{\twopointketbra}[style1=copoint,style2=point,style3=point][3]{\,%
\begin{tikzpicture}
  \begin{pgfonlayer}{nodelayer}
    \node [style=none] (0) at (0, -1.5) {};
    \node [style=none] (1) at (-0.75, 1.5) {};
    \node [style=\commandkey{style1}] (2) at (0, -0.75) {$#1$};
    \node [style=\commandkey{style2}] (3) at (-0.75, 0.75) {$#2$};
    \node [style=\commandkey{style3}] (4) at (0.75, 0.75) {$#3$};
    \node [style=none] (5) at (0.75, 1.5) {};
  \end{pgfonlayer}
  \begin{pgfonlayer}{edgelayer}
    \draw (0.center) to (2);
    \draw (1.center) to (3);
    \draw (4) to (5.center);
  \end{pgfonlayer}
\end{tikzpicture}\,}
\newkeycommand{\pointbraket}[style1=point,style2=copoint][2]{\,%
\begin{tikzpicture}
  \begin{pgfonlayer}{nodelayer}
    \node [style=\commandkey{style1}] (0) at (0, -0.5) {$#1$};
    \node [style=\commandkey{style2}] (1) at (0, 0.5) {$#2$};
  \end{pgfonlayer}
  \begin{pgfonlayer}{edgelayer}
    \draw (0) to (1);
  \end{pgfonlayer}
\end{tikzpicture}\,}
\newkeycommand{\kpointbraket}[style1=kpoint,style2=kpoint adjoint][2]{\,%
\begin{tikzpicture}
  \begin{pgfonlayer}{nodelayer}
    \node [style=\commandkey{style1}] (0) at (0, -0.75) {$#1$};
    \node [style=\commandkey{style2}] (1) at (0, 0.75) {$#2$};
  \end{pgfonlayer}
  \begin{pgfonlayer}{edgelayer}
    \draw (0) to (1);
  \end{pgfonlayer}
\end{tikzpicture}\,}
\newkeycommand{\kpointmap}[style1=kpoint,style2=map][2]{\,%
\begin{tikzpicture}
  \begin{pgfonlayer}{nodelayer}
    \node [style=\commandkey{style1}] (0) at (0, -1.1) {$#1$};
    \node [style=\commandkey{style2}] (1) at (0, 0.2) {$#2$};
    \node [style=none] (2) at (0, 1.5) {};
  \end{pgfonlayer}
  \begin{pgfonlayer}{edgelayer}
    \draw (0) to (1);
    \draw (1) to (2.center);
  \end{pgfonlayer}
\end{tikzpicture}%
\,}
\newkeycommand{\sandwichtwo}[style1=point,style2=map,style3=map,style4=copoint][4]{\,%
\begin{tikzpicture}
  \begin{pgfonlayer}{nodelayer}
    \node [style=\commandkey{style2}] (0) at (0, -0.75) {$#2$};
    \node [style=\commandkey{style3}] (1) at (0, 0.75) {$#3$};
    \node [style=\commandkey{style1}] (2) at (0, -2) {$#1$};
    \node [style=\commandkey{style4}] (3) at (0, 2) {$#4$};
  \end{pgfonlayer}
  \begin{pgfonlayer}{edgelayer}
    \draw (2) to (0);
    \draw (0) to (1);
    \draw (1) to (3);
  \end{pgfonlayer}
\end{tikzpicture}\,}
\newkeycommand{\longbraket}[style1=point,style2=copoint][2]{\,%
\begin{tikzpicture}
  \begin{pgfonlayer}{nodelayer}
    \node [style=\commandkey{style1}] (0) at (0, -2) {$#1$};
    \node [style=\commandkey{style2}] (1) at (0, 2) {$#2$};
  \end{pgfonlayer}
  \begin{pgfonlayer}{edgelayer}
    \draw (0) to (1);
  \end{pgfonlayer}
\end{tikzpicture}\,}
\newkeycommand{\onb}[style=point]{\ensuremath{\left\{\tikz{\node[style=\commandkey{style}] (x) at (0,-0.3) {$j$};\draw (x)--(0,0.7);}\right\}}\xspace}
\newkeycommand{\copointmap}[style=copoint][1]{\,\tikz{\node[style=\commandkey{style}] (x) at (0,0.3) {$#1$};\draw (x)--(0,-0.7);}\,}
\theoremstyle{plain}
\newtheorem*{main theorem}{Main Theorem}
\newtheorem{theorem}{Theorem}[section]
\newtheorem{corollary}[theorem]{Corollary}
\newtheorem{lemma}[theorem]{Lemma}
\newtheorem{example*}[theorem]{Example*}
\newtheorem{examples*}[theorem]{Examples*}
\newtheorem{remark*}[theorem]{Remark*}
\newtheorem*{search problem}{Search Problem}
\theoremstyle{definition}
\newtheorem{definition}[theorem]{Definition}
\newenvironment{examplep}[1]{
  
  \examplealt
}{\endexamplealt}
\newtheorem{example}[theorem]{Example}
\newcommand{\nocontentsline}[3]{}
\let\oldaddcontentsline\addcontentsline
\newcommand{\tocless}[2]{%
  \let\addcontentsline=\nocontentsline#1{#2}
  \let\addcontentsline\oldaddcontentsline}
\newcommand\FinQT{\mathbf{FinQT}}
\newcommand\R{\mathds{R}}
\newcommand{\realvectexamplep}{\hyperlink{real_vector_examples}{\color{darkgray}{{\mathbf{Vect}_\mathds{R}}}}}
\newcommand{\complexvectexamplep}{\hyperlink{complex_vector_examples}{\color{darkgray}{{\mathbf{Vect}_\mathds{C}}}}}
\newcommand{\fincomplexvectexamplep}{\hyperlink{finite_complex_vector_examples}{\color{darkgray}{{\mathbf{FinVect}_\mathds{C}}}}}
\newcommand{\finrealvectexamplep}{\hyperlink{finite_real_vector_examples}{\color{darkgray}{{\mathbf{FinVect}_\mathds{R}}}}}
\newcommand{\finrealquasisubstoch}{\hyperlink{finite_real_vector_quasisubstoch}{\color{darkgray}{{\mathbf{FinQuasiSubStoch}_\mathds{R}}}}}
\newcommand{\fincomplexquasisubstoch}{\hyperlink{finite_complex_vector_quasisubstoch}{\color{darkgray}{{\mathbf{FinQuasiSubStoch}_\mathds{C}}}}}
\newcommand{\finiterealsubstoch}{\hyperlink{finite_real_vector_substoch}{\color{darkgray}{{\mathbf{FinSubStoch}_\mathds{R}}}}}
\newcommand{\finitequantumtheory}{\hyperlink{finite_quantum_theory}{\color{darkgray}{{\mathbf{FinQT}}}}}
\newcommand\mathcolorbox[2]{%
  \begingroup
    \setbox\z@\hbox{\(\m@th#2\)}
    \colorbox{#1}{\raisebox{0pt}[\ht\z@][\dp\z@]{\copy\z@}}
  \endgroup
}
\definecolor{complex}{RGB}{242, 204, 217}
\tikzstyle{point}=[regular polygon, regular polygon sides=3, draw, scale=0.75, inner sep=-0.5pt, minimum width=9mm, fill=white, regular polygon rotate=180]
\tikzstyle{copoint}=[regular polygon, regular polygon sides=3, draw, scale=0.75, inner sep=-0.5pt, minimum width=9mm, fill=white]
\tikzstyle{copoint_complex}=[regular polygon, regular polygon sides=3, draw, scale=0.75, inner sep=-0.5pt, minimum width=9mm, fill=complex]
\tikzstyle{processes}=[fill=white, draw=black, shape=rectangle]
\tikzstyle{NodeSystemsToComplexVec}=[fill={rgb,255: red,204; green,204; blue,255}, draw=black, shape=rectangle]
\tikzstyle{NodesSystemsToComplex}=[fill={rgb,255: red,242; green,204; blue,217}, draw=black, shape=rectangle]
\tikzstyle{NodesSystemsToVec}=[fill={rgb,255: red,204; green,255; blue,204}, draw=black, shape=rectangle]
\tikzstyle{FVecMap}=[-, fill={rgb,255: red,204; green,255; blue,204}, draw={rgb,255: red,153; green,255; blue,153}]
\tikzstyle{FVecMapDarkOutside}=[-, fill={rgb,255: red,204; green,255; blue,204}, draw=black]
\tikzstyle{ComplexMap}=[-, fill={rgb,255: red,242; green,204; blue,217}, draw={rgb,255: red,230; green,153; blue,178}]
\tikzstyle{ComplexMapDark}=[-, fill={rgb,255: red,242; green,204; blue,217}]
\tikzstyle{FVecComplexDark}=[-, fill={rgb,255: red,204; green,204; blue,255}]
\tikzstyle{FVecComplex}=[-, fill={rgb,255: red,204; green,204; blue,255}, draw={rgb,255: red,173; green,173; blue,255}]
\tikzstyle{HardWire}=[-, line width=1, color=black]
\tikzstyle{white}=[-, fill=white]
\tikzstyle{FVecComplexWire}=[-, draw={rgb,255: red,173; green,173; blue,255}]
\date{\today}
\pgfplotsset{compat=1.18}
\begin{document}

\title{A structure theorem for complex-valued quasiprobability \\ representations of physical theories} 

\author{Rafael Wagner}
\email{rafael.wagner@uni-ulm.de}
\affiliation{\ulmshort}
\author{Roberto D. Baldijão}
\affiliation{\gdanskshort}
\affiliation{\perimetershort}
\author{Matthias Salzger}
\affiliation{\gdanskshort}
\author{Yìlè Y{\=\i}ng}
\affiliation{\perimetershort}
\affiliation{\waterlooshort}
\author{David Schmid}
\affiliation{\perimetershort}
\author{John H. Selby}
\affiliation{\gdanskshort}

\begin{abstract}
    Quasiprobability representations are well-established tools in quantum information science, with applications ranging from the classical simulability of quantum computation to quantum process tomography, quantum error correction, and quantum sensing. While traditional quasiprobability representations typically employ real-valued distributions, recent developments highlight the usefulness of \emph{complex-valued} ones---most notably, via the family of Kirkwood--Dirac quasiprobability distributions. Building on the framework of~\href{https://quantum-journal.org/papers/q-2024-03-14-1283/}{Schmid et al.~[Quantum 8, 1283 (2024)]}, we extend the analysis   to encompass complex-valued quasiprobability representations that need not preserve the identity channel. Additionally, we also extend previous results  to consider mappings towards  infinite-dimensional spaces. We show that, for each system, every such representation can be expressed as the composition of two maps that are completely characterized by their action on states and on the identity (equivalently, on effects) for that system. Our results apply to all complex-valued quasiprobability representations of any finite-dimensional, tomographically-local generalized probabilistic theory, with finite-dimensional quantum theory serving as a paradigmatic example. In the quantum case, the maps’ action on states and effects corresponds to choices of frames and dual frames for the representation. This work offers a unified mathematical framework for analyzing complex-valued quasiprobability representations in generalized probabilistic theories. 
\end{abstract}

\maketitle
\tableofcontents

\section{Introduction}

Despite its striking empirical success, the ontology implied by quantum theory remains fundamentally unclear. Foundational investigations have nevertheless yielded remarkable approaches for understanding and applying the theory---even if falling short of their original interpretational aims. This work delivers structural results at the intersection of two such approaches: quasiprobability representations and the generalized probabilistic theories framework. 

Generalized probabilistic theories (GPTs)~\cite{plavala2023general,muller2021probabilistic} originate from the quest to identify what distinguishes quantum theory within a broader landscape of possible physical theories~\cite{hardy2011reformulatingreconstructingquantumtheory,hardy2001quantumtheoryreasonableaxioms,barrett2007information}. This framework has elucidated not only aspects of quantum theory, but also properties of any theory that might supersede it. Moreover, it has also clarified that characteristics sometimes considered uniquely quantum are present in various other classes of theories---a short list includes no-cloning~\cite{spekkens2007evidence}, no-broadcasting~\cite{barnum2007nobroadcasting,erba2024measurement}, entanglement~\cite{aubrun2022entanglement,aubrun2021entangleability,aubrun2024monogamy,dariano2020classical}, 
and teleportation~\cite{spekkens2007evidence,hardy1999disentanglingnonlocalityteleportation,barnum2008teleportationgeneralprobabilistictheories}. From this point of view, quantum theory is merely one in a vast landscape of theories.

A seemingly unrelated topic is that of \emph{quasiprobability distributions}, which originate from Eugine Wigner's seminal attempt to represent quantum theory using probability distributions over phase space~\cite{wigner1932onthequantumcorrection}. In this approach, quantum-mechanical processes are mapped to processes on a phase-space (or more generally, any classical state space), Hilbert spaces $\mathcal{H}$ are replaced by sets of relevant phase-space parameters $\Lambda$, quantum states $\rho$ by functions $\mu_\rho$ over $\Lambda$, quantum channels $\mathcal{E}$ by maps $\Gamma_\mathcal{E}$ acting on such functions, and so on. Because the predictions of any quantum experiment can be fully reproduced in such a manner, we refer to this mapping as a \emph{representation} of quantum theory. 

If such a representation requires violations of standard axioms of probability~\cite{kolmogorov2018foundations}---such as allowing negative or non-real values in the distributions $\mu_\rho$---we call it a \emph{quasiprobability representation}, otherwise we call it a \emph{probability representation}. Formally, such representations can be characterized as \emph{frame representations}~\cite{ferrie2008frame,ferrie2009framed,ferrie2011quasiprobabilitiesreview,christensen2016introduction}. Various representations of quantum theory are known, and most commonly one considers \emph{real-valued} representations---i.e., those in which the quasiprobabilities are real-valued functions that may be negative~\cite{wigner1932onthequantumcorrection,husimi1940some,glauber1963coherent,sudarshan1963equivalence,margenau1961correlation,terletsky1937limiting}.

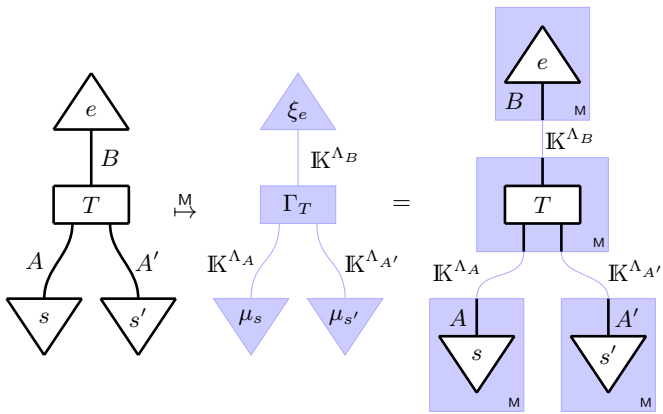
\begin{figure}[t]
    \centering
\begin{tikzpicture}
	\begin{pgfonlayer}{nodelayer}
		\node [style=none] (90) at (6.25, -2.25) {};
		\node [style=none] (91) at (8.75, -2.25) {};
		\node [style=none] (92) at (8.75, -5.25) {};
		\node [style=none] (93) at (6.25, -5.25) {};
		\node [style=none] (4) at (-3.75, 0.75) {};
		\node [style=none] (5) at (-1.75, 0.75) {};
		\node [style=none] (6) at (-3.75, -0.25) {};
		\node [style=none] (7) at (-1.75, -0.25) {};
		\node [style=none] (8) at (-2.75, 0.25) {$T$};
		\node [style=none] (9) at (-2.25, -0.25) {};
		\node [style=none] (10) at (-3.25, -0.25) {};
		\node [style=none] (11) at (-2.75, 0.75) {};
		\node [style=none] (12) at (-0.25, 0.25) {$\stackrel{\mathsf{M}}{\mapsto}$};
		\node [style=none] (16) at (1.75, 0.75) {};
		\node [style=none] (17) at (3.75, 0.75) {};
		\node [style=none] (18) at (1.75, -0.25) {};
		\node [style=none] (19) at (3.75, -0.25) {};
		\node [style=none] (20) at (2.75, 0.25) {$\Gamma_T$};
		\node [style=none] (21) at (3.25, -0.25) {};
		\node [style=none] (22) at (2.25, -0.25) {};
		\node [style=none] (23) at (2.75, 0.75) {};
		\node [style=none] (24) at (-3.75, 2.25) {};
		\node [style=none] (25) at (-1.75, 2.25) {};
		\node [style=none] (26) at (-2.75, 3.75) {};
		\node [style=none] (27) at (-2.75, 2.25) {};
		\node [style=none] (28) at (-2.75, 2.75) {$e$};
		\node [style=none] (32) at (-5, -2.25) {};
		\node [style=none] (33) at (-3, -2.25) {};
		\node [style=none] (34) at (-4, -3.75) {};
		\node [style=none] (35) at (-4, -2.25) {};
		\node [style=none] (36) at (-4, -2.75) {$s$};
		\node [style=none] (37) at (-2.5, -2.25) {};
		\node [style=none] (38) at (-0.5, -2.25) {};
		\node [style=none] (39) at (-1.5, -3.75) {};
		\node [style=none] (40) at (-1.5, -2.25) {};
		\node [style=none] (41) at (-1.5, -2.75) {$s'$};
		\node [style=none] (42) at (0.5, -2.25) {};
		\node [style=none] (43) at (2.5, -2.25) {};
		\node [style=none] (44) at (1.5, -3.75) {};
		\node [style=none] (45) at (1.5, -2.25) {};
		\node [style=none] (46) at (1.5, -2.75) {$\mu_s$};
		\node [style=none] (47) at (3, -2.25) {};
		\node [style=none] (48) at (5, -2.25) {};
		\node [style=none] (49) at (4, -3.75) {};
		\node [style=none] (50) at (4, -2.25) {};
		\node [style=none] (51) at (4, -2.75) {$\mu_{s'}$};
		\node [style=none] (52) at (1.75, 2.25) {};
		\node [style=none] (53) at (3.75, 2.25) {};
		\node [style=none] (54) at (2.75, 3.75) {};
		\node [style=none] (55) at (2.75, 2.25) {};
		\node [style=none] (56) at (2.75, 2.75) {$\xi_e$};
		\node [style=none] (57) at (-4.25, -1.25) {$A$};
		\node [style=none] (58) at (-1.25, -1.25) {$A'$};
		\node [style=none] (59) at (-2.25, 1.5) {$B$};
		\node [style=none] (60) at (3.75, 1.5) {$\mathds{K}^{\Lambda_B}$};
		\node [style=none] (61) at (1, -1.25) {$\mathds{K}^{\Lambda_A}$};
		\node [style=none] (62) at (4.75, -1.25) {$\mathds{K}^{\Lambda_{A'}}$};
		\node [style=none] (63) at (5.5, 0.25) {$=$};
		\node [style=none] (68) at (9.25, 0.25) {$T$};
		\node [style=none] (70) at (7.5, -2.25) {};
		\node [style=none] (71) at (9.25, 0.75) {};
		\node [style=none] (76) at (9.25, 4) {$e$};
		\node [style=none] (80) at (7.5, -3.25) {};
		\node [style=none] (81) at (7.5, -3.75) {$s$};
		\node [style=none] (85) at (11, -3.25) {};
		\node [style=none] (86) at (11, -3.75) {$s'$};
		\node [style=none] (87) at (7, -2.75) {$A$};
		\node [style=none] (88) at (11.5, -2.75) {$A'$};
		\node [style=none] (89) at (8.5, 3) {$B$};
		\node [style=none] (94) at (8.5, -5) {\tiny $\mathsf{M}$};
		\node [style=none] (95) at (8.75, -0.25) {};
		\node [style=none] (77) at (6.5, -3.25) {};
		\node [style=none] (79) at (7.5, -4.75) {};
		\node [style=none] (96) at (8.5, -3.25) {};
		\node [style=none] (99) at (9.75, -2.25) {};
		\node [style=none] (100) at (12.25, -2.25) {};
		\node [style=none] (101) at (12.25, -5.25) {};
		\node [style=none] (102) at (9.75, -5.25) {};
		\node [style=none] (103) at (12, -5) {\tiny $\mathsf{M}$};
		\node [style=none] (104) at (8, 5.5) {};
		\node [style=none] (105) at (10.5, 5.5) {};
		\node [style=none] (106) at (10.5, 2.5) {};
		\node [style=none] (107) at (8, 2.5) {};
		\node [style=none] (108) at (7.5, 1.5) {};
		\node [style=none] (109) at (11, 1.5) {};
		\node [style=none] (110) at (11, -1) {};
		\node [style=none] (111) at (7.5, -1) {};
		\node [style=none] (112) at (11, -2.25) {};
		\node [style=none] (113) at (9.75, -0.25) {};
		\node [style=none] (83) at (12, -3.25) {};
		\node [style=none] (84) at (11, -4.75) {};
		\node [style=none] (97) at (10, -3.25) {};
		\node [style=none] (73) at (10.25, 3.5) {};
		\node [style=none] (74) at (9.25, 5) {};
		\node [style=none] (98) at (8.25, 3.5) {};
		\node [style=none] (114) at (9.25, 2.5) {};
		\node [style=none] (115) at (9.25, 3.5) {};
		\node [style=none] (116) at (10.25, 2.75) {\tiny $\mathsf{M}$};
		\node [style=none] (64) at (8.25, 0.75) {};
		\node [style=none] (66) at (8.25, -0.25) {};
		\node [style=none] (67) at (10.25, -0.25) {};
		\node [style=none] (117) at (9.25, 1.5) {};
		\node [style=none] (118) at (10.25, 0.75) {};
		\node [style=none] (119) at (9.75, -1) {};
		\node [style=none] (120) at (8.75, -1) {};
		\node [style=none] (121) at (10, 2) {$\mathds{K}^{\Lambda_B}$};
		\node [style=none] (122) at (7, -1.5) {$\mathds{K}^{\Lambda_A}$};
		\node [style=none] (123) at (11.75, -1.5) {$\mathds{K}^{\Lambda_{A'}}$};
		\node [style=none] (124) at (10.75, -0.75) {\tiny $\mathsf{M}$};
	\end{pgfonlayer}
	\begin{pgfonlayer}{edgelayer}
		\draw [style=HardWire] (4.center) to (5.center);
		\draw [style=HardWire] (5.center) to (7.center);
		\draw [style=HardWire] (7.center) to (6.center);
		\draw [style=HardWire] (6.center) to (4.center);
		\draw [style=FVecComplex] (18.center)
			 to (19.center)
			 to (17.center)
			 to (16.center)
			 to cycle;
		\draw [style=HardWire] (24.center) to (25.center);
		\draw [style=HardWire] (25.center) to (26.center);
		\draw [style=HardWire] (26.center) to (24.center);
		\draw [style=HardWire] (11.center) to (27.center);
		\draw [style=HardWire] (32.center) to (33.center);
		\draw [style=HardWire] (33.center) to (34.center);
		\draw [style=HardWire] (34.center) to (32.center);
		\draw [style=HardWire, in=90, out=-90] (10.center) to (35.center);
		\draw [style=HardWire] (37.center) to (38.center);
		\draw [style=HardWire] (38.center) to (39.center);
		\draw [style=HardWire] (39.center) to (37.center);
		\draw [style=HardWire, in=90, out=-90] (9.center) to (40.center);
		\draw [style=FVecComplex] (43.center)
			 to (44.center)
			 to (42.center)
			 to cycle;
		\draw [style=FVecComplex] (47.center)
			 to (48.center)
			 to (49.center)
			 to cycle;
		\draw [style=FVecComplex] (54.center)
			 to (52.center)
			 to (53.center)
			 to cycle;
		\draw [style=FVecComplex] (55.center) to (23.center);
		\draw [style=FVecComplexWire, in=-90, out=90, looseness=1.25] (45.center) to (22.center);
		\draw [style=FVecComplexWire, in=90, out=-90, looseness=1.25] (21.center) to (50.center);
		\draw [style=FVecComplex] (92.center)
			 to (93.center)
			 to (90.center)
			 to (91.center)
			 to cycle;
		\draw [style=white] (79.center)
			 to (77.center)
			 to (96.center)
			 to cycle;
		\draw [style=HardWire] (77.center) to (96.center);
		\draw [style=HardWire] (96.center) to (79.center);
		\draw [style=HardWire] (79.center) to (77.center);
		\draw [style=HardWire] (80.center) to (70.center);
		\draw [style=FVecComplex] (101.center)
			 to (102.center)
			 to (99.center)
			 to (100.center)
			 to cycle;
		\draw [style=FVecComplex] (104.center)
			 to (105.center)
			 to (106.center)
			 to (107.center)
			 to cycle;
		\draw [style=FVecComplex] (108.center)
			 to (109.center)
			 to (110.center)
			 to (111.center)
			 to cycle;
		\draw [style=HardWire] (85.center) to (112.center);
		\draw [style=white] (97.center)
			 to (84.center)
			 to (83.center)
			 to cycle;
		\draw [style=HardWire] (97.center) to (83.center);
		\draw [style=HardWire] (83.center) to (84.center);
		\draw [style=HardWire] (84.center) to (97.center);
		\draw [style=white] (74.center)
			 to (98.center)
			 to (73.center)
			 to cycle;
		\draw [style=HardWire] (114.center) to (115.center);
		\draw [style=HardWire] (71.center) to (117.center);
		\draw [style=white] (66.center)
			 to (67.center)
			 to (118.center)
			 to (64.center)
			 to cycle;
		\draw [style=HardWire] (64.center) to (118.center);
		\draw [style=HardWire] (118.center) to (67.center);
		\draw [style=HardWire] (67.center) to (66.center);
		\draw [style=HardWire] (66.center) to (64.center);
		\draw [style=HardWire] (95.center) to (120.center);
		\draw [style=HardWire] (113.center) to (119.center);
		\draw [style=FVecComplexWire, in=-90, out=90, looseness=1.25] (70.center) to (120.center);
		\draw [style=FVecComplexWire, in=90, out=-90] (119.center) to (112.center);
		\draw [style=FVecComplexWire] (117.center) to (114.center);
		\draw [style=HardWire] (98.center) to (73.center);
		\draw [style=HardWire] (73.center) to (74.center);
		\draw [style=HardWire] (74.center) to (98.center);
	\end{pgfonlayer}
\end{tikzpicture}
    \caption{\textbf{Quasiprobability representation of a generalized probabilistic theory (GPT).} In a diagrammatic perspective, a physical theory defines a compositional structure for generic systems (denoted $A, B, \dots$) and processes between them (denoted $s,e,T, \dots$). A quasiprobability representation is a map $\mathsf{M}$ that assigns to each system $A$ a set of relevant variables $\Lambda_A$ taking values in $\mathds{K} \in \{\mathds{R},\mathds{C}\}$, and to each process, a quasistochastic element---such as a quasiprobability distribution $\mu_s$, a response function $\xi_e$, or a quasisubstochastic matrix $\Gamma_T$. The prefix ``quasi'' indicates that these objects may take values outside $[0,1]$, violating Kolmogovo's axioms of probability theory.}
    \label{fig:figure_introduction}
\end{figure}

Unsurprisingly, quantum theory is not the only theory for which one can meaningfully define a quasiprobability representation. In particular, this notion extends naturally to GPTs. Schmid~et~al.~\cite{schmid2024structure}, building on Refs.~\cite{vandeWetering2018quantum,gheorghiu2020ontological,fritz2020synthetic}, developed a framework for {\em real-valued finite-dimensional  identity preserving} quasiprobability representations that not only captures the structural features of GPTs but also emphasizes their compositional character (see Fig.~\ref{fig:figure_introduction}). Most GPTs admit of a diagrammatic representation with an associated compositional calculus, typically formalized as a \emph{symmetric monoidal category}~\cite{mac1998categories,awodey2010category,heunen2019categories}, in which physical systems and processes are represented by diagrams. A quasiprobability representation in this setting is then a map that takes diagrams from a GPT to corresponding diagrams in a target theory defined in terms of quasistochastic matrices.

Formally, this diagrammatic view captures the compositional structure of a GPT in the language of \emph{process theories}~\cite{selby2017process,coecke2017picturing,selby2025generalisedprocesstheories}---mathematical structures closely related to symmetric monoidal categories. The target theory of a quasiprobability representation is described using quasisubstochastic matrices: systems correspond to real vector spaces, processes to matrices, states to quasiprobability vectors, and so on. From this perspective, such representations correspond to \emph{semi-functors} (see Def.~\ref{def:semifunctors}). While GPTs have both sequential and parallel composition, in this work we focus only on \emph{sequential} composition, as this is the setting in which the semi-functorial maps we consider are well-defined. Hence we work with arbitrary semi-functors. Any notion of ``monoidal semi-functorial maps''\footnote{\label{ftnt:connectivity-preserving}More generally, a representation of a GPT is better described by a \emph{connectivity-preserving map}, a concept to be introduced in future work that extends the semi-functorial maps considered here to encompass the monoidal structure present in a process theory.} form a special case. 

In Schmid et al.~\cite{schmid2024structure}, some of us proved a structure theorem for  \emph{functors}---a subclass of semi-functorial maps that, additionally, preserve identity processes (see Def.~\ref{def:functors}). Their framework, together with the structure theorems proved there, can be viewed as a category-theoretic extension of the linear-algebraic program that classifies \emph{linear preservers}~\cite{li1992introlinearpreserver,li2001linearpreserverproblems,fosner2013linearpreserversquantuminfo}. In this program, which traces back at least to Frobenius's theorem characterizing all determinant preserving linear maps~\cite{frobenius1897uber}, one studies linear operators on a \emph{structured} matrix space  \(\mathcal{O} \subseteq \mathrm{Mat}_{\mathds{K}}(n,m)\)---examples include matrices of a specific rank~\cite{Beasley1988}, stochastic and doubly stochastic matrices~\cite{li2002linearmapspreservingpermutation}, bounded operators on complex infinite-dimensional Banach spaces~\cite{bresar_semrl1997linearpreservers}, and so on---and seeks all maps \(f:\mathcal{O}\to\mathcal{O}\) that preserve the set (i.e.\ \(f(\mathcal{O})=\mathcal{O}\), or in the weaker form \(f(\mathcal{O})\subseteq\mathcal{O}\)). 

Most results in this research program show that $f$ typically has the form \begin{equation}
X\mapsto f(X) = A_fXB_f \text{ or } X\mapsto f(X) = A_fX^{\rm T}B_f,
\end{equation}
for fixed $A_f,B_f \in \mathcal{O}$ and dependent on the map $f$. The structure theorem of Ref.~\cite{schmid2024structure} states that any real-valued  finite-dimensional functorial quasiprobability  representation $\mathsf{Q}$ which preserves a few structures relevant for the GPT is equivalently implemented by system-wise maps $A \mapsto \chi_A:A\to\mathsf{Q}(A)$ and acts by conjugation on transformations: \begin{equation}\mathsf{Q}(T)=\chi_B\circ T\circ\chi_A^{-1},
\end{equation}
for every transformation \(T:A\to B\) in the GPT. 

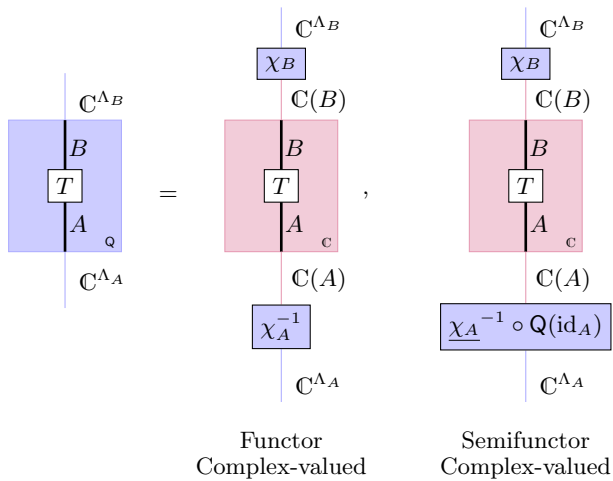
\begin{figure}[t]
    \centering
    \begin{tikzpicture}
	\begin{pgfonlayer}{nodelayer}
		\node [style=none] (2) at (-7.5, 2) {};
		\node [style=none] (3) at (-4.5, 2) {};
		\node [style=none] (4) at (-4.5, -1.5) {};
		\node [style=none] (5) at (-7.5, -1.5) {};
		\node [style=none] (6) at (-4.8, -1.25) {\tiny$\mathds{C}$};
		\node [style=processes] (7) at (-6, 0.25) {$T$};
		\node [style=none] (8) at (-6, 2) {};
		\node [style=none] (9) at (-6, -1.5) {};
		\node [style=none] (12) at (-5, -2.25) {$\mathds{C}(A)$};
		\node [style=none] (13) at (-5, 2.5) {$\mathds{C}(B)$};
		\node [style=NodeSystemsToComplexVec] (14) at (-6, -3.5) {$\chi_A^{-1}$};
		\node [style=NodeSystemsToComplexVec] (15) at (-6, 3.5) {$\chi_B$};
		\node [style=none] (16) at (-6, -5.5) {};
		\node [style=none] (17) at (-6, 5) {};
		\node [style=none] (18) at (-5.65, -0.75) {$A$};
		\node [style=none] (19) at (-5.65, 1.25) {$B$};
		\node [style=none] (20) at (-5, -5) {$\mathds{C}^{\Lambda_A}$};
		\node [style=none] (21) at (-5, 4.5) {$\mathds{C}^{\Lambda_B}$};
		\node [style=none] (22) at (-1, 2) {};
		\node [style=none] (23) at (2, 2) {};
		\node [style=none] (24) at (2, -1.5) {};
		\node [style=none] (25) at (-1, -1.5) {};
		\node [style=none] (26) at (1.7, -1.25) {\tiny$\mathds{C}$};
		\node [style=processes] (27) at (0.5, 0.25) {$T$};
		\node [style=none] (28) at (0.5, 2) {};
		\node [style=none] (29) at (0.5, -1.5) {};
		\node [style=none] (30) at (1.5, -2.25) {$\mathds{C}(A)$};
		\node [style=none] (31) at (1.5, 2.5) {$\mathds{C}(B)$};
		\node [style=NodeSystemsToComplexVec] (32) at (0.5, -3.5) {$\underline{\chi_A}^{-1}\circ \mathsf{Q}(\mathrm{id}_A)$};
		\node [style=NodeSystemsToComplexVec] (33) at (0.5, 3.5) {$\chi_B$};
		\node [style=none] (34) at (0.5, -5.5) {};
		\node [style=none] (35) at (0.5, 5) {};
		\node [style=none] (36) at (0.85, -0.75) {$A$};
		\node [style=none] (37) at (0.85, 1.25) {$B$};
		\node [style=none] (38) at (1.5, -5) {$\mathds{C}^{\Lambda_A}$};
		\node [style=none] (39) at (1.5, 4.5) {$\mathds{C}^{\Lambda_B}$};
		\node [style=none] (58) at (-6, -6.5) {Functor};
		\node [style=none] (59) at (-6, -7.25) {Complex-valued};
		\node [style=none] (61) at (0.5, -6.5) {Semifunctor};
		\node [style=none] (62) at (0.5, -7.25) {Complex-valued};
		\node [style=none] (67) at (-3.75, 0) {,};
		\node [style=none] (69) at (-13.25, 2) {};
		\node [style=none] (70) at (-10.25, 2) {};
		\node [style=none] (72) at (-13.25, -1.5) {};
		\node [style=none] (73) at (-10.55, -1.25) {\tiny$\mathsf{Q}$};
		\node [style=processes] (74) at (-11.75, 0.25) {$T$};
		\node [style=none] (75) at (-11.75, 2) {};
		\node [style=none] (76) at (-11.75, -1.5) {};
		\node [style=none] (77) at (-10.75, -2.25) {$\mathds{C}^{\Lambda_A}$};
		\node [style=none] (78) at (-10.75, 2.5) {$\mathds{C}^{\Lambda_B}$};
		\node [style=none] (79) at (-11.4, -0.75) {$A$};
		\node [style=none] (80) at (-11.4, 1.25) {$B$};
		\node [style=none] (81) at (-9, 0) {$=$};
		\node [style=none] (82) at (-10.25, -1.5) {};
		\node [style=none] (83) at (-11.75, 3.25) {};
		\node [style=none] (84) at (-11.75, -3) {};
		\node [style=none] (85) at (-10.25, 2) {};
	\end{pgfonlayer}
	\begin{pgfonlayer}{edgelayer}
		\draw [style=ComplexMap] (4.center)
			 to (3.center)
			 to (2.center)
			 to (5.center)
			 to cycle;
		\draw [style=HardWire] (9.center) to (8.center);
		\draw [style=ComplexMap] (8.center) to (15);
		\draw [style=ComplexMap] (14) to (9.center);
		\draw [style=FVecComplex] (16.center) to (14);
		\draw [style=FVecComplex] (15) to (17.center);
		\draw [style=ComplexMap] (24.center)
			 to (23.center)
			 to (22.center)
			 to (25.center)
			 to cycle;
		\draw [style=HardWire] (29.center) to (28.center);
		\draw [style=ComplexMap] (28.center) to (33);
		\draw [style=ComplexMap] (32) to (29.center);
		\draw [style=FVecComplex] (34.center) to (32);
		\draw [style=FVecComplex] (33) to (35.center);
		\draw [style=FVecComplex] (72.center)
			 to (82.center)
			 to (85.center)
			 to (69.center)
			 to cycle;
		\draw [style=FVecComplex] (75.center) to (83.center);
		\draw [style=FVecComplex] (84.center) to (76.center);
		\draw [style=HardWire] (76.center) to (75.center);
	\end{pgfonlayer}
\end{tikzpicture}
    \caption{\textbf{Illustration of the main results.} Any linearity-preserving, empirically adequate, complex-valued quasiprobability representation $\mathsf{Q}$ of a generic transformation $T:A\to B$ in a tomographically-local, finite-dimensional GPT can be written as $\mathsf{Q}(T)=\chi_B\circ\mathds{C}(T)\circ\chi_A^{-1}$, provided $\mathsf{Q}$ is functorial and maps the GPT to finite-dimensional spaces (Theorem~\ref{thm: complex_valued_structure_theorem} and Corollary~\ref{corollary:infinite_dim_functors_must_be_fin}). If $\mathsf{Q}$ is only semi-functorial, then $\mathsf{Q}(\mathrm{id}_A)$ is an idempotent and $\mathsf{Q}(T)=\chi_B\circ\mathds{C}(T)\circ\underline{\chi_A}^{-1}\circ\mathsf{Q}(\mathrm{id}_A)$, where $\chi_A$ is injective and $\underline{\chi_A}$ denotes its invertible surjective corestriction (Theorem~\ref{thm: connectivity_preserving theorem}, Corollary~\ref{corollary: quasiprobability representations structure theorem}). In these formulas, $\mathds{C}$ denotes the complexification functor. 
    }
    \label{fig:figure_introduction_2}
\end{figure}

Recently, a particularly important class of \emph{complex-valued} quasiprobability representations has attracted significant attention: the Kirkwood--Dirac (KD) quasiprobability distributions. First introduced by Kirkwood in the 1930s~\cite{kirkwood1933quantum} and later rediscovered by Dirac~\cite{dirac1945ontheanalogy}, these distributions have undergone a revival of interest~\cite{arvidssonShukur2024properties,wagner2024quantumcircuits,halpern2018quasiprobability,gherardini2024quasiprobability}. Most relevantly for our discussion, they have been shown to lift to faithful complex-valued diagram-preserving representations of all of quantum theory~\cite{schmid2024kirkwood}.

Beyond their conceptual appeal, KD quasiprobability representations have found applications across a wide range of areas in quantum information science---including quantum metrology and sensing~\cite{arvidssonshukur2020quantumadvantage,lupu2022negative,das2023saturating,jenne2022unbounded}, quantum computation~\cite{burkat2025structurepositivityclassicalsimulability,thio2025kirkwooddiracnonpositivitynecessaryresource}, nonequilibrium thermodynamics~\cite{halpern2017jarzynski,levy2020quasiprobability,gherardini2024quasiprobability,donati2024energetics,hernandez2024projective,pezzutto2025nonpositiveenergyquasidistributionscoherent,santini2023work}, scrambling of quantum information~\cite{alonso2019out,halpern2018quasiprobability}, and indefinite causal order~\cite{gao2023measuring,ban2021onsequential,azado2025measuringunitaryinvariantsquantum}.

Kirkwood--Dirac representations---and more generally, complex-valued quasiprobability distributions~\cite{wagner2023simple,degosson2012weakvalues,hofmann2011ontherole,hofmann2012complex,hofmann2014derivation,terracunha2001}---lie outside the scope of the formalism and structure theorem of Ref.~\cite{schmid2024structure}. This motivates the need for a broader understanding of the structural features of complex-valued representations of quantum theory in particular, and more broadly of any GPT. The present work takes a step in that direction. Our main contribution is to establish a structure theorem showing that any  complex-valued quasiprobability representation of a finite-dimensional, tomographically-local GPT has a simple and constrained mathematical form.

Our results are illustrated in Fig.~\ref{fig:figure_introduction_2} and are obtained from the following sequence of ideas. Since we consider complex-valued representations, we begin by describing in detail the \emph{complexification} of real vector spaces, and linear maps between them. We then show how complexification can be viewed as a category-theoretic construction, and prove that it yields a strong monoidal functor. Equipped with the complexification functor, we then prove a structure theorem for semi-functors from tomographically-local GPTs to the process theory of  \emph{complex} vector spaces (Theorems~\ref{thm: complex_valued_structure_theorem} and~\ref{thm: connectivity_preserving theorem}). In Corollary~\ref{corollary: quasiprobability representations structure theorem}, we prove a structure theorem for complex-valued quasiprobability representations (which in our formalism are a subset of all possible semi-functors) of tomographically-local GPTs, extending the results of Ref.~\cite{schmid2024structure}. Note that unlike in Ref.~\cite{schmid2024structure} our results (specifically, Theorems~\ref{thm: complex_valued_structure_theorem} and~\ref{thm: connectivity_preserving theorem}) apply equally  well to both finite and infinite-dimensional representations. 

\

\textbf{Outline.} The balance of the paper is organized as follows. We start by providing a comprehensive background section. In Sec.~\ref{sec:background_diagrams} we recall the diagrammatic representation of process theories and GPTs and in Sec.~\ref{sec:examples} we present examples thereof.  Section~\ref{sec:quasiprob} reviews frame representation theory applied to quantum theory, including complex-valued quasiprobability representations like the Kirkwood-Dirac quasiprobability distributions. In Sec.~\ref{sec:connectivity_preserving_maps} we define functors and semi-functors, and formalize quasiprobability representations of GPTs. Section~\ref{sec:complexification_functor} constructs the complexification functor and establishes its categorical properties. 

In Sec.~\ref{sec:structure_theorems} we prove our main structure theorems. We start with a structure theorem for functors mapping to the process theory of (potentially infinite-dimensional) complex vector spaces in Sec.~\ref{sec:functors_to_complex_vector_spaces}.  Section~\ref{sec:semi_functors_to_complex_vectors_spaces} extends this to a structure theorem for semi-functors, which constitutes the key element for proving our structure theorem for quasiprobability representations of finite-dimensional GPTs, as we do in Sec.~\ref{sec:main_result_quasiprobs}. We connect our notion of quasiprobability representations of finite-dimensional GPTs to quantum theory using frame representation theory in Sec.~\ref{sec:examples_quasiprobability_quantum_theory}. Finally, we conclude with Sec.~\ref{sec:conclusion}, which discusses implications and future research directions.

\section{Background}

\subsection{Process theories and generalized probabilistic theories}\label{sec:background_diagrams}

To present our results in a general and conceptually transparent manner, we adopt the framework of generalized probabilistic theories (GPTs)~\cite{hardy2001quantumtheoryreasonableaxioms,barrett2007information,hardy2011reformulatingreconstructingquantumtheory,dariano2017probabilistic,dariano2017quantum,plavala2023general,lami2018nonclassicalcorrelationsquantummechanics,muller2021probabilistic}, a formalism that accommodates quantum theory as a particular case. Many of our constructions are most naturally and elegantly expressed in this broader setting, especially when coupled with the diagrammatic language of process theories, formalized within category theory~\cite{vandeWetering2018quantum,gogioso2018categorical,selby2021reconstructing,coecke2017picturing,coecke2018categorical,gheorghiu2020ontological,schmid2021unscramblingomelettecausationinference,schmid2024structure}. We emphasize that to follow our main results no prior expertise in category theory is assumed to be required---only familiarity with the basic diagrammatic conventions we introduce below. 

We will introduce  GPTs~\cite{dariano2017quantum} from the point of view of the \emph{diagrammatic approach} as discussed in  Ref.~\cite{schmid2024structure}. The class of GPTs under consideration here is the one where each is associated to a \emph{process theory}~\cite{selby2017process,selby2021reconstructing,selby2025generalisedprocesstheories}. Process theories, in their standard description, are in one-to-one correspondence {with} symmetric monoidal categories (SMCs)\footnote{In fact, much of the literature has conflated the two, while more recently foundational work has motivated the perspective that process theories could be viewed from a broader perspective assuming less than what is generically assumed by such categorical constructions~\cite{selby2025generalisedprocesstheories}. }. As has been thoroughly investigated by the aforementioned references, process theories have an associated faithful diagrammatic representation (and calculus), where equalities between diagrams are equivalent to category-theoretic functional relations, and manipulations of diagrams translate faithfully to their categorical counterparts. 

To provide an illustrative example before we go into the details of the specific process theories we will consider, take the following equation: 
\begin{equation}\label{eq:example_swap_and_pair_of_states}
    S_{A,B} \circ (s_1 \otimes s_2) = s_2 \otimes s_1,
\end{equation}
where $s_1,s_2$ are two preparations, $S_{A,B}$ is the swap and these are all processes within the same process theory. These are diagrammatically represented as
\begin{equation}
    \tikzfig{FIRST/GPT_exmp_16}\,\,,\quad \tikzfig{FIRST/GPT_exmp_17}\,\,,\text{ and}\,\, \tikzfig{FIRST/GPT_exmp_18}\,\,,\quad 
\end{equation}
respectively. A diagrammatic equation represents a valid equivalence between drawn diagrams and composition of processes within the theory. It allows us to substitute equations of the same form as Eq.~\eqref{eq:example_swap_and_pair_of_states} for those having a purely diagrammatic form 
\begin{equation}\label{eq:first_eq_example}    S_{A,B}\circ (s_1 \otimes s_2)\equiv  \tikzfig{FIRST/GPT_exmp_1}\equiv s_2 \otimes s_1.
\end{equation}
This diagrammatic equation concurrently describes: two preparation processes which prepare systems $A$ and $B$ in states $s_1$ and $s_2$ (we denote both the preparations and the states being prepared using the same symbol),  respectively,  the swap process $S_{A,B}$ acting on the pair of systems, and most relevantly the validity of Eq.~\eqref{eq:example_swap_and_pair_of_states}. The intuitively interesting aspect of the diagrammatic Eq.~\eqref{eq:first_eq_example} is that one can imagine sliding the two processes---as dictated by the swap---from the left-hand side to the right-hand side.

The process–theoretic framework is founded on category theory, so every element has a categorical description. For instance, the empty system---denoted $I$ and diagrammatically represented by an empty wire---is called the \emph{monoidal unit} of the process theory. Systems are called \emph{objects}. The swap is an example of a \emph{natural invertible braiding}~\cite{mac1998categories,awodey2010category}. General processes are  \emph{morphisms}, and the collection of all processes from a system $A$ to a system $B$ is called a \emph{hom-set}. 

Because our results span category theory, generalized probabilistic theories, and quantum theory, terminology sometimes shifts between these. To keep our presentation consistent, we will ordinarily adopt the process‑theoretic terms and only invoke alternative jargons when {especially useful} (specifically, we will use  category theory terms when  discussing the complexification functor and stating a few proofs in Appendix~\ref{app: complexification proof}). Table~\ref{tab:terminology_map} provides a dictionary for translating between these languages.

We now give a concise overview of process theories and of GPTs. We will  further focus on tomographically-local and finite-dimensional GPTs, and on process theories for which systems are vector spaces, and our presentation will reflect this choice. For a mathematically rigorous, algebraic treatment (including a historical context) of the GPTs framework, we refer to Ref.~\cite{lami2018nonclassicalcorrelationsquantummechanics}. Moreover, we also refer to  Refs.~\cite{plavala2023general,muller2021probabilistic} for introductory reviews on the topic and Ref.~\cite{dariano2017quantum} for good introduction to the diagrammatic perspective we consider here. 

Since the GPTs  considered here are the ones admitting a diagrammatic calculus\footnote{It is important to note that the GPTs considered in this manuscript, i.e., the ones  arising from (a \emph{quotienting}~\cite{dariano2017quantum,schmid2024structure,schmid2021unscramblingomelettecausationinference} of)  a symmetric monoidal category cannot describe all possible GPTs as formulated within the operator-algebraic tradition as from Refs.~\cite{lami2018nonclassicalcorrelationsquantummechanics,barnum2019stronglysymmetricspectralconvex,barnum2020composites,baldijao2022quantum}. For example, one may be interested in investigating GPTs for which the notion of parallel composition $\hat{\otimes}$ is \emph{not} associative, which implies that $A \otimes (B \otimes C) \neq (A \otimes B) \otimes C$. In category-theoretic terms, such GPTs fail to be monoidal as the associators $a_{A,B,C}$ for systems $A,B,C$ would not be natural isomorphisms $a_{A,B,C}: (A \hat \otimes B) \hat \otimes C \to A \hat \otimes (B \hat \otimes C)$. }, as we go along we will present the elements of GPT theory and describe their diagrammatic counterparts.

For the purpose of this paper, a given GPT, denoted {$\mathbf{G}$}, is, in particular, a process theory where \emph{systems} are real vector spaces, which we denote using uppercase letters $A$, $B$, etc., and processes are linear maps $T: A \to B$\footnote{It is also worth mentioning that not even every GPT having a diagrammatic representation can be described as a subprocess theory of $\realvectexamplep$, introduced shortly. One such example is the class of subtheories dubbed \emph{latent quantum theories} (LQT), recently introduced by Erba and Perinotti~\cite{erba2024compositionrulequantumsystems}; there, the authors conjecture these theories do not admit of an embedding as a strong monoidal functor from LQT to a subtheory of $(\realvectexamplep,\otimes_{\mathds R})$ where $\otimes_{\mathds R}$ is the usual real tensor product (see footnote~9 of~\cite{erba2024compositionrulequantumsystems}).}. In a process theory (or simply, a theory), systems are represented as directed (bottom to top) wires, and processes are represented as boxes having input and output wires. Since the direction is fixed in this manner, it is not diagrammatically represented{,} for simplicity. For example, we denote systems $A$, $B$, and processes $T:A\to B$, respectively, as
\begin{equation}
    \tikzfig{FIRST/GPT_exmp_2}.
\end{equation}
As mentioned before, we use an empty wire to represent the trivial system $I$ of a GPT: 
\begin{equation}
    \tikzfig{FIRST/GPT_exmp_15}\,\,,
\end{equation}
which we take to be $\mathds{R}$, viewed as a real one-dimensional vector space. 

The set of all processes transforming a system $A$ into a system $B$ in a theory {$\mathbf{G}$} is denoted as {$\mathbf{G}(A,B)$}. For every fixed system $A$ the elements {$s \in \mathbf{G}(I,A)$} are called  \emph{states} of the system $A$, and are diagrammatically represented as a process for which the input is the empty wire and the output is a wire with label $A$. Similarly, the elements {$e \in \mathbf{G}(A,I)$} are called \emph{effects} of the system $A$, and are diagrammatically represented as a process for which the input is a wire with label $A$ and the output is the empty wire. The label of a wire is sometimes referred to as its \emph{type}.  States and effects {of a system $A$} are therefore represented diagrammatically as
\begin{equation}\label{eq: morphism_types}    \tikzfig{FIRST/GPT_exmp_4}.
\end{equation}

\begin{table*}[ht]
\centering
\caption{A glossary of constructions at the intersection of category theory, process theory (in one-to-one correspondence with symmetric monoidal categories), generalized probabilistic theories (special kinds of process theories) and quantum theory (a special kind of GPT).}
\begin{tabular}{l|l|l|l}
\toprule
\textbf{Process Theory} & \textbf{Category Theory} & \textbf{GPTs} & \textbf{Quantum Theory} \\
\hline
\midrule
Process & Morphism (a.k.a. arrow) & Transformation & Quantum channel \\
System & Object & Underlying vector space &$\mathcal{B}(\mathcal{H})_{sa}$\\
Trivial system & Monoidal unit $I$ & Trivial system & Trivial system   \\
Sequential composition of  processes & Sequential composition & Sequential transformations & Composition of channels \\
Parallel composition of processes & {Monoidal composition} & Parallel transformations & Tensor product \\
State (or preparations) & Morphism \( I \to A \) & State  & Density matrix \\
Effect (or measurement event) & Morphism \( A \to I \) & Effect & POVM element \\
Closed diagram & Morphism \( I \to I \) & Probabilities & Born rule probabilities \\
Discarding & Morphism to the terminal object & Deterministic effect & Trace \\
\bottomrule
\hline
\hline
\end{tabular}
\label{tab:terminology_map}
\end{table*}

For every {system} $A$ there exists a unique process $$u_A \equiv \,\,\tikzfig{FIRST/GPT_exmp_3} \,\,\,\in \mathbf{G}(A,I)$$ called {the} \emph{deterministic effect} (also known as the \emph{discarding effect} or the \emph{order unit}~\cite{baldijao2022quantum}).

It is worth pointing out that a generalized probabilistic theory {$\mathbf{G}$} also {has} additional geometric structure. For example, in standard formulations of GPTs the normalized state space {$\mathbf{G}(I,A)$} is  taken to be a pointed closed convex set---endowing the underlying real vector space $A$ with a partial order---and similarly for the effect space. In detail, what is commonly referred to as a \emph{GPT system}~\cite{baldijao2022quantum} is an ordered tuple whose components include: (1) a real vector space $A$, (2) a pointed, closed, convex cone $\mathbf{G}(I,A) \subseteq A$ of states, (3) its dual cone of effects $\mathbf{G}(A,I)\subseteq A^*$\footnote{Here, $A^*$ is the dual space of $\mathds{R}$-linear functionals $A^* = \mathcal{L}_\mathds{R}(A,\mathds{R})$. Whenever $A$ is a complex vector space, for simplicity we will also denote its dual space as $A^*$, even if in this case we have instead $A^* = \mathcal{L}_\mathds{C}(A,\mathds{C})$. }, (4) a discarding effect $u_A$ in $\mathbf{G}(A,I)$, (5) a set of reversible transformations $\mathbf{G}(A,A)$, and finally (6) a composition rule such that $\mathbf G(I,I)$ is a commutative monoid (sometimes referred to as a generalized Born rule).

In particular, for GPTs the states $\mathbf{G}(I,A) \subseteq A$ span the entire vector space $A$, i.e. $\mathrm{span}_{\mathds{R}}(\mathbf{G}(I,A)) = A$ and similarly for the effects $\mathrm{span}_{\mathds{R}}(\mathbf{G}(A,I)) = A^*$ (which is isomorphic to $A$ as these are finite-dimensional spaces). In other words, all processes in the theory are in the linear span of (appropriately typed) compositions of states and effects.

Although these convex-geometric features are \emph{essential} to the usual definition and analysis of GPTs, our results will focus primarily on their compositional (process-theoretic) structure. For this reason, we will not provide a full specification of each GPT, which would require describing every system, its composition rules, and its allowed transformations. Nevertheless, it is useful to have a simple criterion to distinguish process theories that \emph{do not} qualify as GPTs in our sense from those that \emph{do}, namely by possessing   additional geometric structures to the ones just mentioned. In what follows, it will be straightforward to tell whether a theory is a GPT simply by examining its set of states.~\footnote{We stress that this simple rule, valid \emph{only for the purpose of our work}, is a guide to distinguish process theories that are GPTs from those that are not; it is not generally valid and is never sufficient to identify a GPT. }   Whenever we define $\mathbf{G}(I,A)$ without assuming it to be a pointed, closed, convex cone in $A$, the theory will be treated as a process theory that is not a GPT. 

When performing diagrammatic calculations, virtually any manipulation that preserves the input  spaces, output  spaces, and overall topology of a diagram is allowed.  Consequently, it is often clearer to list the \emph{forbidden} manipulations. In the calculus of a process theory~\cite{coecke2017picturing,selby2017process}, one may \emph{not}:
\begin{enumerate}
\item[(i)] Create loops of wires, for example $$\tikzfig{FIRST/GPT_exmp_5}\,\,.$$ 
\item[(ii)] Introduce self‑loops, e.g. $$\tikzfig{FIRST/GPT_exmp_6}\,\,.$$ 
\item[(iii)] Cut a wire between two processes, since this generally produces a different process: $$\tikzfig{FIRST/GPT_exmp_7}.$$ 
\item[(iv)] Commute two {different} processes whose {inputs and outputs} are not the monoidal unit:  $$\tikzfig{FIRST/GPT_exmp_8}.$$ 
\item[(v)] Glue together wires of different types: $$\tikzfig{FIRST/GPT_exmp_9}.$$ 
\item[(vi)] Collapse multiple wires into a single {one}: $$\tikzfig{FIRST/GPT_exmp_10}.$$
\end{enumerate}

For every theory {$\mathbf{G}$} there is  {a} privileged set of processes: all  {those} whose diagrams are in the set {$\mathbf{G}(I,I)$;} {e.g.,}  
\begin{equation}    
\tikzfig{FIRST/GPT_exmp_11}.
\end{equation}
These elements are called \emph{closed diagrams}, and  are simply numbers in $[0,1]$. These are to be interpreted as the probabilistic empirical predictions of the theory. They are privileged in the sense that the set {$\mathbf{G}(I,I)$} forms a commutative monoid with respect to parallel composition since, while generic processes do not (parallel) commute, closed diagrams \emph{do} commute{. For example, } 
$$\tikzfig{FIRST/GPT_exmp_14}\,=\,\tikzfig{FIRST/GPT_exmp_13},$$ as is required if we want to identify these objects with numbers.

An important feature satisfied by many generalized probabilistic theories including quantum theory and all those considered here is  \emph{tomographic locality}, which implies that {the identity} of a composite system can be determined by suitably chosen local preparations and measurements. Every process in such GPTs admits a decomposition as a linear combination of {measure-and-prepare} processes, a fact we will exploit throughout{,} and therefore state now as a Lemma:

\begin{lemma}[Adapted from Ref.~\cite{schmid2024structure}]\label{lemma:decomposition_GPT_processes}
    A generalized probabilistic theory  {$\mathbf{G}$} is tomographically-local if and only if, for every pair of systems $A$ and $B$  there exists a set of states $\{s_i\}_i \in \mathbf{G}(I,B)$ and a set of effects $\{e_j\}_j\in \mathbf{G}(A,I)$ such that, for every process $T \in \mathbf{G}(A,B)$,  
    \begin{equation}
        \tikzfig{FIRST/GPT_local_tomography_def_1} \,\,=\,\, \sum_{ij}r_{ij}\tikzfig{FIRST/GPT_local_tomography_def_2}
    \end{equation}
    with real coefficients  $r_{ij}$ for all $i,j$. 
\end{lemma}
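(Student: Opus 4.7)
The plan rests on two structural facts stated in the paper: in any GPT, the states of a system span the underlying real vector space, $\mathrm{span}_{\mathds{R}}\mathbf{G}(I,B)=B$, and the effects span its dual, $\mathrm{span}_{\mathds{R}}\mathbf{G}(A,I)=A^*$. Hence one can always select families $\{s_i\}\subseteq\mathbf{G}(I,B)$ and $\{e_j\}\subseteq\mathbf{G}(A,I)$ forming bases of $B$ and $A^*$ respectively, in which case the measure-and-prepare processes $\{s_i\circ e_j\}_{i,j}$ form a basis of $\mathcal{L}_{\mathds{R}}(A,B)\cong A^*\otimes B$, since the elementary tensors $e_j\otimes s_i$ correspond precisely to the rank-one linear maps $s_i\circ e_j$.

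For the forward direction, tomographic locality guarantees that the physical Hom-space $\mathbf{G}(A,B)$ embeds into $\mathcal{L}_{\mathds{R}}(A,B)$: intuitively, no ``hidden'' degree of freedom distinguishes processes beyond the scalars $e\circ T\circ s$ they produce when composed with states and effects. Since this ambient vector space has the measure-and-prepare processes as a basis, any $T\in\mathbf{G}(A,B)$ admits the expansion $T=\sum_{ij}r_{ij}\,s_i\circ e_j$ with real coefficients $r_{ij}$, as claimed. Conversely, if every process admits such a decomposition, then the scalar data $e\circ T\circ s$ ranging over all states and effects determines the coefficients $r_{ij}$ uniquely---by pairing the expansion against dual bases of states and effects---so any two processes producing identical scalar statistics must coincide, which is exactly the statement of tomographic locality.

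The principal obstacle is articulating tomographic locality in the present single-process form while retaining its equivalence to the more standard bipartite formulation, in which product states span the composite state space. This is reconciled via the process-state correspondence identifying $\mathbf{G}(A,B)$ with a subspace of bipartite ``states'' living in $A^*\otimes B$, under which bipartite tomographic locality translates into the Hom-space decomposition stated in the lemma. Verifying that the spanning and composition structures transport correctly through this duality---without invoking additional categorical machinery such as compact closedness or a dagger---is the most delicate part of the argument, and is where the finite-dimensionality assumption implicit in the GPTs considered here plays an essential role.
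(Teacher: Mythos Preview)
The paper does not supply its own proof of this lemma; it is quoted as a characterization adapted from the cited reference and used as a black box throughout. There is therefore no paper-proof to compare against directly.

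Your sketch is along the correct lines and identifies the right mechanism for the forward direction: since states span $B$ and effects span $A^*$, the rank-one maps $s_i\circ e_j$ span $\mathcal{L}_{\mathds{R}}(A,B)\cong A^*\otimes_{\mathds{R}} B$, so any process sitting in that space has the stated expansion. The converse as you phrase it, however, only establishes \emph{process tomography by local tests}---that the scalars $e\circ T\circ s$ determine $T$---which is weaker than the usual bipartite formulation of tomographic locality (that product states span $\mathbf{G}(I,AB)$). You correctly flag this as the principal obstacle. Closing the gap genuinely requires either a process--state duality identifying $\mathbf{G}(A,B)$ with a subspace of the bipartite state space (which needs compact closure or an equivalent assumption you rightly want to avoid), or else taking the decomposition property itself as the working definition of tomographic locality, which is essentially how the referenced paper and this one treat it. Since the present paper never gives an independent definition of tomographic locality, the lemma functions here as a definition-via-characterization rather than a theorem with a separate hypothesis, and for that purpose your sketch is adequate.
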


If tomographic locality fails then the composite vector space where systems live satisfies $AB \subsetneq A\otimes B,$ so there are global (holistic) degrees of freedom in $AB$ invisible to product effects; consequently product states $s_1\otimes s_2$ do not span the full composite space. 

\subsection{Examples of process theories and of GPTs}\label{sec:examples}

We now briefly introduce some concrete examples of process theories that will play a central role in this work. Each example specifies a class of systems and processes together with their compositional structure.

\begin{examplep}{E1}[\hypertarget{real_vector_examples}{Process theory $\realvectexamplep$}]  In this process theory, which is not a GPT,  systems are labeled by  vector spaces $V, W, \ldots$ over the field $\mathds{R}$. The trivial system is $\mathds{R}$, viewed as a vector space. Processes are $\mathds{R}$-linear maps $T: V \to W \in \mathbf{Vect}_{\mathds{R}}(V,W)$ between such vector spaces. Parallel composition of systems is given by the usual real tensor product $V \otimes_{\mathds{R}} W$ {between those spaces}, while sequential composition is given by the standard composition of linear maps. A process with no input corresponds to a vector
$$\mathbf{Vect}_{\mathds{R}}(\mathds{R},V) = \{s:\mathds{R}\to V\mid s\text{ is linear}\} \cong V.$$ 
A process with no output corresponds to a covector, i.e. an element of the dual vector space
$$\mathbf{Vect}_{\mathds{R}}(V,\mathds{R}) = \{e:V\to \mathds{R}\mid e\text{ is linear}\} = V^*.$$ A process with neither input nor output corresponds to a scalar in {$\mathds{R}$}.
\end{examplep}

\begin{examplep}{E2}[\hypertarget{complex_vector_examples}{Process theory  $\complexvectexamplep$}]
    This theory is entirely analogous to $\realvectexamplep$ by changing the field $\mathds{R}$ to the field $\mathds{C}$. 
\end{examplep}

In what follows, we will denote the tensor products $\otimes_\mathds{R}$ and $\otimes_\mathds{C}$ by $\otimes$ for simplicity. The type should be clear from the context. For the purposes of this work, we will refer to a subprocess theory  $\mathbf{G}'$ of a theory $\mathbf{G}$, and write $\mathbf{G}' \subseteq \mathbf{G}$, if for every system $A'$ in $\mathbf{G}'$ we have that $A'$ is also a valid system in $\mathbf{G}$ and, similarly, every process $T$ in $\mathbf{G}'$ is also a process in $\mathbf{G}$. Moreover, we also require that the empty wire, sequential and parallel compositions, and swapping operations from $\mathbf{G}'$ are equal to those from\footnote{In the case of GPTs, what constitutes a GPT subtheory is, in fact, rather subtle, as some of us discuss in a forthcoming work. Nevertheless, for the scope of the present work, the simple definition given for process theories is enough.} $\mathbf{G}$.  

\begin{example}
  We now present a few relevant subtheories of $\complexvectexamplep$ and $\realvectexamplep$.
  \begin{itemize}
    \item (\hypertarget{finite_real_vector_examples}{Process theory $\finrealvectexamplep$}) This is the subprocess theory of $\realvectexamplep$ where all wires correspond to \emph{finite-dimensional} real vector spaces and processes correspond to $\mathds{R}$-linear maps between these spaces.
    \item (\hypertarget{finite_complex_vector_examples}{Process theory $\fincomplexvectexamplep$}) Similarly to $\finrealvectexamplep$, this is the subtheory of $\complexvectexamplep$ where all wires correspond to finite-dimensional complex vector spaces and processes correspond to $\mathds{C}$-linear maps between these spaces. 
    \item (\hypertarget{finite_real_vector_substoch}{The GPT $\finiterealsubstoch$})  This is a subtheory of $\finrealvectexamplep$ in which all wires correspond to finite-dimensional real vector spaces of functions from a finite index set 
    $\Lambda$ to $\mathds{R}$, i.e.\ elements of $\mathds{R}^\Lambda$. 
    
    In this process theory, each wire $\mathds{R}^\Lambda$ has the standard basis 
    $\{\vert \lambda \rangle \}_{\lambda\in\Lambda}$, while the dual space carries the canonical dual basis $\{\langle \lambda \vert \}_{\lambda\in\Lambda}$. 
    Diagrammatically,
    \begin{equation}
        \vert \lambda \rangle  = \tikzfig{FIRST/Exemple_sec_4},\text{ and } \langle \lambda \vert = \tikzfig{FIRST/Exemple_sec_3}.
    \end{equation}
    For each $\Lambda$ we define the positive cone $\mathds{R}_+^\Lambda$ as the set of vectors \begin{equation}v = \sum_{\lambda \in \Lambda} v_\lambda \vert \lambda \rangle  \equiv  \sum_{\lambda \in \Lambda}\,\,\tikzfig{FIRST/Exemple_sec_5}\,\,\, \tikzfig{FIRST/Exemple_sec_4}\end{equation} with $v_\lambda \geq 0$ for all $\lambda \in \Lambda$. Moreover, to every $\Lambda$ we define (the deterministic effect) $u_\Lambda: \mathds{R}^\Lambda \to \mathds{R}$ as the map \begin{equation}\label{eq:deterministic_effect_summation_functional}
    u_\Lambda(v) = \sum_{\lambda \in \Lambda}v_\lambda \equiv \tikzfig{FIRST/Exemple_sec_6}.
    \end{equation}
    For every $\Lambda$, we denote $${\mathds{R}^\Lambda_+}^*:=\{e \in {\mathds{R}^\Lambda}^* \mid e(v) \geq 0,\, \forall v \in \mathds{R}^\Lambda_+\}$$ the dual cone of $\mathds{R}^\Lambda_+$. Processes in this theory are $\mathds{R}$-linear maps $\Gamma:\mathds{R}^\Lambda \to \mathds{R}^{\Lambda'}$ such that $\Gamma(\mathds{R}^\Lambda_+ ) \subseteq \mathds{R}_+^{\Lambda'}$ and \begin{equation}\label{eq:substoch_normalization_preservation}
    u_{\Lambda'}(\Gamma(v)) = \sum_{\lambda' \in \Lambda'} (\Gamma(v))_{\lambda'} \leq \sum_\lambda v_\lambda = u_{\Lambda}(v)\end{equation} for every $v \in  \mathds{R}_+^\Lambda$. The set of all possible effects is   ${\mathds{R}_{+}^{\Lambda}}^*$ (i.e. this theory satisfies the \emph{no-restriction hypothesis}~\cite{plavala2023general}). Because every element in ${\mathds{R}^\Lambda_+}^*$ is a valid effect, $\Gamma$ preserves also the dual cone of effects, in the sense that for every effect $e \in {\mathds{R}^\Lambda_+}^*$ the \textit{pull-back} $e \circ \Gamma$ is again in the cone, i.e. $e \circ \Gamma \in {\mathds{R}^\Lambda_+}^*$.~\footnote{The pull-back defines what is also known as the dual map $\Gamma^*: {\mathds{R}^{\Lambda'}}^* \to {\mathds{R}^\Lambda}^*$ of $\Gamma: \mathds{R}^\Lambda \to \mathds{R}^{\Lambda'}$ via the definition of $\Gamma^*(e)(v) = e\circ \Gamma (v)$ for every $v \in \mathds{R}^\Lambda$, cf.~\cite{barnum2023selfdualityjordanstructurequantum}. Therefore, we conclude that provided $\Gamma (\mathds{R}_+^\Lambda) \subseteq \mathds{R}_+^{\Lambda'}$ and our theory satisfy the no-restriction hypothesis we have also that $\Gamma^*({\mathds{R}_+^{\Lambda'}}^*) \subseteq {\mathds{R}_+^\Lambda}^*$, i.e. processes also preserve the cone of effects.}   Sequential and parallel composition are given by ordinary composition of maps and the real tensor product, respectively.  
    
    States are processes with no input, $\mu: \mathds{R} \to \mathds{R}^\Lambda$, which we can identify with vectors via $\mathds{R} \cong \mathds{R}^{\{\star\}}$, where $\star$ labels the empty input system (a singleton set),   writing $\mu = \sum_\lambda \mu(\star)_\lambda \vert \lambda \rangle \equiv \sum_\lambda \mu(\lambda\vert \star) \vert \lambda \rangle$. These form the set $$\left\{\mu \mid \mu(\lambda\vert \star) \geq 0,\; \sum_\lambda \mu(\lambda\vert \star) \leq 1\right\},$$
    i.e.\ the subnormalized probability distributions on $\Lambda$. 
    
    Processes with no outputs $\xi:\mathds{R}^\Lambda \to \mathds{R}$ correspond to subnormalized response functions, which we can identify with covectors by writing $\xi = \sum_\lambda \xi(\star\vert \lambda)\langle \lambda \vert $. Closed diagrams satisfy a total probability rule such as 
    \begin{equation}\label{eq:total_prob_rule}
    \xi(\Gamma(\mu)) = \sum_{\lambda,\lambda'}\xi(\star\vert \lambda') \Gamma(\lambda'\vert \lambda)\mu(\lambda\vert \star)
    \end{equation}
     and correspond to scalars in $[0,1]$. 
  \end{itemize}
\end{example}

We remark that every object $V$ of the process theory $\finrealvectexamplep$ is (non-canonically) isomorphic to $\mathds{R}^d$, where $d=\dim_{\mathds{R}} V$. Equivalently, there exists a finite index set $\Lambda$ with $|\Lambda|=d$ and an isomorphism
$$
\varphi_V:\mathds{R}^\Lambda \xrightarrow{\ \simeq\ } V,
$$
but the choice of $\Lambda$ and of $\varphi_V$ is equivalent to the choice of an ordered basis of $V$.  Likewise, given a linear map $T:V\to W$ and chosen bases for $V$ and $W$ (isomorphisms $\varphi_V:\mathds R^\Lambda\!\to V$ and $\varphi_W:\mathds R^{\Lambda'}\!\to W$), the map $T$ is represented by the matrix
$$
L \;:=\; \varphi_W^{-1}\circ T\circ\varphi_V \;:\; \mathds R^\Lambda\longrightarrow\mathds R^{\Lambda'}.
$$
In these coordinates the action of the $\mathds{R}$-linear map $T$ is equivalent to the action of the matrix $L$, and composition of linear maps corresponds to matrix multiplication, etc.  In other words, by fixing a canonical basis $\{e_\lambda\}_{\lambda\in\Lambda}$ on each system we obtain a \emph{basis-dependent} or ``matrix'' version of $\finrealvectexamplep$ in which objects are the concrete vector spaces $\mathds R^\Lambda$ and morphisms are matrices relative to the chosen bases.  (The identification is basis-dependent and therefore non-canonical: different basis choices yield conjugate matrix representations of the same abstract map.) The same holds substituting $\mathds
R$ by $\mathds{C}$. It is precisely \emph{this} identification that will make it clearer that the process theory $\fincomplexquasisubstoch$  (which we will define soon)  is process-theoretically equivalent to  $\fincomplexvectexamplep$. 

The terminology adopted so far is consistent with standard category-theoretic usage~\cite{fritz2020synthetic,vandeWetering2018quantum}. We refer to Example 2.5 from Ref.~\cite{fritz2020synthetic} for a formal account of (a related category-theoretic construction to) the process theory $\finiterealsubstoch$.  

Quantum theory relative to finite-dimensional Hilbert spaces can also be viewed as a process theory---in fact a GPT---, which we will denote as $\FinQT$.

\begin{examplep}{E3}[\hypertarget{finite_quantum_theory}{The GPT $\finitequantumtheory$}]
Systems are associated to the real vector space of self-adjoint bounded operators acting on finite-dimensional (complex) Hilbert spaces $\mathcal{H}$, i.e. $$\mathcal{B}(\mathcal{H})_{\rm sa} = \{A\in \mathcal{B}(\mathcal{H})\mid A=A^\dagger\}.$$ Processes $\mathcal{E}:\mathcal{B}(\mathcal{H})_{\rm sa} \to \mathcal{B}(\mathcal{H}')_{\rm sa} $ are completely positive and trace non-increasing maps. Sequential composition is provided by the standard composition of maps. Parallel composition is provided by the standard tensor product of Hilbert spaces. The trivial system $I$ is again $\R$. Arbitrary  (possibly unnormalized)  states are positive operators
$$\mathbf{FinQT}(\mathds{R},\mathcal{B}(\mathcal{H})_{\rm sa}) \cong \mathcal{B}(\mathcal{H})^+= \{\rho\mid \mathrm{spec}(\rho)\subseteq \mathds{R}_{\geq 0}\},$$
and (possibly unnormalized)  effects are linear functionals on these $$\mathbf{FinQT}(\mathcal{B}(\mathcal{H})_{\rm sa},\mathds{R})= \{\mathrm{Tr}( E,\cdot)\mid E \in \mathcal{B}(\mathcal{H})^+\}.$$ Closed diagrams $\FinQT(\mathds{R},\mathds{R})$ are the diagrammatical equivalent to the Born rule
\begin{equation}
\tikzfig{FIRST/Exemple_sec_1} = \mathrm{Tr}(E \mathcal{E}(\rho)).
\end{equation}
The deterministic effect of a quantum system $ \mathcal{H}$ is provided by the identity operator  $\mathbb{1}_{\mathcal{H}}\simeq \mathrm{Tr}(\mathbb{1}_\mathcal{H} \cdot ) = u_{\mathcal{H}}$, so that it acts as a discard, hence defining \emph{normalized} states $\rho \in \mathcal{D}(\mathcal{H})$ to be those for which 
\begin{equation}
    \tikzfig{FIRST/Exemple_sec_2} = \,\,\text{Tr}(\mathbb{1} \rho) = \text{Tr}(\rho) =  1.
\end{equation}    
\end{examplep}

Since we restrict processes to completely positive maps and that these maps preserve adjoints, the operationally relevant space is the self-adjoint subspace $\mathcal{B}(\mathcal H)_{\mathrm{sa}}$---unnormalized states as defined constitute positive (hence self-adjoint) elements and any map as defined preserves this positivity. Since $\mathcal {B}(\mathcal{H})_{\mathrm{sa}}$ is a finite-dimensional real vector space, $\FinQT$ is clearly a subprocess theory of $\finrealvectexamplep$ (objects regarded as ordered real vector spaces equipped with the trace deterministic effect). 

We now introduce $\finrealquasisubstoch$ and its generalization to the complex field $\fincomplexquasisubstoch$.

\begin{examplep}{E4}[\hypertarget{finite_real_vector_quasisubstoch}{Process theory  $\finrealquasisubstoch$}] Systems are labeled by vector spaces $\mathds{R}^\Lambda$ relative to a fixed choice of a finite index set $\Lambda$. Processes are $\mathds{R}$-linear maps $T: \mathds{R}^\Lambda \to \mathds{R}^{\Lambda'}$ between such vector spaces. The remaining process-theoretic structure is the same as for $\finrealvectexamplep$. The discarding effects $u_\Lambda$ are the same as  the one defined in Eq.~\eqref{eq:deterministic_effect_summation_functional}. Note the difference to $\finiterealsubstoch$: there is no need for positive cone of states/effects, nor processes need to preserve those. In other words, the components of states in this process theories need not be positive.
\end{examplep}

We can further motivate why we term this subprocess theory that of quasisubstochastic processes even though process-theoretically this is equivalent to $\finrealvectexamplep$ (and similarly for complex values). Recall that a real matrix is substochastic iff there exists another nonnegative matrix such that their sum is stochastic.  Equivalently, the same reasoning holds for our definition: every quasisubstochastic process admits a (not necessarily nonnegative) complementary process such that their sum becomes quasistochastic.

We stress that states and effects in $\finrealquasisubstoch$ are \emph{not} necessarily subnormalized states and effects in the sense of $\sum_\lambda \mu(\lambda \vert \star) \leq 1$. Imposing such normalization (or related ordering constraints) incurs several problems. To start, any ordering-based requirement does not generalize to complex amplitudes since $\mathds{C}$ is not totally ordered. Moreover, if we want to grant the possibility that to every quasisubstochastic $\mu$ there should exist another quasisubstochastic $\mu'$ such that $\mu+\mu'$ is quasistochastic we need for $\mu+\mu' =: \mu''$ to be normalized. However, taking $\Lambda = \{0,1\}$ and $\mu''(0)<0$ would imply that $\mu''(1) > 1$ which is problematic if we were to assume that these must be subnormalized. 

There is, nevertheless, a distinguished subset of processes in $\finrealquasisubstoch$: the  \emph{normalization-preserving} processes. In more formal words,  $\mathds{R}$-linear maps $\Gamma:\mathds{R}^\Lambda \to \mathds R^{\Lambda'}$ such that for every $v \in \mathds{R}^\Lambda$,
    \begin{equation}\label{eq:quasistochastic_normalization_preserving}
        u_{\Lambda'} \circ \Gamma(v)=u_{\Lambda'} (\Gamma (v)) = \sum_{\lambda' \in \Lambda'}(\Gamma(v))_{\lambda'} = \sum_{\lambda}v_\lambda =u_{\Lambda}(v).
    \end{equation}
    In analogy with Eq.~\eqref{eq:substoch_normalization_preservation}, where the same equality constraint yields stochastic processes, maps satisfying Eq.~\eqref{eq:quasistochastic_normalization_preserving} can be regarded as the \emph{quasistochastic processes} in $\finrealquasisubstoch$. 
    When $\Lambda = \Lambda' = \{\star\}$, the only normalization preserving linear map $\Gamma:\mathds{R}\to \mathds{R}$ is $\mathrm{id}_{\mathds{R}}$ (i.e. the only dynamics allowed on the trivial system is the trivial dynamics). Processes with no input $\mu:\mathds{R} \to \mathds{R}^\Lambda$ are normalized, i.e., we have $\sum_\lambda \mu(\lambda \vert \star)= 1.$ Processes with no output are arbitrary $\mathds{R}$-linear functionals $\xi: \mathds{R}^\Lambda \to \mathds{R}$.

\begin{examplep}{E4}[\hypertarget{finite_complex_vector_quasisubstoch}{Process theory $\fincomplexquasisubstoch$}] This process theory has the exact same structure as $\finrealquasisubstoch$ exchanging the field $\mathds{R}$ by $\mathds{C}$. 
\end{examplep}

The process theory $\fincomplexquasisubstoch$ has an analogous distinguished subset of quasistochastic complex-valued processes, defined via normalization preservation, as for the quasistochsatic real-valued processes in $\finrealquasisubstoch$. 

To conclude, we note that $\mathbf{Vect}_{\mathds{K}}$, $\mathbf{FinVect}_{\mathds{K}}$, and $\mathbf{FinQuasiSubStoch}_{\mathds{K}}$, where $\mathds{K} \in \{\mathds{R},\mathds{C}\}$, are process theories but \emph{not} GPTs, since (considering our simple yet incomplete rule for distinguishing
between the two) for a given system $A$ the space of states does not form a closed, convex, and pointed cone within $A$. The theories $\finiterealsubstoch$ and $\finitequantumtheory$ \emph{are} GPTs. The state spaces  {$\mathbf{FinSubStoch}_{\mathds{R}}(I,A)$ for a given system $A$ form simplexes, and their dual space of effects dual simplexes. The theory is mathematically equivalent to what is known as a \emph{classical probabilistic theory} in the GPT literature. 

\subsection{Frame representations of quantum theory}\label{sec:quasiprob}

Quasiprobability distributions of quantum theory are 
 commonly defined via \emph{frame representation theory}~\cite{christensen2016introduction}, and in this section we set up the terminology and basic notation we will use. Let $\mathcal{H}$ be any finite-dimensional Hilbert space. We say that $f \equiv \{f_\lambda\}_{\lambda=1}^m$ is a (finite) frame for $\mathcal{H}$ if complex-valued linear combinations of elements span the entire space, i.e.  $\text{span}_{\mathds{C}}(f) = \mathcal{H}$. A frame $f$ is overcomplete if it has more elements than a basis; if it is a basis, it is said to be  nonovercomplete. A subset $f' \subseteq f$ of a frame $f$ is a subframe if $f'$ is also a frame for $\mathcal{H}$. Every frame $f$ for a finite-dimensional Hilbert space $\mathcal{H}$ contains a (possibly non-unique) nonovercomplete subframe $f'$.

The bounded operators acting on a Hilbert space can be viewed as a complex Hilbert space $(\mathcal{B}(\mathcal{H}), \Vert \cdot \Vert_{\mathrm{HS}})$ where $\Vert \cdot \Vert_{\mathrm{HS}}:\mathcal{B}(\mathcal{H}) \to \mathds{R}$ is the Hilbert--Schmidt (HS) norm\footnote{In finite-dimensional spaces, the Hilbert--Schmidt norm is equivalent to the \emph{Frobenius matrix norm} $\Vert A \Vert_{\mathrm{Frob}}^2 = \sum_{i,j} |\langle e_i,A e_j\rangle|^2 = \sum_i \Vert Ae_i \Vert^2$ where $\{e_i\}_i$ is a basis for $\mathcal{H}$.} $$\Vert A \Vert^2_{\mathrm{HS}} = \text{Tr}(A^\dagger A)= \langle A,  A\rangle_{\mathrm{HS}}.$$ 
We denote frames of the Hilbert space of bounded operators $\mathcal{B}(\mathcal{H})$ by $F$. If $(\cdot)^\dagger$ is the involution induced by the inner-product making $\mathcal{H}$ a Hilbert space, we say that $F = \{F_\lambda\}_{\lambda}$ is a \emph{self-adjoint} frame iff $F_\lambda = F_\lambda^\dagger$ for all $\lambda$. 

Quantum states $\rho \in \mathcal{D}(\mathcal{H})$ are represented by a finite frame $F = \{F_\lambda\}_{\lambda \in \Lambda}$ via the family of linear functionals $\mu(\lambda | \cdot ): \mathcal{B}(\mathcal{H}) \to \mathds{C}$ given by    
\begin{equation}\label{eq:def_mu}
    \mu(\lambda | \cdot ) := \langle F_\lambda, \cdot  \rangle_{\mathrm{HS}} = \mathrm{Tr}(F_\lambda^\dagger\,\cdot) 
\end{equation}
for every $\lambda \in \Lambda$. In particular, the family of functionals $\mu(\lambda \vert \cdot)$ provides a frame for $\mathcal{B}(\mathcal{H})^*$ whenever $F$ is a frame for $\mathcal{B}(\mathcal{H})$. We will refer to $\mu_F(\rho) \equiv \{\mu(\lambda|\rho)\}_{\lambda \in \Lambda} \in \mathds{C}^\Lambda$ as a  \emph{frame representation of the state} $\rho$. It is meaningful to call it a ``representation'' because there always exists some other frame $G = \{G_\lambda\}_{\lambda \in \Lambda}$ for which we can write $\rho$ as 
\begin{equation}\label{eq: representing rho}
    \rho = \sum_{\lambda \in \Lambda} \mu(\lambda|\rho) G_\lambda = \sum_{\lambda \in \Lambda} \text{Tr}(F_{\lambda}^\dagger \rho)G_\lambda.
\end{equation}
The set $G$ is called {a} \emph{dual frame} {for $F$}. In general, for each choice of frame $F$ there is no unique dual frame $G$ for which Eq.~\eqref{eq: representing rho} holds. The function $$\mu_F: \mathcal{B}(\mathcal{H}) \to \mathds{C}^\Lambda \simeq \underbrace{\mathds{C} \times \dots \times \mathds{C}}_{|\Lambda|\text{ times}},$$
$$X \mapsto (\langle F_\lambda,X\rangle_{\mathrm{HS}})_{\lambda \in \Lambda}$$
is $\mathds{C}$-linear\footnote{For every $X,Y \in \mathcal{B}(\mathcal{H})$ and every $\alpha,\beta \in \mathds{C}$ we have  $\mu_F(\alpha X + \beta Y) = \alpha \mu_F(X)+\beta \mu_F(Y)$.}
and, due to Eq.~\eqref{eq: representing rho}, invertible for \emph{any} specific choice of frame $F$ when restricted as a map onto the image $\mu_F(\mathcal{B}(\mathcal{H}))$\footnote{Note that $\mu_F^{-1}$ is unique on the image of $\mu_F$---as any inverse map---yet the \emph{reconstruction} $\rho$ via Eq.~\eqref{eq: representing rho} can have many representations depending on the choice of dual frame $G$.   This may be viewed as a freedom in representing the map $\mu_F^{-1}$ depending on the choice of $G$ yet, as maps, they all act equivalently.}.

In fact, the \emph{canonical dual frame} $G$ {for} $F$ is the frame constructed via $G = \mathcal{S}^{-1}(F)$ where $\mathcal{S}$ is the invertible self{-}adjoint bounded operator defined by 
\begin{equation}
    \mathcal{S}(A) := \sum_{\lambda \in \Lambda} \text{Tr}(A^\dagger F_\lambda)F_\lambda.
\end{equation}
$\mathcal{S}$ is called the frame operator.\footnote{ $\mathcal{S}(F)$ is also a frame because for an arbitrary frame $F$, and surjective operator $\mathcal{E} \in \mathcal{B}(\mathcal{B}(\mathcal{H}))$, we have that $\mathcal{E}(F) \equiv \{\mathcal{E}(F_\lambda)\}_{\lambda \in \Lambda}$ is also a frame~\cite[Corollary~5.3.2, pg.~129]{christensen2016introduction}. }
When $F$ is a basis {(hence a nonovercomplete frame)} for $\mathcal{B}(\mathcal{H})$ the canonical dual frame is the \emph{unique} frame for which 
\begin{equation}\label{eq: biorthogonality frames}
    \langle F_{\lambda}, G_{\lambda'} \rangle_{\mathrm{HS}} = \delta_{\lambda \lambda'}
\end{equation}
for all $\lambda, \lambda' \in \Lambda$. 

Given any dual frame $G${,} we say that the positive operators $E \in \mathcal{B}(\mathcal{H})^+$ are represented by the dual frame $G$ via the family of linear functionals
\begin{equation}
    \xi(\cdot|\lambda) := \langle \cdot, G_\lambda\rangle_{\mathrm{HS}}.
\end{equation}
We will refer to $\xi_G(E) \equiv \{\xi(E|\lambda)\}_{\lambda \in \Lambda}$ as \emph{a frame representation for the positive operator $E$.} As before, it is meaningful to call $\xi$ a representation for $E$ because 
\begin{equation}\label{eq: frame representation E}
    E = \sum_{\lambda \in \Lambda} \xi(E|\lambda)F_\lambda = \sum_{\lambda \in \Lambda}\text{Tr}(EG_\lambda)F_\lambda. \end{equation}
Note that above we have used that positive operators are self-adjoint, and therefore $\text{Tr}(E^\dagger G_\lambda)=\text{Tr}(E G_\lambda)$.  The functions $\xi_G: \mathcal{B}(\mathcal{H}) \to \mathds{C}^\Lambda$ are $\mathds{R}$-linear  but \emph{not} $\mathds{C}$-linear. They are instead  conjugate-linear since, denoting the complex-conjugate of a complex number $\alpha$ {by $\overline{\alpha}$},
\begin{equation}
    \xi_G(\alpha A+\beta B) = \overline{\alpha}\xi_G(A)+\overline{\beta}\xi_G(B),
\end{equation}
following from the sesquilinearity of the inner product $\langle \cdot , \cdot \rangle_{\mathrm{HS}}$.  Again, from Eq.~\eqref{eq: frame representation E}, these maps are invertible. 

Note that $F$ and $G$ can be overcomplete and still satisfy Eqs.~\eqref{eq: representing rho} and~\eqref{eq: frame representation E}. Moreover, provided that we have a frame and its dual frame (not necessarily the canonical dual frame), the representations $\xi_G(E)$ of $E$ and $\mu_F(\rho)$ of $\rho$ correctly reproduce the Born rule via the formula
\begin{align}
    \sum_{\lambda} \mu(F_\lambda|\rho)\xi(E|G_\lambda) &= \sum_\lambda \text{Tr}(F_\lambda^\dagger \rho ) \text{Tr}(EG_\lambda) \nonumber \\
    &=\text{Tr}\left(E \sum_\lambda \text{Tr}(F_\lambda^\dagger \rho)G_\lambda\right) \nonumber\\
    &= \text{Tr}(E\rho).\label{eq: empirical_adequacy_frames_nodiagrams}
\end{align}

If we consider a completely positive and trace-nonincreasing map $\mathcal{E}: \mathcal{B}(\mathcal{H_{\mathrm{in}}}) \to \mathcal{B}(\mathcal{H}_{\mathrm{out}})$, frames $F^{\mathrm{in}} \subseteq \mathcal{B}(\mathcal{H}_{\mathrm{in}}), F^{\mathrm{out}} \subseteq \mathcal{B}(\mathcal{H}_{\mathrm{out}}),$ and corresponding dual frames $G^{\mathrm{in}}, G^{\mathrm{out}}$ we can also consider \emph{frame representations of the channel $\mathcal{E}$} given by $$\Gamma_{F,G}(\mathcal{E}) \equiv \{\Gamma(\lambda'|\lambda,\mathcal{E})\}_{\lambda' \in \Lambda_{\mathrm{out}},\lambda\in \Lambda_{\mathrm{in}}}$$ defined as  

\begin{equation}    
\Gamma(\lambda_{\mathrm{out}}|\lambda_{\mathrm{in}},\mathcal{E} ) := \langle F^{\mathrm{out}}_{\lambda_{\mathrm{out}}}, \mathcal{E}(G^{\mathrm{in}}_{\lambda_{\mathrm{in}}})\rangle_{\mathrm{HS}}.
\end{equation}

Similarly to the case of states and effects, it is valid to call this a representation because we can always write the action of the map $\mathcal{E}$ as

\begin{equation}
    \mathcal{E}(\cdot) = \sum_{\lambda_{\rm in},\lambda_{\rm out}}  \Gamma(\lambda_{\mathrm{out}}|\lambda_{\mathrm{in}},\mathcal{E} ) \langle F_{\lambda_{\mathrm{in}}}^{\mathrm{in}}, \,\,\cdot \,\,\rangle_{\mathrm{HS}} G_{\lambda_{\mathrm{out}}}^{\mathrm{out}}.
\end{equation}

Table~\ref{tab: summary_notions} organizes the terminology of frame representation theory. 

Representations of the identity channel are not always the identity matrix. Let $F$ and $G$ be a frame and dual frame pair {for} $\mathcal{B}(\mathcal{H})$, and denote the identity quantum channel as $\mathrm{id}: \mathcal{H} \to \mathcal{H}$. In this case, $$\Gamma(\lambda'|\lambda, \mathrm{id}) = \langle F_{\lambda'},G_\lambda \rangle_{\mathrm{HS}}.$$ When $F$ and $G$ do not satisfy Eq.~\eqref{eq: biorthogonality frames}{,} the {resulting} representation $\Gamma_{F,G}(\mathrm{id})$ of $\mathrm{id}$ is not equivalent to the identity matrix. When $F$ is a basis then there always exists a canonical dual $G$ satisfying this property, while for overcomplete frames this is not  the case anymore. However, \emph{even if} $F$ is a basis, there are {(non-canonical)} choices of $G$ for which the identity operator $\mathrm{id}$ is not represented by the identity matrix.

It is, nevertheless, easy to see that any representation of the identity $\Gamma_{F,G}(\mathrm{id})$ must define an \emph{idempotent} matrix with entries $\Gamma(\lambda',\lambda\vert \mathrm{id})$, i.e, a matrix $P$ satisfying $P^2=P$. To see this, note that for every $\lambda,\lambda'$ we have
\begin{align}    &\sum_{\lambda''}\Gamma(\lambda',\lambda'' \vert \mathrm{id})\Gamma(\lambda'',\lambda \vert \mathrm{id})    =\sum_{\lambda''}\langle F_{\lambda'},G_{\lambda''}\rangle\langle F_{\lambda''},G_{\lambda}\rangle \nonumber\\
&=\left \langle F_{\lambda'}, \underbrace{\sum_{\lambda''} \langle F_{\lambda''},G_{\lambda}\rangle G_{\lambda''}}_{\stackrel{\text{\eqref{eq: representing rho}}}{=}G_\lambda}\right \rangle= \langle F_{\lambda'},G_\lambda \rangle = \Gamma(\lambda',\lambda \vert \mathrm{id}).\label{eq:idempotency_frame_representation_equation}
\end{align}

Frame representations provide a unified framework for classifying what is commonly known as a \emph{quasiprobability distribution} or a \emph{quasiprobability distribution function}~\cite{ferrie2008frame,ferrie2009framed,schmid2024structure}. These are defined in different ways in the literature. What is most common is to define real-valued quasiprobability functions as linear invertible functions $\mu: \mathcal{B}(\mathcal{H})_{\rm sa} \to \mathds{R}^\Lambda$~\cite{ferrie2009framed}, where $\mathcal{B}(\mathcal{H})_{\rm sa}$ denotes the self-adjoint operators. 

\begin{definition}[Quasiprobability distributions - Adapted from Ref.~\cite{ferrie2009framed}]~\label{def:quasiprobability_ferrie}
    Let $\mathcal{H}$ be a Hilbert space. A quasiprobability distribution  is a linear invertible map $\mu: \mathcal{B}(\mathcal{H})_{\rm sa} \to \mathds{R}^\Lambda$, where $\Lambda$ is a set. 
\end{definition}

We will extend this definition later to consider complex-valued distributions. As shown by Refs.~\cite{ferrie2008frame,ferrie2009framed}, it turns out that quasiprobability distributions according to Def.~\ref{def:quasiprobability_ferrie} are always {given by} frame representations for some self-adjoint frame $F$. 

\begin{theorem}[From Ferrie and Emerson, 2009~\cite{ferrie2009framed}]~\label{theorem: ferrie_and_emerson}
    A map $\mu: \mathcal{B}(\mathcal{H})_{\rm sa} \to \mathds{R}^\Lambda$ is a quasiprobability distribution  function (as by Def.~\ref{def:quasiprobability_ferrie}) iff there exists a self-adjoint frame $F \subseteq \mathcal{B}(\mathcal{H})$ for which $\mu = \mu_F$. 
\end{theorem}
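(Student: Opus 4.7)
The plan is to establish each direction by exploiting the Hilbert--Schmidt Riesz-like representation of real linear functionals on $\mathcal{B}(\mathcal{H})_{\rm sa}$, together with the decomposition $\mathcal{B}(\mathcal{H}) = \mathcal{B}(\mathcal{H})_{\rm sa} \oplus i\,\mathcal{B}(\mathcal{H})_{\rm sa}$ relating a self-adjoint frame to a complex frame for all of $\mathcal{B}(\mathcal{H})$.

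For the easier direction ($\Leftarrow$), I would start from a self-adjoint frame $F=\{F_\lambda\}_{\lambda\in\Lambda}$ and verify that $\mu_F$, when restricted to $\mathcal{B}(\mathcal{H})_{\rm sa}$, lands in $\mathds{R}^\Lambda$: since $F_\lambda^\dagger = F_\lambda$ and $X=X^\dagger$, the trace $\mathrm{Tr}(F_\lambda X)$ is real. Real-linearity of each component functional is immediate from linearity of the trace. For invertibility, I would argue by injectivity: if $\mu_F(X)=0$ for some self-adjoint $X$, then $\langle F_\lambda, X\rangle_{\rm HS}=0$ for all $\lambda$, hence $X$ is HS-orthogonal to $\mathrm{span}_{\mathds{C}}\{F_\lambda\}_\lambda = \mathcal{B}(\mathcal{H})$, forcing $X=0$. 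Since $\mathcal{B}(\mathcal{H})_{\rm sa}$ is finite-dimensional and $\mu_F$ is injective, it is invertible onto its image, so $\mu_F$ satisfies Def.~\ref{def:quasiprobability_ferrie}.

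For the harder direction ($\Rightarrow$), I would start from a linear invertible map $\mu:\mathcal{B}(\mathcal{H})_{\rm sa} \to \mathds{R}^\Lambda$ and, component by component, apply Riesz representation on the finite-dimensional real Hilbert space $(\mathcal{B}(\mathcal{H})_{\rm sa}, \langle\cdot,\cdot\rangle_{\rm HS})$. This gives, for each $\lambda$, a unique self-adjoint operator $F_\lambda\in\mathcal{B}(\mathcal{H})_{\rm sa}$ such that
\begin{equation}
\mu(\lambda \vert X) = \langle F_\lambda, X \rangle_{\rm HS} = \mathrm{Tr}(F_\lambda^\dagger X)
\end{equation}
for every $X\in\mathcal{B}(\mathcal{H})_{\rm sa}$, so that $\mu=\mu_F$ holds as maps out of $\mathcal{B}(\mathcal{H})_{\rm sa}$.

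The key step, and in my view the main subtlety of the argument, is to upgrade $F=\{F_\lambda\}_{\lambda\in\Lambda}$ from ``spanning $\mathcal{B}(\mathcal{H})_{\rm sa}$ over $\mathds{R}$'' to ``being a frame for $\mathcal{B}(\mathcal{H})$'' in the sense of Sec.~\ref{sec:quasiprob}, which requires $\mathrm{span}_{\mathds{C}}(F) = \mathcal{B}(\mathcal{H})$. I would establish the real span first: by invertibility (hence injectivity) of $\mu$ on $\mathcal{B}(\mathcal{H})_{\rm sa}$, any $X\in\mathcal{B}(\mathcal{H})_{\rm sa}$ with $\langle F_\lambda,X\rangle_{\rm HS}=0$ for all $\lambda$ must vanish, so the HS-orthogonal complement of $\mathrm{span}_{\mathds{R}}\{F_\lambda\}$ inside $\mathcal{B}(\mathcal{H})_{\rm sa}$ is trivial, yielding $\mathrm{span}_{\mathds{R}}\{F_\lambda\} = \mathcal{B}(\mathcal{H})_{\rm sa}$. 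Then using the standard decomposition $\mathcal{B}(\mathcal{H}) = \mathcal{B}(\mathcal{H})_{\rm sa} \oplus i\,\mathcal{B}(\mathcal{H})_{\rm sa}$, the complex span of $F$ contains both $\mathcal{B}(\mathcal{H})_{\rm sa}$ and $i\,\mathcal{B}(\mathcal{H})_{\rm sa}$, hence all of $\mathcal{B}(\mathcal{H})$. Therefore $F$ is a self-adjoint frame for $\mathcal{B}(\mathcal{H})$, completing the proof. The obstacle is purely conceptual rather than technical: keeping straight the two different scalar fields---the real one over which the domain $\mathcal{B}(\mathcal{H})_{\rm sa}$ is a vector space and over which $\mu$ is linear, versus the complex one used in the paper's definition of ``frame''---and showing that the obvious real-linear Riesz construction automatically promotes to a complex frame via the self-adjoint/anti-self-adjoint decomposition.
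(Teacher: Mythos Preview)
The paper does not give its own proof of this theorem---it is quoted from Ferrie and Emerson---so there is no in-paper argument to compare against directly. That said, your approach is correct, and it is essentially the same strategy the paper uses a few lines later to prove the complex-valued extension (Theorem~\ref{theorem:extension_ferrie_and_emerson}): component-wise Riesz--Fr\'echet representation to produce the operators $F_\lambda$, followed by a spanning argument derived from injectivity of $\mu$. Your additional step---promoting $\mathrm{span}_{\mathds{R}}\{F_\lambda\}=\mathcal{B}(\mathcal{H})_{\rm sa}$ to $\mathrm{span}_{\mathds{C}}\{F_\lambda\}=\mathcal{B}(\mathcal{H})$ via the decomposition $\mathcal{B}(\mathcal{H})=\mathcal{B}(\mathcal{H})_{\rm sa}\oplus i\,\mathcal{B}(\mathcal{H})_{\rm sa}$---is exactly the bridge needed in the real case and is not required in the paper's complex version, where Riesz is applied directly on $\mathcal{B}(\mathcal{H})$.
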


\begin{table}[t]
    \centering
    \caption{Glossary for constructions in frame representation theory. We denote the bounded operators acting on a Hilbert space {as $\mathcal{B}(\mathcal{H})$}. The set $2^X$ denotes the power set of $X$. The symbol $\mathds{C}^\Lambda$ denotes the set of functions $\Lambda \to \mathds{C}$.}
    \begin{tabular}{ccc}
    \hline\hline
       Symbol  &  Description  &  Belongs to \\
    \hline
        $F = \{F_\lambda\}_\lambda$  & Frame for $\mathcal{B}(\mathcal{H})$ & $2^{\mathcal{B}(\mathcal{H})}$ \\
    \hline 
    $G = \{G_\lambda\}_\lambda$  & Dual frame for $\mathcal{B}(\mathcal{H})$ & $2^{\mathcal{B}(\mathcal{H})}$  \\
    \hline 
    $\mu_F(\rho)$  & Frame representation of $\rho$ & $\mathds{C}^\Lambda$  \\
    \hline 
    $\xi_G(E)$  & Frame representation of $E$ & $\mathds{C}^\Lambda$  \\
    \hline 
    $\Gamma_{F,G}(\mathcal{E})$  & Frame representation of $\mathcal{E}$ & $\mathds{C}^{\Lambda \times \Lambda'}$  \\
    \hline 
    $\mathcal{S}$  & Frame operator & $\mathcal{B}(\mathcal{B}(\mathcal{H}))$  \\
    \hline 
    $\mathcal{S}^{-1}(F)$  & Canonical dual frame of $F$ & $2^{\mathcal{B}(\mathcal{H})}$  \\
    \hline\hline
    \end{tabular}
    \label{tab: summary_notions}
\end{table}

Therefore, we shall refer to quasiprobability distributions as \emph{quasiprobability representations} of a state since they can always be viewed as arising from \emph{some} frame representation. Moreover, due to terminology followed by  Refs.~\cite{spekkens2008negativity,ferrie2011quasiprobabilitiesreview}, one has also a notion of a quasiprobability representation of a prepare-and-measure fragment of quantum theory as opposed to just a representation of the quantum states (or measurement effects), which is provided by functions $\mu$ and $\xi$ capable of representing the Born rule via Eq.~\eqref{eq: empirical_adequacy_frames_nodiagrams}. 

A crucial aspect of moving from quasiprobability representations of states to that of a theory is that one requires \emph{more} than just writing down a pair of invertible functions $\mu$ and $\xi$, i.e., one requires Eq.~\eqref{eq: empirical_adequacy_frames_nodiagrams}.   Refs.~\cite{vandeWetering2018quantum,schmid2024structure} have generalized these ideas to consider quasiprobability representations beyond such prepare-and-measure fragments to representations of any generalized probabilistic theory (as we will describe in Sec.~\ref{sec:connectivity_preserving_maps}). 

There are various functions that do not fit Def.~\ref{def:quasiprobability_ferrie} that have been considered valid quasiprobability distributions. In particular, as pointed out by Ref.~\cite{arvidssonShukur2024properties}, KD distributions fail to satisfy Def.~\ref{def:quasiprobability_ferrie} both because they may be non-invertible, {and because they may be} complex-valued. 

\begin{definition}[Standard KDQ]
    Let $\mathcal{H}$ be a $d$-dimensional Hilbert space and fix two orthonormal bases $\mathbb{a} \equiv  \{\vert a\rangle \}_{a=1}^d$, and $ \mathbb{b} \equiv \{\vert b \rangle \}_{b=1}^d$. The \emph{standard KDQ} distribution with respect to these two bases is the function\footnote{Since the function depends on the pairs of orthonormal bases $\mathbb{a}$ and $\mathbb{b}$ a more appropriate (yet heavy) notation would be $\mu_{\mathrm{KD}}^{\mathbb{a},\mathbb{b}}$.} $\mu_{\mathrm{KD}}:  \mathcal{B}(\mathcal{H}) \to \mathds{C}^{d \times d}$ given by    
    \begin{equation}\label{eq:standard_KQD}
    \mu_{\mathrm{KD}}(a,b|\rho) := \langle a|\rho|b\rangle \langle b|a\rangle.
    \end{equation}
\end{definition}

Thus, we extend Def.~\ref{def:quasiprobability_ferrie} as follows:

\begin{definition}[Faithful complex-valued quasiprobability distributions]~\label{def:quasiprobability_complex_valued}
    A \emph{(complex-valued) quasi-probability distribution} is a linear map $\mu: \mathcal{B}(\mathcal{H}) \to \mathds{C}^\Lambda$, where $\Lambda$ is an index set. We say that $\mu$ is \emph{faithful} if it is injective.
\end{definition}
Note that although quasiprobability distributions are ultimately used to represent quantum states (elements of $\mathcal{B}(\mathcal{H})_{\mathrm{sa}}$), their domain is taken to be the full operator space $\mathcal{B}(\mathcal{H})$. 
This choice, which may seem unmotivated at this point, will be useful for later constructions of frame representations that lead to complex-valued quasiprobability representations.  

Definitions~\ref{def:quasiprobability_complex_valued} and~\ref{def:quasiprobability_ferrie} are broad enough to allow normalized states $\rho$ (i.e. $\mathrm{Tr}(\rho)=1$) to be represented by quasiprobability
distributions $\mu$ whose components need not sum to one, i.e.
$\sum_{\lambda\in\Lambda}\mu(\lambda\!\mid\!\rho)\neq 1$.  This usage is consistent with the common practice of calling any map that violates Kolmogorov's axioms a ``quasiprobability''. For example, Husimi's quasiprobability distribution~\cite{husimi1940density} is positive everywhere and normalized. Nevertheless, it is still considered a quasiprobability distribution since it does not yield marginals as a classical probability distribution. 

Faithful quasiprobability distributions are invertible when restricted to their image  $\mu:\mathcal{B}(\mathcal{H}) \to \mu(\mathcal{B}(\mathcal{H}))\subseteq \mathds C^\Lambda$. When $\mu$ is faithful, it defines a frame representation of quantum states; consequently Definition~\ref{def:quasiprobability_complex_valued} admits an analogue of Theorem~\ref{theorem: ferrie_and_emerson}. 

\begin{theorem}
\label{theorem:extension_ferrie_and_emerson}
    Let $\mathcal{H}$ be a finite-dimensional Hilbert space. A  $\mathds{C}$-linear map $\mu: \mathcal{B}(\mathcal{H}) \to \mathbb{C}^\Lambda$ is a faithful quasiprobability distribution (as in Def.~\ref{def:quasiprobability_complex_valued}) iff there exists a frame $F \subseteq \mathcal{B}(\mathcal{H})$ for which $\mu = \mu_F$. 
\end{theorem}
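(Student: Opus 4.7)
The plan is to mirror the proof of Theorem~\ref{theorem: ferrie_and_emerson} while working over the full operator space $\mathcal{B}(\mathcal{H})$ equipped with the Hilbert--Schmidt inner product, and using the Riesz representation theorem in place of any self-adjointness argument. Since $\mathcal{H}$ is finite-dimensional, $(\mathcal{B}(\mathcal{H}), \langle \cdot, \cdot\rangle_{\mathrm{HS}})$ is itself a finite-dimensional complex Hilbert space, which is the only structural ingredient we need.

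For the $(\Leftarrow)$ direction, I would assume $\mu = \mu_F$ for some frame $F = \{F_\lambda\}_{\lambda \in \Lambda} \subseteq \mathcal{B}(\mathcal{H})$. Each component $\mu(\lambda\vert \cdot) = \langle F_\lambda, \cdot\rangle_{\mathrm{HS}}$ is $\mathds{C}$-linear by $\mathds{C}$-linearity of the Hilbert--Schmidt inner product in its second slot, so $\mu$ is a $\mathds{C}$-linear map into $\mathds{C}^\Lambda$. For faithfulness, suppose $\mu_F(X) = 0$; then $\langle F_\lambda, X \rangle_{\mathrm{HS}} = 0$ for every $\lambda \in \Lambda$, and since $\mathrm{span}_{\mathds{C}}(F) = \mathcal{B}(\mathcal{H})$ by the frame property, this forces $\langle Y, X\rangle_{\mathrm{HS}} = 0$ for every $Y \in \mathcal{B}(\mathcal{H})$, hence $X = 0$. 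Thus $\mu$ is a faithful quasiprobability distribution in the sense of Def.~\ref{def:quasiprobability_complex_valued}.

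For the $(\Rightarrow)$ direction, let $\mu : \mathcal{B}(\mathcal{H}) \to \mathds{C}^\Lambda$ be faithful and $\mathds{C}$-linear. For each $\lambda \in \Lambda$, the map $\mu(\lambda\vert \cdot) : \mathcal{B}(\mathcal{H}) \to \mathds{C}$ is a $\mathds{C}$-linear functional on the Hilbert space $(\mathcal{B}(\mathcal{H}), \langle\cdot,\cdot\rangle_{\mathrm{HS}})$. By the Riesz representation theorem, there exists a unique $F_\lambda \in \mathcal{B}(\mathcal{H})$ such that $\mu(\lambda\vert X) = \langle F_\lambda, X\rangle_{\mathrm{HS}}$ for all $X \in \mathcal{B}(\mathcal{H})$. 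Setting $F := \{F_\lambda\}_{\lambda \in \Lambda}$, we immediately have $\mu = \mu_F$ by definition~\eqref{eq:def_mu}. It remains to verify that $F$ is a frame, that is, $\mathrm{span}_{\mathds{C}}(F) = \mathcal{B}(\mathcal{H})$. Since $\mathcal{B}(\mathcal{H})$ is finite-dimensional, this is equivalent to showing the orthogonal complement $F^{\perp}$ (with respect to $\langle \cdot,\cdot\rangle_{\mathrm{HS}}$) is trivial. If $X \in F^{\perp}$, then $\mu(\lambda\vert X) = \langle F_\lambda, X\rangle_{\mathrm{HS}} = 0$ for every $\lambda$, so $\mu(X) = 0$, and faithfulness yields $X = 0$.

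There is no real obstacle; the only subtleties worth flagging are that (i) unlike in Ferrie--Emerson we do not require the $F_\lambda$ to be self-adjoint, because the enlarged codomain $\mathds{C}^\Lambda$ (as opposed to $\mathds{R}^\Lambda$) and the enlarged domain $\mathcal{B}(\mathcal{H})$ (as opposed to $\mathcal{B}(\mathcal{H})_{\mathrm{sa}}$) remove the reality constraint that forced self-adjointness there; and (ii) the index set $\Lambda$ is allowed to be infinite even though $\dim_{\mathds{C}} \mathcal{B}(\mathcal{H}) = \dim_{\mathds{C}}(\mathcal{H})^2$ is finite. The second point is harmless: the Riesz construction assigns an $F_\lambda$ to every index $\lambda$ without any constraint on $|\Lambda|$, and the resulting (possibly overcomplete) family still satisfies $\mathrm{span}_{\mathds{C}}(F) = \mathcal{B}(\mathcal{H})$ by the argument above, so it qualifies as a frame in the sense introduced in Sec.~\ref{sec:quasiprob}.
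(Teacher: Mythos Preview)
Your proof is correct and follows essentially the same approach as the paper's: the paper also invokes the Riesz--Fr\'echet representation theorem to obtain each $F_\lambda$ and then deduces the spanning property from faithfulness, while calling the converse direction trivial. You have simply supplied more detail, particularly for the $(\Leftarrow)$ direction and the orthogonal-complement argument.
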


\begin{proof}
    This follows directly from the proof strategy given in Refs.~\cite{ferrie2008frame,ferrie2009framed}. The ``if'' part is trivial. For the ``only if'' part, given some quasiprobability distribution function $\mu: \mathcal{B}(\mathcal{H}) \to \mathbb{C}^\Lambda$ we note that for each fixed $\lambda \in \Lambda$, $\mu(\lambda|\cdot): \mathcal{B}(\mathcal{H}) \to \mathbb{C}$ is a bounded linear functional $\mu(\lambda \vert \cdot) \in \mathcal{B}(\mathcal{B}(\mathcal{H}),\mathds{C})$. From the Riesz--Fréchet representation theorem~\cite[Theorem 7.214]{bru2023cstaralgebras} there exists a \emph{unique} $F_\lambda \in \mathcal{B}(\mathcal{H})$  such that $$\mu(\lambda|X) = \langle F_\lambda, X \rangle_{\mathrm{HS}},$$
    for every $X \in \mathcal{B}(\mathcal{H})$.
    To conclude, we note that $F := \{F_\lambda\}_{\lambda \in \Lambda} \subseteq \mathcal{B}(\mathcal{H})$ is necessarily a frame since $\mu$ is faithful (and thus $\text{span}_{\mathds{C}}(F) = \mathcal{B}(\mathcal{H})\,$).
\end{proof}

An immediate corollary of this theorem is that Kirkwood--Dirac quasiprobability distributions are not always frame representations of states for some frame, since they are not always faithful (e.g. KDQs for which there are $(a,b) \in \mathbb{a}\times \mathbb{b}$ s.~t.~$\langle a\vert b\rangle = 0$). Ref.~\cite{schmid2021characterization} shows that whenever $\mathbb{a},\mathbb{b}$ are such that $\langle a\vert b\rangle \neq 0$ for all $(a,b) \in \mathbb{a} \times \mathbb{b}$ the function $\mu_{\rm KD}$ has an associated frame $F = \{F_{a,b}\}_{(a,b) \in \mathbb{a}\times \mathbb{b}}$ given by 
\begin{equation}\label{eq:KD_frame}
    F_{a,b} = \vert a\rangle \langle b \vert \langle a\vert b \rangle, 
\end{equation}
so that 
\begin{align*}
    \mu_F(a,b\vert \rho) := \mathrm{Tr}(F_{a,b}^\dagger \rho) = \langle b\vert a \rangle \mathrm{Tr}(\vert b\rangle \langle a \vert \rho) \\= \langle a \vert \rho \vert b \rangle \langle b \vert a \rangle \stackrel{\eqref{eq:standard_KQD}}{=} \mu_{\rm KD}(a,b \vert \rho).
\end{align*}
This frame $F$ is also a basis and has a canonical dual frame
\begin{equation}\label{eq:KD_dual_frame}
    G_{a,b} = \frac{\vert a \rangle \langle b\vert }{\langle b\vert a \rangle}
\end{equation}
which is the unique dual frame satisfying 
\begin{align}
\langle F_{a,b},G_{a',b'}\rangle_{\mathrm{HS}} &= \mathrm{Tr}(F_{a,b}^\dagger G_{a',b'}) \\&= \mathrm{Tr}\left(\frac{\vert b\rangle \langle a \vert \langle b \vert a \rangle \vert a'\rangle \langle b'\vert}{\langle b'\vert a'\rangle }\right) =\delta_{a,a'}\delta_{b,b'}.  
\end{align}

\subsection{Quasiprobability representations of generalized probabilistic theories}\label{sec:connectivity_preserving_maps}

Our definition of a quasiprobability representation of a generalized probabilistic theory will be related to the notion of a map between different theories. Maps that preserve the sequential composition \emph{and} the identity map are called functors.

\begin{definition}[Functors]\label{def:functors}
    A map $\mathsf{M}: \mathbf{A} \to \mathbf{B}$ between process theories, diagrammatically represented as
    \begin{equation}
        \mathsf{M}\,::\,\tikzfig{FIRST/Connectivity_preserving_sec_1} \,\mapsto\, \tikzfig{FIRST/Connectivity_preserving_sec_2}
    \end{equation}
    is a \emph{functor} if:
    \begin{enumerate}
    \item For any pair of processes  $T_1, T_2$ that can be sequentially composed{,} {the map} preserves sequential composition{:}          \begin{equation}\label{eq:composition_preservation}
        \tikzfig{FIRST/Connectivity_preserving_sec_32} \quad = \quad \tikzfig{FIRST/Connectivity_preserving_sec_33}\,\,.
        \end{equation}
    \item For every system $A${,} {the map} preserves the identity map $\mathrm{id}_A: A \to A$:
\begin{equation}\label{eq:identity_preservation}
    \tikzfig{FIRST/Connectivity_preserving_sec_34} = \tikzfig{FIRST/Connectivity_preserving_sec_35}.
\end{equation}
    \end{enumerate}
\end{definition}

Note that from Def.~\ref{def:functors} a functor is {defined by its} action on systems \textit{and} processes. We denote the processes in the target with different color and thickness of wires for clarity. Moreover, whenever necessary, we also label the box representing the map  in the bottom right corner  since in some cases we will consider different maps in a single diagrammatic equation.

If we wish to account for additional structure arising from all forms of composition allowed by the theory---such as parallel composition or swapping systems---then we must refine our notion to account for this additional structure. As such, we arrived at the notion of \emph{diagram-preserving maps}, which preserve the \emph{full} compositional structure of the theory.

\begin{definition}[Diagram-preserving map]\label{def:diagram_preserving_map}
    A diagram-preserving map $\mathsf{M}: \mathbf{A} \to \mathbf{B}$ between  process theories is a \emph{functor} that is a structure-preserving map between the two process theories, i.e. a map that also preserves parallel compositions, the swap operations, and the empty wires{:}
    \begin{equation*}
    \tikzfig{FIRST/Connectivity_preserving_sec_3} \stackrel{\mathsf{M}}{\mapsto} \tikzfig{FIRST/Connectivity_preserving_sec_4} = \tikzfig{FIRST/Connectivity_preserving_sec_5}.
\end{equation*}
    From a category-theoretic perspective, if we view process theories as symmetric monoidal categories, diagram-preserving maps are \emph{strict symmetric monoidal functors} between the two categories.
\end{definition}

We separated the definition of a functor from that of a diagram-preserving map because our main structural results need only the axioms of the former and not the stronger diagram-preserving requirement. The stronger condition of diagram-preservation is motivated by the notion of an \emph{ontological model} of a generalized probabilistic theory, which is why it was particularly relevant in Ref.~\cite{schmid2024structure}. In this work, however, neither the parallel composition of systems and processes nor the notion of classicality provided by ontological models is of central concern, so we do not require diagram-preservation. 

At the same time we emphasize that a representation of a \emph{theory} ought to respect that theory's \emph{compositional} structure: if a map is to be interpreted as a representation, it should reproduce how processes and systems compose, though not necessarily the  operational description of those processes. Concretely, we introduce a \emph{semi}-functor: a map satisfying only the first axiom in Def.~\ref{def:functors}. Here we therefore focus on sequential composition and adopt the weaker requirement that enforces only the sequential composition axiom; in particular a semi-functor need not preserve identity processes and may map identity processes to non-identity ones. If a semi-functorial representation also preserves parallel composition, it approaches the diagram-preservation notion of Def.~\ref{def:diagram_preserving_map} up to the preservation of the identity, but that stronger condition is not needed for the results developed in the present work; we leave a systematic treatment of parallel composition to future work.

\begin{definition}[Semi-functors]\label{def:semifunctors}
    A map $\mathsf{N}: \mathbf{A} \to \mathbf{B}$ between process theories, diagrammatically represented as
    \begin{equation}
    \mathsf{N}\,::\,\tikzfig{FIRST/Connectivity_preserving_sec_15} \,\mapsto\, \tikzfig{FIRST/Connectivity_preserving_sec_16}
    \end{equation}
    is a \emph{semi-functor} if {f}or any pair of processes $T_1, T_2$ that can be sequentially composed, it satisfies: $\mathsf{N}(T_1 \circ T_2) = \mathsf{N}(T_1) \circ \mathsf{N}(T_2)$. This is diagrammatically represented as
        \begin{equation}
        \tikzfig{FIRST/Connectivity_preserving_sec_17} \quad = \quad \tikzfig{FIRST/Connectivity_preserving_sec_18}\,\,.
        \end{equation}
\end{definition}

The action of a  semi-functor $\mathsf{N}$ on an identity must be of the form
\begin{equation}
     \tikzfig{FIRST/Connectivity_preserving_sec_8} = \tikzfig{FIRST/Connectivity_preserving_sec_9},
\end{equation}
where $\mathsf{N}(\mathrm{id}_A:A \to A) = D_A:\mathsf{N}(A) \to \mathsf{N}(A)$, and $D_A$ is an idempotent map, i.e., for which $D_A \circ D_A = D_A$. This follows from the fact that semi-functors preserve composition for all processes $\mathsf{N}(T\circ G) = \mathsf{N}(T) \circ \mathsf{N}(G)$ for all $T,\,G$ with compatible domain and codomain. Thus, the image of the identity must be an idempotent:

\begin{equation}
    \tikzfig{FIRST/Connectivity_preserving_sec_10} = \tikzfig{FIRST/Connectivity_preserving_sec_11} = \tikzfig{FIRST/Connectivity_preserving_sec_12} = \tikzfig{FIRST/Connectivity_preserving_sec_13} = \,  \tikzfig{FIRST/Connectivity_preserving_sec_14}\,\,.  
\end{equation}

Another crucial requirement for any map that qualifies as a representation of a theory is what we refer to as \emph{empirical adequacy}. We will first define it for the real-valued case. 

\begin{definition}[Empirically adequate maps]
    {Let $\mathbf{G}$ be a tomographically-local GPT.} We say that a map {$\mathsf{M}: \mathbf{G} \to \realvectexamplep$} is \emph{empirically adequate} if it acts as the identity {on} every element in $\mathbf{G}(I,I)$ where $I$ denotes the empty wire. In other words, for every {$p \in \mathbf{G}(I,I)$} we have $\mathsf{M}(p) = p$. 
\end{definition}

Note that this definition, which was taken from Ref.~\cite{schmid2024structure}, is structured relative to a map $\mathsf{M}$ whose domain and codomain theories are subtheories of $\realvectexamplep$. We will later extend this to include a situation where the codomain is $\complexvectexamplep$, given that we are interested in complex-valued representations. Diagrammatically, empirical adequacy for a map $\mathsf{M}: \mathbf{G} \to \realvectexamplep$ implies  that to every state-effect pair  defined on the same system, the map $\mathsf{M}$ satisfies
\begin{equation}
\tikzfig{FIRST/Connectivity_preserving_sec_26}\, = \,\tikzfig{FIRST/Connectivity_preserving_sec_27},
\end{equation}
{and} similarly for any other {closed diagram---that is, element} of {$\mathbf{G}(I,I)$}. In short, if we recall that scalars in $\realvectexamplep$ can be arbitrary numbers in $\mathds{R}$, empirical adequacy of a map $\mathsf{M}$ guarantees that the empirical predictions of the theory---provided by the generalized Born rule which defines closed diagrams---are preserved.

Another structural notion (that we will consider to be a requirement for a quasiprobability representation of a theory) is that of \emph{linearity preservation}. This requirement encodes the idea that the representation should respect probabilistic mixtures: convex combinations of states and processes in the theory should map to the corresponding convex combinations in the representation. 

\begin{definition}[Linearity-preserving real maps]
    We say that a map $\mathsf{M}: \mathbf{A} \to \mathbf{B}$ where $\mathbf{A},\mathbf{B}$ are both subprocess theories of {$\realvectexamplep$} is linearity preserving if for every $\alpha,\beta \in \mathds{R}$ and every {$T_1,T_2\in \mathbf{A}(A,A')$} we have that
        \begin{equation}
            \mathsf{M}\left(\alpha \,T_1 + \beta \,T_2 \right ) = \alpha \,\mathsf{M}(T_1) + \beta \,\mathsf{M}(T_2),
        \end{equation}
        where $\mathsf{M}(T): \mathsf{M}(A)\to\mathsf{M}(A')$.
\end{definition}

Since we want to introduce the notion of a complex-valued quasiprobability representation we need to update the notion of empirical adequacy and of linearity-preservation to consider complex vector spaces. We can do so using the \emph{standard embedding} given by the map $e_{\mathds{C}}: W \hookrightarrow W'$ of any system $W\in \realvectexamplep$ into a system $W'\in \complexvectexamplep$ by the mapping $w \mapsto w+i\cdot 0$.

Given that we have this description, we can update our notion of empirically adequate maps.

\begin{definition}[Empirically adequate complex-valued  maps]\label{def:extension_of_empirical_adequacy}
    We say that a map $\mathsf{M}: \mathbf{G} \to \complexvectexamplep$ is \emph{empirically adequate} if it acts as the composition of the standard embedding and the identity in every element in {$\mathbf{G}(I,I)$, i.e., for every $p \in \mathbf{G}(I,I)$ we have $\mathsf{M}(p) =e_{\mathds{C}}(p)$}. 
\end{definition}

Diagrammatically, taking again the example of a state-effect pair in  {$\mathbf{G}$}, a map $\mathsf{M}$ satisfying empirical adequacy implies the following:

\begin{equation}
\tikzfig{FIRST/Complexification_5} \,= e_{\mathds{C}}\left(\tikzfig{FIRST/Complexification_6}\right)\,= \tikzfig{FIRST/Complexification_7}+\,\,i\cdot 0.  
\end{equation}

Similarly, we can update our notion of linearity preservation.

\begin{definition}[Linearity-preserving real-to-complex maps]\label{def:complex_valued_linearity_preserving}
    We say that a map $\mathsf{M}: \mathbf{A} \to \mathbf{B}$ where $\mathbf{A}\subseteq \realvectexamplep$ and $\mathbf{B}\subseteq \complexvectexamplep$ is linearity preserving if, for every $\alpha,\beta \in \mathds R$ and every $T_1,T_2:A \to A' \in \mathbf{A}(A,A')$, we have that 
    \begin{equation}
        \mathsf{M}(\alpha T_1 + \beta T_2) = e_{\mathds{C}}(\alpha) \mathsf{M}(T_1)+e_{\mathds{C}}(\beta) \mathsf{M}(T_2),
    \end{equation}
    where $\mathsf{M}(T):\mathsf{M}(A)\to \mathsf{M}(A')$.
\end{definition}

The above provides a simple notion of linearity-preservation for any map $\mathsf{M}:\mathbf{A} \to \mathbf{B}$ whenever $\mathbf{A} \subseteq \realvectexamplep$ and $\mathbf{B} \subseteq \complexvectexamplep$. These will satisfy 
\begin{equation}\label{eq:linearity-preservation}   \tikzfig{FIRST/Complexification_9} = \tikzfig{FIRST/Complexification_10}
\end{equation}
for arbitrary processes $T$ and arbitrary $\alpha_i \in \mathds{R}$. Note that, of course, the left-hand-side $\alpha_i \in \mathds{R}$ while for the right-hand-side we have $\alpha_i+i\cdot 0 \in \mathds C$. 

With all the constructions described previously, we are ready to state our definition of {a complex-valued} quasiprobability representation of a generalized probabilistic theory. (Recall that here we only consider finite-dimensional quasiprobability representations, as mentioned at the end of Sec.~\ref{sec:quasiprob}.)

\begin{definition}[Complex-valued quasiprobabilistic representation of a GPT]\label{def:quasiprobability_representation}
    A \emph{complex-valued} quasiprobability representation of a generalized probabilistic theory {$\mathbf{G}$} is an empirically-adequate,  linearity-preserving semi-functor {$\mathsf{Q}:\mathbf{G} \to \fincomplexquasisubstoch$}. Diagrammatically, it is represented as
    \begin{equation}
        \mathsf{Q}\,::\,\tikzfig{FIRST/Connectivity_preserving_sec_28} \,\mapsto\, \tikzfig{FIRST/Connectivity_preserving_sec_29}.
    \end{equation}
    If, additionally, for every system $A$ we have that \begin{equation}\label{eq:determinist_effect_preservation}
        \mathsf{Q}(u_A) = \tikzfig{FIRST/Connectivity_preserving_sec_30} \,=\, \tikzfig{FIRST/Connectivity_preserving_sec_31} = u_{\Lambda_A} = u_{\mathsf{Q}(A)},
    \end{equation}
    i.e. if the quasiprobability representation  $\mathsf{Q}$ preserves the deterministic effect of a fixed system, we call it a \emph{discard-preserving} representation. 
\end{definition}

When $\mathsf{Q}$ is a quasiprobability representation that preserves both deterministic effects and identity morphisms (i.e., is a functor), we call it a \emph{quasistochastic representation}.

It is simple to consider quasiprobability representations which are \emph{not} discard-preserving representations. For instance, the map $\rho \mapsto \langle i\vert \rho \vert j\rangle$---which simply vectorizes a state with respect to a certain orthonormal basis---can be thought of as a quasiprobability representation of a state. However, this map fails to represent the deterministic effect of a given system appropriately: for a qubit, choosing the ordering $$\rho \mapsto \vec{\rho} := (\langle 0 \vert \rho \vert 0 \rangle, \langle 0 \vert \rho \vert 1 \rangle, \langle 1 \vert \rho \vert 0 \rangle, \langle 1 \vert \rho \vert 1\rangle )$$ it maps the identity operator to the covector
\begin{align}
    u_{\mathcal{H}} \mapsto \vec{u}_{\mathcal{H}} = &(u_{\mathcal{H}}(\vert 0 \rangle \langle 0 \vert ),u_{\mathcal{H}}(\vert 0 \rangle \langle 1 \vert,u_{\mathcal{H}}(\vert 1 \rangle \langle 0 \vert,u_{\mathcal{H}}(\vert 1 \rangle \langle 1   \vert)^T \nonumber \\
    = & (1,0,0,1)^T,
\end{align}
rather than the deterministic effect $$u_{\{1,2,3,4\}} = (1,1,1,1)^T \equiv \mathbf{1}_4,$$ relative to $\mathds{C}^{\{1,2,3,4\}}$ which sums all vector components associated to the canonical basis (see Eq.~\eqref{eq:deterministic_effect_summation_functional}).  

If we take the domain GPT to be $\finitequantumtheory$ we can now see that any mapping $\mathcal H \mapsto (F,G)$
where for each system $\mathcal{B}(\mathcal{H})_{sa}$ we associate a pair of a frame $F$ and a dual frame $G$, yields a complex-valued quasiprobability representation of $\finitequantumtheory$ as by Def.~\eqref{def:quasiprobability_complex_valued}. This holds because, if we carefully see the descriptions in our definition:

\begin{enumerate}
    \item The frame and its dual may contain non-Hermitian elements of $\mathcal{B}(\mathcal{H})$ (hence elements lying outside $\finitequantumtheory$). Consequently, the resulting representation can be complex-valued.
    \item As reviewed above in Eq.~\eqref{eq: empirical_adequacy_frames_nodiagrams}, the product at the representation level reproduces the Born rule, so the representation is empirically adequate.
    \item As is well-known, the frame representation respects sequential composition: the representation of a composition of two maps equals the composition of their representations. Moreover, as shown in Eq.~\eqref{eq:idempotency_frame_representation_equation}, the generic representation of the identity channel is an idempotent, which motivates considering semi-functorial representations.
\end{enumerate}

To conclude, we note that the converse will also hold: every quasiprobability representation of quantum theory (viewed as a GPT) can be realized as some frame representation. We discuss this correspondence and its conditions later on in Sec.~\ref{sec:examples_quasiprobability_quantum_theory}.
}

\subsection{Complexification functor}\label{sec:complexification_functor}

Since our main structural result concerns complex-valued quasiprobability distributions, an important aspect of our construction is that it involves a certain complexification procedure, which we now describe in detail.

The \emph{complexification functor} $\mathds{C}: \realvectexamplep \to \complexvectexamplep$ will be defined in two steps: first we define the complexification of \textit{a vector space}, and second, we define the complexification of \textit{linear maps} between real vector spaces. Note that we will denote the functor by $\mathds{C}$ using the same symbol as the set of complex numbers $\mathds{C}$. We will follow closely Ref.~\cite{conrad2014complexification}. 

We start by defining the complexification of a real vector space $W$, denoted $W_{\mathds{C}}$.

\begin{definition}[Complexification of a real vector space and the standard embedding]\label{def: complexification_vector_spaces}
    Let $W$ be a real vector space. We say that its \emph{complexification} is the complex vector space $W_{\mathds{C}} = W \oplus W \cong W+iW$, $W \cap iW = \{0\}$, with scalar  multiplication law $$(a+bi)(w_1,w_2) := (aw_1-bw_2,bw_1+aw_2),$$
    for every scalar $a+ib \in \mathds{C}$ and every $w_1,w_2 \in W$. 
\end{definition}

The choice of multiplication law just described allows us to make the association $W_{\mathds{C}} \simeq W+iW$ with $W \cap iW = \{0\}$. This mapping just described is  intuitive since it follows from the desiderata
\begin{equation*}
    (a+ib)(w_1+iw_2) = (aw_1-bw_2)+i(bw_1+aw_2).
\end{equation*}
According to this multiplication rule we have that $i(w,0) = (0,w)$ we can write \begin{align}
(w_1,w_2) &= (w_1,0)+(0,w_2) = (w_1,0)+i(w_2,0) \nonumber\\
&= e_{\mathds{C}}(w_1)+ie_{\mathds{C}}(w_2)
\end{align}
where $e_{\mathds{C}}: w \mapsto (w,0) \simeq w+i\cdot 0$ is the standard embedding.

With this notion of complexification, we view $W_{\mathds{C}}$ as generated by two copies of $W$. From this association it follows the usual associations such as 
\begin{align}
    \mathds{C} &= \mathds{R} \oplus \mathds{R} \simeq \mathds{R}+i\mathds{R} \equiv \mathds{R}_{\mathds{C}},
\end{align}
from which we know that any complex number is written as a real and imaginary ``real part'', and ``generated'' by $\mathds{C}$-linear combinations of elements of $\mathds{R}$. Similarly, any element of a $C^*$-algebra~\cite{bru2023cstaralgebras} can be generated by complex combinations of self-adjoint ones~\cite{landsman2012mathematical,landsman2017foundations}, i.e., 
\begin{equation}
    \mathfrak{U} = \mathfrak{U}_{\rm sa}\oplus \mathfrak{U}_{\rm sa} \simeq \mathfrak{U}_{\rm sa}+i\mathfrak{U}_{\rm sa} \equiv \mathfrak{U}_{\mathds{C}}.
\end{equation}
For us it will be relevant that for the finite-dimensional algebras $\mathfrak{U}=\mathcal{B}(\mathcal{H})$ we end up having
\begin{equation}\label{eq:complexification_self_adjoint}
    \mathcal{B}(\mathcal{H}) \simeq \mathcal{B}(\mathcal{H})_{\rm sa}+i\mathcal{B}(\mathcal{H})_{\rm sa}\equiv (\mathcal{B}(\mathcal{H})_{\rm sa})_{\mathds{C}},
\end{equation}
from which we conclude that we can view $\mathcal{B}(\mathcal{H})$ as the complexification of $\mathcal{B}(\mathcal{H})_{\rm sa}$.
\begin{definition}[Complexification of a linear map]\label{def: complexification_linear_map}
    Let $W,\,V$ be real vector spaces, and $f: W \to V$ be an $\mathds{R}$-linear map. We define the complexification $f_{\mathds{C}}$ of $f$ as the $\mathds{C}$-linear map from $W_{\mathds{C}}$ to $V_\mathds{C}$ via  
    \begin{align}
        f_{\mathds{C}} (w_1,w_2) &:= (f(w_1),f(w_2)) \label{eq: linear_map_complexification}\\
        &= e_{\mathds{C}}(f(w_1)) +i \, e_{\mathds{C}}(f(w_2)).\nonumber
    \end{align}
\end{definition}

From above we can see that $f$ defines $f_{\mathds{C}}$ via (using $(w_1,w_2) = e_{\mathds{C}}(w_1)+ie_{\mathds{C}}(w_2) \equiv w_1 + i w_2$) the simple equation $$f_{\mathds{C}}(w_1 +i w_2) := f(w_1)+if(w_2).$$ It is easy to see that $f_{\mathds{C}}$ is $\mathds{C}$-linear whenever $f$ is $\mathds{R}$-linear. For a fixed $V \stackrel{f}{\to}W$ the complexification $V_{\mathds{C}} \stackrel{f_{\mathds{C}}}\to W_{\mathds{C}}$ is also the \emph{unique} map that makes the diagram
\begin{equation}\label{eq: commutation complexification linear maps}
    \begin{tikzcd}
W \arrow[rr, "f"] \arrow[dd, "e_{\mathds{C}}"] &  & V \arrow[dd, "e_{\mathds{C}}"] \\
                                               &  &                                \\
W_{\mathds{C}} \arrow[rr, "f_{\mathds{C}}"]    &  & V_{\mathds{C}}                
\end{tikzcd}
\end{equation}
commute. 
\begin{theorem}[Adapted from Ref.~\cite{conrad2014complexification}]\label{theorem: unique_complex_extension}
    Let $W$ be a real vector space and $V$ be a complex vector space. For each $\mathds{R}$-linear map $W \stackrel{\hat f}{\to} V$ there is a unique $\mathds{C}$-linear map $W_\mathds{C} \stackrel{f}\to V$ making the diagram
    
    \begin{equation}
    \begin{tikzcd}
                                             &  & W_{\mathds{C}} \arrow[dd, " f", dotted] \\
W \arrow[rru, "e_{\mathds{C}}"] \arrow[rrd, "\hat f"] &  &                                             \\
                                             &  & V                                          
\end{tikzcd}
    \end{equation}
    commute, where $W \stackrel{e_{\mathds{C}}}\to W_{\mathds{C}}$ is the standard embedding. 
\end{theorem}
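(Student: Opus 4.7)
The plan is to prove the universal property by explicitly constructing the unique extension and then verifying each required property. Since every element of $W_{\mathds{C}}$ has a unique decomposition as $w_1 + i w_2$ with $w_1, w_2 \in W$ (this uses $W \cap iW = \{0\}$ from Definition~\ref{def: complexification_vector_spaces}), I would simply \emph{define} the candidate extension $f : W_{\mathds{C}} \to V$ by
\begin{equation}
    f(w_1 + i w_2) := \hat{f}(w_1) + i\, \hat{f}(w_2),
\end{equation}
where the $i$ on the right-hand side is the intrinsic complex structure of $V$. Well-definedness is immediate from uniqueness of the decomposition, and commutativity of the diagram is a direct computation: $f \circ e_{\mathds{C}}(w) = f(w + i\cdot 0) = \hat{f}(w) + i \hat{f}(0) = \hat{f}(w)$, using that $\hat{f}$ is $\mathds{R}$-linear and hence sends $0$ to $0$.

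Next, I would verify $\mathds{C}$-linearity. Additivity follows directly from additivity of $\hat{f}$ together with the definition of addition in $W_{\mathds{C}} = W \oplus W$. The subtle step is compatibility with complex scalar multiplication, because the multiplication law on $W_{\mathds{C}}$ given in Definition~\ref{def: complexification_vector_spaces} twists the two copies of $W$. Explicitly, using $(a+bi)(w_1+iw_2) = (aw_1 - bw_2) + i(bw_1 + aw_2)$, I would compute
\begin{align}
f\big((a+bi)(w_1+iw_2)\big) &= \hat{f}(aw_1 - bw_2) + i\, \hat{f}(bw_1 + aw_2) \nonumber \\
&= a\hat{f}(w_1) - b\hat{f}(w_2) + i\big(b\hat{f}(w_1) + a\hat{f}(w_2)\big) \nonumber \\
&= (a+bi)\big(\hat{f}(w_1) + i\hat{f}(w_2)\big) \nonumber \\
&= (a+bi)\, f(w_1+iw_2),
\end{align}
where the penultimate equality is just the standard expansion of $(a+bi)(x + iy)$ in the complex vector space $V$. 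This is the only place where one must be careful, and it is precisely the matching between the formal twisting in $W_{\mathds{C}}$ and the genuine complex structure on $V$ that makes the construction work.

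Finally, uniqueness follows from a one-line argument: any $\mathds{C}$-linear map $g : W_{\mathds{C}} \to V$ satisfying $g \circ e_{\mathds{C}} = \hat{f}$ must obey
\begin{equation}
    g(w_1 + i w_2) = g(e_{\mathds{C}}(w_1)) + i\, g(e_{\mathds{C}}(w_2)) = \hat{f}(w_1) + i\, \hat{f}(w_2) = f(w_1+iw_2),
\end{equation}
so $g = f$. I do not expect any serious obstacle here; this is a standard universal property of complexification, and the only place demanding care is checking that the twisted scalar action on $W_{\mathds{C}}$ is intertwined with the complex structure of $V$ under $f$, which the computation above makes transparent.
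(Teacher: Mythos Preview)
Your proposal is correct and follows exactly the construction the paper indicates: the paper states (without detailed proof, citing Conrad) that the extension is given by $f(w_1,w_2)=\hat f(w_1)+i\hat f(w_2)$ and verifies $\hat f=f\circ e_{\mathds{C}}$ via the same computation you wrote. You simply supply the details the paper omits (the explicit check of $\mathds{C}$-linearity against the twisted scalar action and the uniqueness argument), so your write-up is a strict elaboration of the paper's sketch rather than a different approach.
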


Above the dotted arrow indicates that the extension $f$ exists and is induced uniquely by $\hat f$. In category theory, results such as Theorem~\ref{theorem: unique_complex_extension} are common to many  constructions and are known as \emph{universal properties}. In particular, the proof of the theorem above is constructive and $f$ is given by  
\begin{equation}
    f: W_{\mathds{C}} \ni w = (w_1,w_2) \mapsto \hat f(w_1)+i \hat f(w_2) \in V. 
\end{equation}
This result is particularly relevant to us as it shows that $\hat f$ has a unique $\mathds{C}$-linear extension and it satisfies 
\begin{equation}\label{eq:universal_property_equation}
\hat f = f \circ e_{\mathds{C}},
\end{equation}
$$w \mapsto f(w+i\cdot 0) = \hat f(w)+i\cdot \hat f(0) = \hat f(w)+i\cdot 0,$$ 
where the first equation follows from the definition of $f$ and the second follows from $\hat f$ being $\mathds{R}$-linear.

As a concrete example of a mapping from a real to a complex vector space we can take the standard KD quasiprobability distribution $\mu_{\rm KD}$ where we restrict its domain to be $\mathcal{B}(\mathcal{H})_{\rm sa}$. This is a mapping from $\mathcal{B}(\mathcal{H})_{\rm sa}$---viewed as a real-vector space---to the space $\mathds{C}^{d \times d}$ (where $d$ is the dimension of the underlying Hilbert space). In general, this will be a complex-valued distribution. For instance, take the canonical example where $\mathcal{B}(\mathbb{C}^2)$ and the two KD basis are taken to be the mutually unbiased bases $\mathbb{a} = \{\vert 0\rangle\,\vert 1\rangle \}$ and $\mathbb{b} = \{\vert +\rangle, \vert -\rangle\}$. In this case, 
\begin{equation}
    \mu_{\rm KD}:: \ni\rho \mapsto \left(\begin{matrix}
        \langle +\vert \rho \vert 0\rangle\langle 0\vert +\rangle  & \langle +\vert \rho \vert 1\rangle \langle 1\vert +\rangle   \\
        \langle -\vert \rho \vert 0\rangle \langle 0\vert -\rangle & \langle -\vert\rho\vert 1\rangle\langle 1\vert -\rangle  
    \end{matrix}\right).
\end{equation}
This KDQ distribution of a normalized state is complex-valued iff $\rho$ lies outside the X-Z plane~\cite{li2025multistateimaginaritycoherencequbit}. The possible values $\langle a\vert \rho \vert b \rangle \langle b \vert a \rangle \in \mathds C$ can take have been completely characterized in Refs.~\cite{fernandes2024unitary,pratapsi2025elementarycharacterizationbargmanninvariants,xu2025numericalrangesbargmanninvariants,zhang2025geometry,li2025bargmann}. 

We are now ready to define the complexification functor $\mathds{C}: \realvectexamplep \to \complexvectexamplep$. We define it by its action on objects $W \in \realvectexamplep$ as $$\mathds{C}(W) = W_{\mathds{C}},$$ where $W_\mathds{C}$ is the complexification of the vector space $W$ as by Def.~\ref{def: complexification_vector_spaces}, and its action on morphisms $f: W \to V$ as $$\mathds{C}(f: W \to V) = f_\mathds{C}:W_{\mathds{C}} \to V_{\mathds{C}},$$ where $f_{\mathds{C}}$ is the complexification of the linear map $f$ as by Def.~\ref{def: complexification_linear_map}. These choices clearly map objects and morphisms from $\realvectexamplep$ into $\complexvectexamplep$. Also, note that for any object $W$ we have that for all $(w_1,w_2) \in \mathds{C}(W)$, \begin{align*}
\mathds{C}(\mathrm{id}_{W})(w_1,w_2)&=(\mathrm{id}_{W}(w_1),\mathrm{id}_{W}(w_2)) \\
&= \mathrm{id}_{\mathds{C}(W)}(w_1,w_2).
\end{align*}
Diagrammatically, this can be represented as follows:
\begin{equation}\label{eq:complexification_preserves_the_identity}
\tikzfig{FIRST/Complexification_1} \quad = \quad  \tikzfig{FIRST/Complexification_2}\,\,.
\end{equation}
In our work, we also use a color scheme to distinguish between the complexification functor $\mathds{C}$ (in pink) and a generic map $\mathsf{M}$ (in blue). Nevertheless, we always indicate the type of map in the right corner of the colored boxes when needed.

In addition to identity preservation, for every pair of $\mathds{R}$-linear maps $f:X\to Y$ and $g: Y \to Z$ we have that 
\begin{align}
    \mathds{C}(g \circ f)(w_1,w_2) &:= (g \circ f(w_1),g \circ f(w_2))\\
    &=(g(f(w_1)),g(f(w_2))) \\
    &=\mathds{C}(g)(f(w_1),f(w_2))\\
    &=\mathds{C}(g) \circ \mathds{C}(f)(w_1,w_2),
\end{align}
for every $(w_1,w_2) \in \mathds{C}(W)$. Diagrammatically, this can be represented as: 
\begin{equation}
    \tikzfig{FIRST/Complexification_3} = \tikzfig{FIRST/Complexification_4}\,\,.
\end{equation}
This shows that $\mathds{C}$ is indeed a functor. Note, moreover, that to show the above we have not used anything specific to the functor $\mathds{C}$. The only requirement used so far is that functions are lifted from their action on one space to their action on \emph{tuples}. 

As a matter of fact, it is simple to see that $\mathds{C}$ is also linearity-preserving in the sense of Def.~\ref{def:complex_valued_linearity_preserving}. Because of that it preserves the span, as we show in Appendix~\ref{app:span_proofs}.

The functor $\mathds{C}$ is not only a functor but also a faithful strong monoidal functor, a technical result we prove in Appendix~\ref{app: complexification proof}. While we were unable to locate an explicit proof in the literature that the complexification functor is a faithful strong monoidal functor, related observations have appeared in previous works. For instance, Ref.~\cite{baez2011division} discusses the complexification process as a functor---apparently assuming its strong monoidal nature without proof, as indeed we see in Appendix~\ref{app: complexification proof} follows from elementary calculations.

\begin{lemma}\label{lemma:faithful_strong_monoidal_functor}
    The complexification functor $\mathds{C}:\realvectexamplep\to \complexvectexamplep$ is a faithful strong monoidal functor. 
\end{lemma}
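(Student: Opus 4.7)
The plan is to verify separately the two conditions: \emph{faithfulness} and \emph{strong monoidality}. Faithfulness follows almost immediately from Def.~\ref{def: complexification_linear_map} together with the universal property of Theorem~\ref{theorem: unique_complex_extension}. Strong monoidality requires exhibiting coherence data (a unit isomorphism and a natural tensor isomorphism) and then checking the coherence axioms; the structural data is determined by the obvious formulas, and the coherence checks are routine diagrammatic computations.

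For faithfulness, suppose $f,g:V\to W$ are $\mathds{R}$-linear maps with $\mathds{C}(f)=\mathds{C}(g)$. Evaluating both sides on elements in the image of the standard embedding $e_{\mathds{C}}:V\hookrightarrow V_{\mathds{C}}$ and applying Def.~\ref{def: complexification_linear_map} gives $(f(v),0)=(g(v),0)$ for every $v\in V$, hence $f=g$. Equivalently, this is the uniqueness half of Theorem~\ref{theorem: unique_complex_extension} applied to the $\mathds{R}$-linear map $e_{\mathds{C}}\circ f = e_{\mathds{C}}\circ g$.

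For the strong monoidal structure, the unit coherence isomorphism is the $\mathds{C}$-linear map $\eta:\mathds{C}\xrightarrow{\sim}\mathds{R}_{\mathds{C}}=\mathds{C}(\mathds{R})$ sending $a+ib\mapsto(a,b)$. The tensor coherence isomorphism at objects $V,W\in\realvectexamplep$ is the natural map
\begin{equation*}
\mu_{V,W}:\mathds{C}(V)\otimes_{\mathds{C}}\mathds{C}(W)\longrightarrow\mathds{C}(V\otimes_{\mathds{R}} W),
\end{equation*}
defined by
\begin{equation*}
(v_1+iv_2)\otimes_{\mathds{C}}(w_1+iw_2)\;\longmapsto\;(v_1\otimes w_1-v_2\otimes w_2)+i(v_1\otimes w_2+v_2\otimes w_1),
\end{equation*}
which is well-defined since the formula is $\mathds{C}$-bilinear in its two complex arguments. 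A two-sided inverse is obtained by applying Theorem~\ref{theorem: unique_complex_extension} to the $\mathds{R}$-bilinear map $V\times W\to\mathds{C}(V)\otimes_{\mathds{C}}\mathds{C}(W)$, $(v,w)\mapsto e_{\mathds{C}}(v)\otimes_{\mathds{C}}e_{\mathds{C}}(w)$, which first lifts to an $\mathds{R}$-linear map $V\otimes_{\mathds{R}} W\to\mathds{C}(V)\otimes_{\mathds{C}}\mathds{C}(W)$ and then to a $\mathds{C}$-linear map on $\mathds{C}(V\otimes_{\mathds{R}} W)$. Naturality in both arguments follows from the explicit formula by checking it on elementary tensors in the image of $e_{\mathds{C}}\otimes_{\mathds{C}}e_{\mathds{C}}$ and extending $\mathds{C}$-bilinearly.

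The main obstacle, to which the formal proof in Appendix~\ref{app: complexification proof} is devoted, is the verification of the coherence axioms for $(\mathds{C},\mu,\eta)$: the associativity pentagon relating $\mu$ to the associators of the real and complex tensor products, the left and right unit triangles relating $\eta$ and $\mu$ to the unitors, and---since both process theories are symmetric monoidal---the symmetry hexagon relating $\mu$ to the swap. In each case, the diagram reduces, via the explicit formulas for $\eta$ and $\mu$ and the definitions of the associators, unitors and braidings in $\realvectexamplep$ and $\complexvectexamplep$, to an elementary identity about the decomposition of elements of a complexification into their real and imaginary parts; no step is conceptually difficult, but one must be careful about bookkeeping the $\pm$ signs arising from products of imaginary units.
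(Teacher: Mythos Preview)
Your proposal is correct and follows essentially the same route as the paper's Appendix~\ref{app: complexification proof}: define the unit and tensor coherence maps by the obvious formulas, show they are natural isomorphisms, and then verify the associativity and unit coherence diagrams by direct computation on simple tensors. The only notable (and harmless) difference is that you construct $\mu_{V,W}^{-1}$ basis-freely via the universal property of Theorem~\ref{theorem: unique_complex_extension}, whereas the paper builds it explicitly on a chosen basis; your way is slightly cleaner, and your mention of the symmetry hexagon goes a bit beyond what the lemma strictly requires (the paper does not verify it).
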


As a category-theory remark, we recall a point made in Ref.~\cite[Remark~2.1, pg.~6]{schmid2024structure}: any strong monoidal  functor between symmetric monoidal categories (i.e. process theories as we define here) can be extended to a diagram-preserving map. Because of that, $\mathds{C}$ is effectively---up to this extension---a diagram-preserving map.

\section{Structure theorems for complex-valued  representations}\label{sec:structure_theorems}

\subsection{Functors mapping GPT systems to complex vector spaces}\label{sec:functors_to_complex_vector_spaces}

We are now ready to state and prove our main results. Recall that from our presentation  {$\mathbf{G}$} is a process theory of $\mathds{R}$-linear spaces and maps while $\complexvectexamplep$ is a process theory of $\mathds{C}$-linear spaces and maps. In what follows, we will denote the monoidal unit $I$ of the domain GPTs $\mathbf{G}$ as $\mathds{R}$. We start with our main structure theorem for functors from {$\mathbf{G}$} to $\complexvectexamplep$.

\begin{theorem}[Structure theorem for functors]\label{thm: complex_valued_structure_theorem}
    Let $\mathbf{G}$ be a tomographically-local finite-dimensional generalized probabilistic theory. Any linearity-preserving and  empirically-adequate functor {$\mathsf{M}: \mathbf{G} \to \complexvectexamplep$} can be decomposed as \begin{equation}\label{eq:CrepTrans}
    \tikzfig{FIRST/Structure_theorem_1}\quad = \quad \tikzfig{FIRST/Structure_theorem_2},
    \end{equation}
    where, for every fixed $A$, $\chi_A$ is an invertible linear map within $\complexvectexamplep$ uniquely determined by the action of $\mathsf{M}$ on states $s \in \mathbf{G}(\mathds R,A)$. 
\end{theorem}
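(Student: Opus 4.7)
The plan is to construct, for each system $A$ of $\mathbf{G}$, an invertible $\mathds{C}$-linear map $\chi_A:\mathds{C}(A)\to\mathsf{M}(A)$ that intertwines $\mathds{C}$ and $\mathsf{M}$ on states, and then to promote this intertwining from states to arbitrary transformations via the decomposition of Lemma~\ref{lemma:decomposition_GPT_processes}.

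First I would pin down $\mathsf{M}(\mathds R)$. Since $\mathsf{M}$ is a functor it preserves the identity on $\mathds R$, and empirical adequacy (Def.~\ref{def:extension_of_empirical_adequacy}) forces $\mathsf{M}$ to act as $e_{\mathds C}$ on every scalar in $\mathbf{G}(\mathds R,\mathds R)$, so effectively $\mathsf{M}(\mathds R)\cong\mathds C$. This makes the expressions $\mathsf{M}(s)(1)$ meaningful for each state $s\in\mathbf{G}(\mathds R,A)$. I would then build $\chi_A$ in two stages. Since $\mathrm{span}_{\mathds R}\mathbf{G}(\mathds R,A)=A$, linearity-preservation of $\mathsf{M}$ lets me define an $\mathds R$-linear map $\hat\chi_A:A\to\mathsf{M}(A)$ by $\hat\chi_A(s(1))=\mathsf{M}(s)(1)$ extended by linearity. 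The universal property of complexification (Theorem~\ref{theorem: unique_complex_extension}) then yields a unique $\mathds C$-linear extension $\chi_A:\mathds C(A)\to\mathsf{M}(A)$.

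Next I would establish invertibility. For \emph{injectivity}, I note that functoriality and empirical adequacy together give, for every effect $e\in\mathbf{G}(A,\mathds R)$ and every state $s$,
\begin{equation}
\mathsf{M}(e)(\chi_A(s(1)))=\mathsf{M}(e\circ s)(1)=e_{\mathds C}(e\circ s)=\mathds C(e)(\mathds C(s)(1)).
\end{equation}
By $\mathds C$-linear extension this yields the intertwining $\mathsf{M}(e)\circ\chi_A=\mathds C(e)$ on all of $\mathds C(A)$. Tomographic locality implies that the effects span $A^\ast$ and therefore their complexifications $\{\mathds C(e)\}_e$ separate points of $\mathds C(A)$, so $\chi_A$ is injective. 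For \emph{surjectivity}, I apply Lemma~\ref{lemma:decomposition_GPT_processes} to write $\mathrm{id}_A=\sum_{ij}r_{ij}\,s_i\circ e_j$; identity-preservation plus linearity-preservation yield $\mathrm{id}_{\mathsf{M}(A)}=\sum_{ij}r_{ij}\,\mathsf{M}(s_i)\circ\mathsf{M}(e_j)$, so any $m\in\mathsf{M}(A)$ equals $\sum_{ij}r_{ij}\,\mathsf{M}(e_j)(m)\,\chi_A(s_i(1))$ and hence lies in the image of $\chi_A$.

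Finally, I would verify Eq.~\eqref{eq:CrepTrans} in the form $\mathsf{M}(T)\circ\chi_A=\chi_B\circ\mathds C(T)$ by evaluating both sides on a state $s\in\mathbf{G}(\mathds R,A)$: functoriality gives LHS $=\mathsf{M}(T\circ s)(1)$, while definition of $\chi_B$ together with the fact that $T\circ s$ is a state of $B$ gives RHS $=\mathsf{M}(T\circ s)(1)$. Since states span $A$ over $\mathds R$ and therefore (by linearity-preservation) span $\mathds C(A)$ over $\mathds C$, the identity extends to all of $\mathds C(A)$; uniqueness of $\chi_A$ from its action on states is built into the construction. The most delicate step is Step~3: although each ingredient (tomographic locality, empirical adequacy, identity-preservation, linearity-preservation) is needed elsewhere too, the combination of the intertwining $\mathsf{M}(e)\circ\chi_A=\mathds C(e)$ with the tomographic expansion of $\mathrm{id}_A$ is what makes the argument work, and keeping track of the switch between the real spans (states/effects in $\mathbf{G}$) and their $\mathds C$-linear spans (inside $\mathds C(A)$) is the main bookkeeping obstacle.
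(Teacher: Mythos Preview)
Your proposal is correct and follows essentially the same line as the paper's proof, with a modest difference in execution. The paper constructs both $\chi_A$ and a candidate inverse $\phi_A$ \emph{explicitly} from the tomographic resolution of the identity, namely $\chi_A=\sum_{ij}t_{ij}\,\mathsf{M}(s_i)\circ\mathds{C}(e_j)$ and $\phi_A=\sum_{ij}t_{ij}\,\mathds{C}(s_i)\circ\mathsf{M}(e_j)$, and then verifies $\phi_A\circ\chi_A=\mathrm{id}_{\mathds{C}(A)}$ and (using identity-preservation) $\chi_A\circ\phi_A=\mathrm{id}_{\mathsf{M}(A)}$. You instead define $\chi_A$ abstractly via the universal property of complexification (Theorem~\ref{theorem: unique_complex_extension}) and argue invertibility without naming the inverse: injectivity from the intertwining $\mathsf{M}(e)\circ\chi_A=\mathds{C}(e)$ together with the fact that complexified effects separate points, surjectivity from the identity resolution. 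Both routes use the same ingredients (tomographic locality, linearity-preservation, empirical adequacy, functoriality); yours is slightly more economical here, while the paper's explicit $\phi_A$ pays off later since the same formula is reused verbatim in the semi-functor case (Theorem~\ref{thm: connectivity_preserving theorem}), where $\phi_A\neq\chi_A^{-1}$ and an abstract invertibility argument would no longer apply.
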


\begin{proof}
    From tomographic locality of  {$\mathbf{G}$} (and therefore Lemma~\ref{lemma:decomposition_GPT_processes})  and linearity-preservation of $\mathsf{M}$ it follows that for every $A$ there exists $\{e_j\}_j$ spanning $A^*$ and $\{s_i\}_i$ spanning $A$ such that 
\begin{equation}
	\tikzfig{FIRST/Structure_theorem_3}\ =\ \tikzfig{FIRST/Structure_theorem_4}\ = \sum_{ij} \mathcolorbox{complex}{r_{ij}}\ \tikzfig{FIRST/Structure_theorem_5}.
\label{eq:step1}
\end{equation}

Above and henceforth, we use the notation  $\mathcolorbox{complex}{a} = \mathds{C}(a)$ for $a \in \mathds{R}$. Using Lemma~\ref{lemma:decomposition_GPT_processes} applied to the identity process $\mathrm{id}_A: A \to A$, and the fact that the GPT is finite-dimensional, we find that there is always a resolution of the identity process,
\begin{equation}\label{eq:partition_of_unity_GPT}
    \tikzfig{FIRST/Structure_theorem_79}\,=\sum_{ij}t_{ij}\,\tikzfig{FIRST/Structure_theorem_80},
\end{equation}
where $t_{ij}\in \mathds{R}$ for every $i,j$,  provided by the same set of states and effects as in Eq.~\eqref{eq:step1}. Hence, for every $s \in \mathbf{G}(\mathds{R},A)$, and every $e \in \mathbf{G}(A,\mathds R)$ we have 
\begin{equation}\label{eq:decomposition_states}
    \tikzfig{FIRST/Structure_theorem_81} \,=\, \sum_{ij}t_{ij} \,\,\tikzfig{FIRST/Structure_theorem_82},
\end{equation}
and
\begin{equation}\label{eq:decomposition_effects}
    \tikzfig{FIRST/Structure_theorem_83} \,=\, \sum_{ij}t_{ij}\,\,\tikzfig{FIRST/Structure_theorem_84}\,.
\end{equation}
From empirical adequacy (Def.~\ref{def:extension_of_empirical_adequacy}) we obtain 
\begin{equation}
\tikzfig{FIRST/Structure_theorem_20} = \tikzfig{FIRST/Structure_theorem_21} 
\equiv \tikzfig{FIRST/Structure_theorem_22}+\,\,0\cdot i,
\end{equation}
for every state $s$ and effect $e$.  Putting those facts together, we conclude that for every state $s: \mathds{R}\to B$ in $\mathbf{G}$ we can use the fact that $\mathsf{M}$ is a functor, together with linearity-preservation (as by Def.~\ref{def:complex_valued_linearity_preserving}) and empirical adequacy to rewrite $\mathsf{M}(s):\mathds{C}\to V_B$ as 
\begin{align*}
    &\tikzfig{FIRST/Structure_theorem_90}\,\stackrel{\eqref{eq:decomposition_states}}{=}\,\tikzfig{FIRST/Structure_theorem_85} = \tikzfig{FIRST/Structure_theorem_91} \\&=\tikzfig{FIRST/Structure_theorem_86}\,=\,\tikzfig{FIRST/Structure_theorem_87}\\
    &=\tikzfig{FIRST/Structure_theorem_88}\,=:\,\tikzfig{FIRST/Structure_theorem_89},
\end{align*}
where we have defined 
\begin{equation}\label{eq:definition_chi_map}
    \tikzfig{FIRST/Structure_theorem_74}\ :=\ \sum_{ij}\mathcolorbox{complex}{t_{ij}}\ \tikzfig{FIRST/Structure_theorem_75}\,\,.
\end{equation}
By repeating the same argument---\emph{mutatis mutandis}, with states replaced by effects and using Eq.~\eqref{eq:decomposition_effects}---we obtain that for every effect $e:A \to \mathds{R}$, the corresponding map $\mathsf{M}(e):V_A\to \mathds{C}$ can be rewritten accordingly, i.e.
\begin{equation}
    \tikzfig{FIRST/Structure_theorem_92} \quad = \quad  \tikzfig{FIRST/Structure_theorem_93}
\end{equation}
where $\phi_A$ is given by: 
\begin{equation}\label{eq:restricted_map_phi}
    \tikzfig{FIRST/Structure_theorem_76}\ :=\ \sum_{ij}\mathcolorbox{complex}{t_{ij}}\ \tikzfig{FIRST/Structure_theorem_77}\,.
\end{equation}

Combining these two properties, together with the fact that the complexification functor is a linearity-preserving map, we end up with 

\begin{align*}
    \tikzfig{FIRST/Structure_theorem_15}\ &= \tikzfig{FIRST/Structure_theorem_16} = \tikzfig{FIRST/Structure_theorem_17}\\
    &=\tikzfig{FIRST/Structure_theorem_18}=\tikzfig{FIRST/Structure_theorem_19}.
\end{align*}

Next, using Eqs.~\eqref{eq:definition_chi_map} and~\eqref{eq:restricted_map_phi} we show that for every $A$ in $\mathbf{G}$ we  have that $\phi_A$ is the left inverse of $\chi_A$ since 
\begin{widetext}
\begin{align}\label{eq:phi_left_inverse_CA}
    \tikzfig{FIRST/Structure_theorem_25} = \sum_{i,j,i',j'}\mathcolorbox{complex}{t_{ij}t_{i'j'}}\ \tikzfig{FIRST/Structure_theorem_94} = \sum_{i,j,i',j'}\mathcolorbox{complex}{t_{ij}t_{i'j'}}\ \tikzfig{FIRST/Structure_theorem_95}=  \sum_{i,j,i',j'}\mathcolorbox{complex}{t_{ij}t_{i'j'}}\ \tikzfig{FIRST/Structure_theorem_96} =\,  \tikzfig{FIRST/Structure_theorem_97} =  \tikzfig{FIRST/Structure_theorem_24}.
\end{align}
\end{widetext}
In the last equation we jointly used Eq.~\eqref{eq:partition_of_unity_GPT} and Eq.~\eqref{eq:complexification_preserves_the_identity}. 

Equation~\eqref{eq:phi_left_inverse_CA} also implies that $\chi_A$ is injective. To see this, we start recalling that the maps $\chi_A$ and $\phi_A$ are $\mathds{C}$-linear maps, implying that $\chi_A(0) = 0$ and $\phi_A(0) =0$; then, for every $a \in \mathds{C}(A)$ we have that $$\chi_A(a)=0 \Leftrightarrow a=\mathrm{id}_{\mathds C(A)}(a)=\phi_A\circ \chi_A(a)=\phi_A(0)=0,$$ meaning that for any $x,y$, $\chi_A(x)=\chi_A(y)$ implies $x=y$ (since $\chi_A(x-y)=0$ due to $\mathds{C}$-linearity). 

Finally, we consider the representation of the identity, 
\begin{equation}
\tikzfig{FIRST/Structure_theorem_32}\ =\ \tikzfig{FIRST/Structure_theorem_33}\,\,,
\end{equation}
which is a consequence of $\mathsf{M}$ being a functor. Since $\mathds{C}$ is also a functor we find that
\begin{widetext}
\begin{equation}\label{eq: phi_right_inverse_chi}
\tikzfig{FIRST/Structure_theorem_37} \,= \tikzfig{FIRST/Structure_theorem_98} \quad =\quad  \tikzfig{FIRST/Structure_theorem_99}\quad=\quad\tikzfig{FIRST/Structure_theorem_100}  \,\,=\,\, \tikzfig{FIRST/Structure_theorem_32} \,=\,  \tikzfig{FIRST/Structure_theorem_34}
\end{equation}
\end{widetext}
which means that $\phi_A$ is also the right inverse of $\chi_A$, and hence that $\chi_A$ is surjective and we can write $\phi_A = \chi_A^{-1}$.  
\end{proof}

While we use a diagrammatic representation for the maps $\chi_A$ it is important to note that the only diagrammatic elements that have physical relevance are those that are defined within {$\mathbf{G}$}.  That is, $\chi_A$ is \textit{not} necessarily a physical process in general. Moreover, we have used a choice of color scheme that emphasizes the fact that the maps $A \mapsto \chi_A$ are dependent on  $\mathsf{M}$. 

We note importantly that Theorem~\ref{thm: complex_valued_structure_theorem} has an immediate (yet relevant) corollary. 

\begin{corollary}\label{corollary:infinite_dim_functors_must_be_fin}
    Let $\mathbf{G}$ be a tomographically-local finite-dimensional generalized  probabilistic theory. Any linearity-preserving and empirically-adequate functor $\mathsf{M}: \mathbf{G} \to \complexvectexamplep$ necessarily  is a functor onto the subcategory of finite-dimensional complex vector spaces, i.e., $\mathsf{M}: \mathbf{G} \to \fincomplexvectexamplep$. Moreover, for every system $A$ in $\mathbf{G}$ we have  $\dim_{\mathds{C}}(\mathsf{M}(A)) = \mathrm{dim}_{\mathds{C}}(\mathds{C}(A))$. 
\end{corollary}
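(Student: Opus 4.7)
The plan is to deduce the corollary as an essentially immediate consequence of Theorem~\ref{thm: complex_valued_structure_theorem}, which, for each system $A$ of $\mathbf{G}$, already furnishes an invertible $\mathds{C}$-linear map $\chi_A \colon \mathds{C}(A) \to \mathsf{M}(A)$ determined by the action of $\mathsf{M}$ on states. Since invertibility of a $\mathds{C}$-linear map between complex vector spaces is exactly the statement that it is an isomorphism in $\complexvectexamplep$, this reduces the corollary to a dimension count on $\mathds{C}(A)$.

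First, I would use the assumption that $\mathbf{G}$ is finite-dimensional, which by definition means that the underlying real vector space of every system $A$ has finite real dimension. Combining this with the description $\mathds{C}(A) \simeq A \oplus A$ from Definition~\ref{def: complexification_vector_spaces}, we see that $\mathds{C}(A)$ has complex dimension equal to the real dimension of $A$, and in particular is finite-dimensional. Then, applying Theorem~\ref{thm: complex_valued_structure_theorem}, the invertible $\mathds{C}$-linear map $\chi_A \colon \mathds{C}(A) \to \mathsf{M}(A)$ is a $\mathds{C}$-linear isomorphism, which immediately yields the equality of complex dimensions $\dim_{\mathds{C}}(\mathsf{M}(A)) = \dim_{\mathds{C}}(\mathds{C}(A))$ and forces $\mathsf{M}(A)$ to be a finite-dimensional complex vector space, i.e., an object of $\fincomplexvectexamplep$.

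To conclude that $\mathsf{M}$ is a functor into $\fincomplexvectexamplep$ rather than just into $\complexvectexamplep$, I would briefly note that for every process $T \colon A \to B$ in $\mathbf{G}$, the image $\mathsf{M}(T) \colon \mathsf{M}(A) \to \mathsf{M}(B)$ is, by the previous step, a $\mathds{C}$-linear map between two finite-dimensional complex vector spaces, and is therefore automatically a morphism in the subcategory $\fincomplexvectexamplep$; no further verification is required on morphisms.

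There is essentially no obstacle beyond bookkeeping: the only thing worth checking carefully is that the hypotheses of the corollary---tomographic locality, finite-dimensionality of $\mathbf{G}$, together with linearity-preservation and empirical adequacy of $\mathsf{M}$---match the hypotheses of Theorem~\ref{thm: complex_valued_structure_theorem} verbatim, so that the construction of $\chi_A$ and its invertibility are indeed available. Once this is noted, the corollary follows immediately from the fact that complexification preserves (finite) dimension and that a $\mathds{C}$-linear isomorphism between complex vector spaces equates their complex dimensions.
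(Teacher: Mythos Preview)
Your proposal is correct and takes essentially the same approach as the paper: both deduce the corollary directly from the invertibility of $\chi_A\colon\mathds{C}(A)\to\mathsf{M}(A)$ established in Theorem~\ref{thm: complex_valued_structure_theorem}, together with the finite-dimensionality of $\mathds{C}(A)$. The only cosmetic difference is that the paper briefly re-derives the surjectivity of $\chi_A$ by pointing to its image as the span of $\{\mathsf{M}(s)\}$ and invoking $\mathsf{M}(\mathrm{id}_A)=\mathrm{id}_{V_A}$, whereas you invoke the invertibility conclusion of the theorem directly; your version is, if anything, slightly more economical.
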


\begin{proof}
    We merely note that from the proof of Theorem~\ref{thm: complex_valued_structure_theorem} it follows that for every system $A$ the image of $\chi_A$ is the finite-dimensional space give by 
    \begin{equation}
        \mathrm{im}\chi_A = \mathrm{span}_{\mathds{C}}\{\mathsf{M}(s)\mid s \in \mathbf{G}(\mathds{R},A)\}
    \end{equation}
    which is finite-dimensional since $A$ is finite-dimensional and $\mathsf{M}$ is a linearity-preserving map. However, since $\mathsf{M}$ is a functor we have that $\mathsf{M}(\mathrm{id}_A) = \mathrm{id}_{V_A}$ and this implies that $\chi_A$ is also surjective, forcing $\mathsf{M}(A) = V_A$ to be finite-dimensional and equal to $\mathrm{im}\chi_A$, for every $A$. Since the $\mathds{C}$-linear map $\chi_A:\mathds{C}(A)\to \mathsf{M}(A)$ is invertible, the (complex) dimension of its domain $\mathds{C}(A)$ and codomain $\mathsf{M}(A)$ must be the same.
\end{proof}

This corollary is a simple (yet interesting) extension of Corollary 4.2 in Schmid et al.~\cite{schmid2024structure}. In other words, the corollary above shows that a map $\mathsf{M}$ cannot satisfy the theorem’s conditions (empirical adequacy, linearity preservation, and functoriality) while still embedding finite-dimensional spaces $A$ into infinite-dimensional ones $\mathsf{M}(A)$. In Ref.~\cite{schmid2024structure}, $\mathsf{M}$ was assumed to map $\mathbf{G}$ to  $\finrealvectexamplep$ from the start, and it was shown that $\mathsf{M}(A)$ must have a dimension depending on the maps $\chi_A$. This leads to the conclusion: one cannot embed $A$ into a much higher (yet still finite) dimensional space $\mathsf{M}(A)$; the dimension of the two spaces must be the \emph{same}. Our result reaches the same conclusion, but under the assumption that $\mathsf{M}$ map $\mathbf{G}$ to $\complexvectexamplep$ from the start.

We can give  a  physical motivation for Theorem~\ref{thm: complex_valued_structure_theorem}.
Recall that $\fincomplexvectexamplep$ is essentially equivalent to the space $\fincomplexquasisubstoch$ used to define quasiprobability representations of a theory (see Sec.~\ref{sec:examples}).
Thus, the theorem provides a structure theorem for functorial quasiprobabilistic representations. Moreover, following the discussion in Ref.~\cite{schmid2024structure}, when $\mathbf{G}=\finitequantumtheory$ the assignment
$A \mapsto (\chi_A,\chi_A^{-1})$ is closely related to the frame-representation formalism for quantum systems. In particular,  the KD representations of $\finitequantumtheory$ considered in Ref.~\cite{schmid2024kirkwood} can be realized as the action of a functor $\mathsf{KD}_{\mathrm{std}}:\finitequantumtheory\to\fincomplexquasisubstoch$. In Secs.~\ref{sec:main_result_quasiprobs} and~\ref{sec:examples_quasiprobability_quantum_theory} we expand on these physical implications.

\subsection{Semi-functors mapping GPT systems to complex vector spaces}\label{sec:semi_functors_to_complex_vectors_spaces}

We now state a version of Theorem~\ref{thm: complex_valued_structure_theorem} where we relax the assumption of $\mathsf{M}$ being a functor. Instead, we shall consider maps $\mathsf{N}$ that are merely semi-functors.  Recall that the main difference for our scope is that semi-functors need not preserve the identity. 

To state and prove the result, we briefly recall the definition of  the  \emph{surjective corestriction} of a map. For any fixed map $f:X\to Y$ its surjective corestriction is the map $\underline{f}:X \to \mathrm{im}(f)$ such that $ \underline{f}(x):=f(x)$ for all $x \in X$. It is therefore the ``surjective version'' of the map. 

\begin{theorem}[Structure theorem for semi-functors]\label{thm: connectivity_preserving theorem}
    Let  {$\mathbf{G}$} be a tomographically-local finite-dimensional generalized  probabilistic theory. Any linearity-preserving and empirically-adequate semi-functor $\mathsf{N}: \mathbf{G} \to \complexvectexamplep$ can be represented as \begin{equation}\label{eq:connectivity_preserving}
    \tikzfig{FIRST/Structure_theorem_38}\quad = \quad \tikzfig{FIRST/Structure_theorem_39},
    \end{equation}
    where, for each system $A$, $\chi_A$ is an injective  $\mathds{C}$-linear map in $\complexvectexamplep$   
    uniquely determined  by the action of $\mathsf{N}$ on states $s \in \mathbf{G}(\mathds R,A)$ and $\phi_A = \underline{\chi_A}^{-1}\circ \mathsf{N}(\mathrm{id}_A)$, where $\underline{\chi_A}$ is the surjective corestriction of $\chi_A$.
\end{theorem}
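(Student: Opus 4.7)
The plan is to follow the proof of Theorem~\ref{thm: complex_valued_structure_theorem} almost verbatim, tracking carefully which steps genuinely require identity preservation---and hence must be weakened---versus those that rely only on semi-functoriality. First, I would verify that the derivation yielding $\mathsf{N}(T) = \chi_B \circ \mathds{C}(T) \circ \phi_A$ for $T:A\to B$ goes through unchanged, with $\chi_A$ and $\phi_A$ defined exactly as in Eqs.~\eqref{eq:definition_chi_map} and~\eqref{eq:restricted_map_phi}. That derivation relied on tomographic locality to split $T$ into $\sum_{ij} r_{ij}\, s_i \circ e_j$, on semi-functoriality to split $\mathsf{N}(s_i \circ e_j)$ as $\mathsf{N}(s_i) \circ \mathsf{N}(e_j)$, and on empirical adequacy plus linearity preservation to pull out the complexification functor acting on the real scalars $e_j \circ s$. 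None of these invoke $\mathsf{N}(\mathrm{id}_A) = \mathrm{id}_{\mathsf{N}(A)}$.

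Next, I would re-examine Eq.~\eqref{eq:phi_left_inverse_CA} and observe that it already survives in this weaker setting: its final step uses only the GPT-level decomposition of $\mathrm{id}_A$ and the fact that $\mathds{C}$ preserves identities (Eq.~\eqref{eq:complexification_preserves_the_identity}), never the identity preservation of the map being represented. Hence $\phi_A \circ \chi_A = \mathrm{id}_{\mathds{C}(A)}$ still holds, from which $\chi_A$ is injective and $\phi_A$ is a left inverse. Consequently the surjective corestriction $\underline{\chi_A}:\mathds{C}(A)\to \mathrm{im}(\chi_A)$ is invertible.

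The one genuinely new step is to compute $\chi_A \circ \phi_A$. Running the diagrammatic analogue of Eq.~\eqref{eq:phi_left_inverse_CA} with the roles of $\mathds{C}$ and $\mathsf{N}$ interchanged at the outer boxes, the inner scalars $e_j \circ s_{i'}\in \mathbf{G}(\mathds{R},\mathds{R})$ again collapse via empirical adequacy and linearity preservation, but the surrounding envelope now becomes
\begin{equation*}
\chi_A \circ \phi_A \;=\; \mathsf{N}\!\left(\sum_{ij,i'j'} t_{ij} t_{i'j'}\, s_i \circ e_j \circ s_{i'} \circ e_{j'}\right) \;=\; \mathsf{N}(\mathrm{id}_A \circ \mathrm{id}_A) \;=\; \mathsf{N}(\mathrm{id}_A),
\end{equation*}
where the penultimate equality uses the resolution of the identity in $\mathbf{G}$, and the last equality follows either from $\mathrm{id}_A \circ \mathrm{id}_A = \mathrm{id}_A$ directly in $\mathbf{G}$ or, equivalently, from semi-functoriality together with the known idempotency of $\mathsf{N}(\mathrm{id}_A)$.

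Combining the two relations then yields
\begin{equation*}
\underline{\chi_A}^{-1} \circ \mathsf{N}(\mathrm{id}_A) \;=\; \underline{\chi_A}^{-1} \circ \chi_A \circ \phi_A \;=\; \phi_A,
\end{equation*}
establishing the claimed identification of $\phi_A$ and hence the full decomposition $\mathsf{N}(T) = \chi_B \circ \mathds{C}(T) \circ \underline{\chi_A}^{-1} \circ \mathsf{N}(\mathrm{id}_A)$. The main subtlety, and the place where I expect to have to be most careful, is purely bookkeeping: one must check that $\mathrm{im}(\mathsf{N}(\mathrm{id}_A)) \subseteq \mathrm{im}(\chi_A)$ so that $\underline{\chi_A}^{-1} \circ \mathsf{N}(\mathrm{id}_A)$ is well-typed (this containment is immediate from $\chi_A \circ \phi_A = \mathsf{N}(\mathrm{id}_A)$), and to confirm step by step that in the diagrammatic manipulations of Theorem~\ref{thm: complex_valued_structure_theorem} the only essential appeal to identity preservation was the right-inverse relation $\chi_A \circ \phi_A = \mathrm{id}_{\mathsf{M}(A)}$, which is exactly the relation that here gets replaced by the idempotent $\chi_A \circ \phi_A = \mathsf{N}(\mathrm{id}_A)$.
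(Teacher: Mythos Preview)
Your proposal is correct and mirrors the paper's proof essentially step for step: the paper likewise states that the argument of Theorem~\ref{thm: complex_valued_structure_theorem} carries over unchanged up to the right-inverse step, then replaces $\chi_A\circ\phi_A=\mathrm{id}_{\mathsf{M}(A)}$ by $\chi_A\circ\phi_A=\mathsf{N}(\mathrm{id}_A)=D_A$ (your ``roles interchanged'' computation), and finally uses injectivity of $\chi_A$ to invert the corestriction and read off $\phi_A=\underline{\chi_A}^{-1}\circ D_A$. Your explicit remark that $\mathrm{im}(\mathsf{N}(\mathrm{id}_A))\subseteq\mathrm{im}(\chi_A)$ is needed for well-typedness matches the paper's parenthetical observation that $D_A$ and $\chi_A$ share the same image.
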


\begin{proof}
The proof of this theorem is exactly the same as the one from Theorem~\ref{thm: complex_valued_structure_theorem} up to the point where we show that $\chi_A \circ \phi_A = \mathrm{id}_A$.  Instead, the last part is substituted with 
\begin{equation}\label{eq:idempotent_chi_phi}
    \tikzfig{FIRST/Structure_theorem_40}\quad = \quad \tikzfig{FIRST/Structure_theorem_41}= \quad \tikzfig{FIRST/Structure_theorem_42}= \quad \tikzfig{FIRST/Structure_theorem_43}.
\end{equation}
We now show that for every $A$ the map $\phi_A$ is fixed by the pair $(\chi_A,D_A)$. To see this, recall that $\chi_A:\mathds{C}(A) \to V_A$ is injective, which implies that its surjective corestriction  $\underline{\chi_A}:\mathds{C}(A) \to \mathrm{im}\chi_A$ is invertible and therefore (since both $D_A$ and $\chi_A$ have the same image)
\begin{equation}
    D_A = \chi_A\circ \phi_A \implies \underline{\chi_A}^{-1}\circ D_A =  \phi_A,
\end{equation}
as we wanted. This concludes the proof, since it shows that $D_A$ and $\chi_A$ uniquely fix $\phi_A$.
\end{proof}

We can comment on the formal notion of \emph{uniqueness} that applies to the maps $\chi_A$ and $\phi_A$. We recall that $\chi_A$ is uniquely (up to isomorphisms) defined by the action of the (semi)functor. The map $\phi_A$ is also uniquely specified by $\chi_A$ and $D_A$ in a similar manner.  Recall that if a given idempotent map $D_A:V_A\to V_A$ in $\complexvectexamplep$ splits $$D_A = \chi_A\circ \phi_A,$$ in which case we call $(\chi_A,\phi_A)$ a splitting for $D_A$, then any such splitting is \emph{unique up to a unique isomorphism}. In other words, given any two splittings $$D_A = \chi_A\circ \phi_A  = \chi_A'\circ \phi_A'$$ there is only one isomorphism mapping these two choices, i.e. there is a single map $\xi_A$ such that $\chi_A'\circ \xi_A = \chi_A$ and $\xi_A \circ \phi_A=\phi_A'$. Explicitly, the map is given by 
\begin{equation}
\xi_A = \phi_A'\circ \chi_A
\end{equation}
and its inverse by $\xi_A^{-1} = \phi_A\circ \chi_A'$. Hence, provided that $\chi_A$ and $D_A$ are fixed for every $A$, the map $\phi_A$ is fixed as well in the above sense.

We remark a distinction between Theorems~\ref{thm: complex_valued_structure_theorem} and~\ref{thm: connectivity_preserving theorem} concerning the mappings $A \mapsto \chi_A$. In the case of Theorem~\ref{thm: complex_valued_structure_theorem}, the maps $\chi_A$ are invertible $\mathds{C}$-linear maps $\chi_A: \mathds{C}(A) \to \mathsf{M}(A)$, which implies that $\chi_A$ must be surjective and that $\mathsf{M}(A)$ is finite-dimensional (as discussed in Corollary~\ref{corollary:infinite_dim_functors_must_be_fin}). Hence, $\chi_A$ is a map in $\fincomplexvectexamplep$ for every system $A$. In contrast, in Theorem~\ref{thm: connectivity_preserving theorem} we only have that $\chi_A$ is injective, which implies that, in principle, for some maps $\mathsf N$ one may have $\chi_A:\mathds C(A) \to \mathsf{N}(A)$ surjective with $\mathsf{N}(A)$ infinite-dimensional, and thus a map in $\complexvectexamplep$ \emph{but not} in $\fincomplexvectexamplep$.

Theorem~\ref{thm: connectivity_preserving theorem} allows us to draw the following conclusion: for \emph{functors} $\mathsf M$ we obtain that the maps $\chi_A$ are uniquely characterized by their action on states up to the choices of isomorphisms identifying the vector spaces $A$ and $\mathcal{L}(\mathds R,A)$ (and similarly for the complex vector spaces). If we fix this choice to be given by the canonical isomorphisms, there is no ambiguity in the choices for the maps $\chi_A$. For \emph{semi-functors} $\mathsf N$ the same continues to hold for $\chi_A$, but now we have additionally that $\phi_A\neq \chi_A^{-1}$ but is fixed by the action of $\mathsf N$ on the identity and by the idempotent splitting, which implies that $\phi_A$ is unique up to a unique isomorphism. It is also simple to see that, when $\mathsf{N}$ is a functor and not only a semi-functor we recover Theorem~\ref{thm: complex_valued_structure_theorem} since in that case every map $\chi_A$ becomes invertible (hence equal to its surjective corestriction) and $\phi_A = \chi_A^{-1}\circ \mathsf{N}(\mathrm{id}_A) = \chi_A^{-1}\circ \mathrm{id}_{V_A}  =\chi_A^{-1}$.

\subsection{Quasiprobability representations of GPTs}\label{sec:main_result_quasiprobs}

With Theorems~\ref{thm: complex_valued_structure_theorem} and~\ref{thm: connectivity_preserving theorem} we can now state our results for the subclass of maps from  {$\mathbf{G}$} to $\fincomplexquasisubstoch$ which constitute our definition for a complex-valued finite-dimensional quasiprobability representation. 

\begin{corollary}[Structure theorem for complex-valued representations]\label{corollary: quasiprobability representations structure theorem}
    A complex-valued quasiprobability representation  {$\mathsf{Q}: \mathbf{G} \to \fincomplexquasisubstoch$} (as in Def.~\ref{def:quasiprobability_representation}) of a tomographically-local finite-dimensional generalized probabilistic theory can always be written as
    \begin{equation}
        \tikzfig{FIRST/Structure_theorem_44} = \tikzfig{FIRST/Structure_theorem_45},
    \end{equation}
    where, for each system $A$, $\chi_A$ is an injective $\mathds C$-linear map in $\fincomplexvectexamplep$ uniquely determined by the action of $\mathsf Q$ on states $s \in \mathbf{G}(\mathds R, A)$ and $\phi_A = \underline{\chi_A}^{-1}\circ \mathsf{Q}(\mathrm{id}_A)$, where $\underline{\chi_A}$ is the surjective corestriction of $\chi_A$. Furthermore, if $\mathsf{Q}$ is  discard preserving  we have: 
    \begin{equation}\label{eq: quasiprob_chi_deterministic}
        \tikzfig{FIRST/Structure_theorem_46} \quad =\quad  \tikzfig{FIRST/Structure_theorem_47} \quad = \quad \tikzfig{FIRST/Structure_theorem_48}.
    \end{equation} 
\end{corollary}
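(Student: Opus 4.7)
The plan is to derive this corollary as a direct specialization of Theorem~\ref{thm: connectivity_preserving theorem}, since the codomain $\fincomplexquasisubstoch$ embeds as a subprocess theory of $\complexvectexamplep$ (indeed, as noted in Sec.~\ref{sec:examples}, it is process-theoretically equivalent to $\fincomplexvectexamplep$ upon fixing canonical bases). First I would verify that a quasiprobability representation $\mathsf{Q}$, as defined in Def.~\ref{def:quasiprobability_representation}, satisfies exactly the hypotheses of Theorem~\ref{thm: connectivity_preserving theorem}: it is a linearity-preserving, empirically-adequate semi-functor from a tomographically-local finite-dimensional GPT $\mathbf{G}$ into (a subtheory of) $\complexvectexamplep$. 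Therefore the decomposition
\begin{equation}
\mathsf{Q}(T) \;=\; \chi_B \circ \mathds{C}(T) \circ \phi_A
\end{equation}
holds, with $\chi_A$ injective, $\mathds{C}$-linear, uniquely determined (up to a unique isomorphism) by the action of $\mathsf{Q}$ on states in $\mathbf{G}(\mathds{R},A)$, and $\phi_A = \underline{\chi_A}^{-1}\circ \mathsf{Q}(\mathrm{id}_A)$.

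Next, I would upgrade the statement from ``$\chi_A$ is a map in $\complexvectexamplep$'' to ``$\chi_A$ is a map in $\fincomplexvectexamplep$''. Because every object of $\fincomplexquasisubstoch$ is by definition a finite-dimensional complex vector space, $\mathsf{Q}(A)$ is finite-dimensional for every system $A$ of $\mathbf{G}$. Since $\mathbf{G}$ is itself finite-dimensional, $\mathds{C}(A)$ is finite-dimensional as well, and hence $\chi_A : \mathds{C}(A) \to \mathsf{Q}(A)$ is a morphism in $\fincomplexvectexamplep$ as required.

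Finally I would prove the ``furthermore'' clause. Using the decomposition on the deterministic effect $u_A : A \to I$, and noting that by empirical adequacy $\chi_I = \phi_I = \mathrm{id}_{\mathds{C}}$ (as the only state in $\mathbf{G}(I,I)$ mapped nontrivially is $1$, forced to $e_{\mathds{C}}(1)=1$), we obtain
\begin{equation}
\mathsf{Q}(u_A) \;=\; \mathds{C}(u_A)\circ \phi_A.
\end{equation}
The discard-preservation hypothesis, Eq.~\eqref{eq:determinist_effect_preservation}, then gives $\mathds{C}(u_A)\circ \phi_A = u_{\mathsf{Q}(A)}$, which is precisely the chain of equalities in Eq.~\eqref{eq: quasiprob_chi_deterministic} (the middle diagram being the composite $\mathds{C}(u_A)\circ \phi_A$ and the outer ones being $\mathsf{Q}(u_A)$ and $u_{\mathsf{Q}(A)}$, identified via discard-preservation).

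There is no real obstacle here; the work is essentially bookkeeping, ensuring that the finite-dimensional codomain $\fincomplexquasisubstoch$ is correctly recognized as a subtheory of $\complexvectexamplep$ so that Theorem~\ref{thm: connectivity_preserving theorem} applies verbatim, and then tracking the monoidal unit so that the scalar $\chi_I$ and $\phi_I$ collapse to identities under empirical adequacy. The only subtlety worth flagging is that, strictly speaking, $\mathsf{Q}$ would first need to be composed with the inclusion $\fincomplexquasisubstoch \hookrightarrow \complexvectexamplep$; since this inclusion is itself a faithful linearity-preserving functor, all hypotheses are transported unchanged.
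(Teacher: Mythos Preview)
Your reduction to Theorem~\ref{thm: connectivity_preserving theorem} and the observation that the codomain is already finite-dimensional are exactly how the paper proceeds; that part is fine.

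The gap is in the ``furthermore'' clause. You have guessed that Eq.~\eqref{eq: quasiprob_chi_deterministic} records the chain $\mathsf{Q}(u_A) = \mathds{C}(u_A)\circ\phi_A = u_{\mathsf{Q}(A)}$, but that would merely restate the definition of discard-preservation together with the decomposition, which is not much of a conclusion. In the paper, Eq.~\eqref{eq: quasiprob_chi_deterministic} is instead a constraint on $\chi_A$, namely
\[
\mathds{C}(u_A) \;=\; u_{\Lambda_A}\circ \chi_A,
\]
saying that $\chi_A$ intertwines the complexified GPT discard with the discard of $\fincomplexquasisubstoch$. Your derivation is precisely the paper's \emph{intermediate} step; the missing move is to post-compose $u_{\Lambda_A} = \mathds{C}(u_A)\circ\phi_A$ with $\chi_A$ and invoke Eq.~\eqref{eq:phi_left_inverse_CA} (i.e.\ $\phi_A\circ\chi_A=\mathrm{id}_{\mathds{C}(A)}$) to obtain $u_{\Lambda_A}\circ\chi_A = \mathds{C}(u_A)\circ\phi_A\circ\chi_A = \mathds{C}(u_A)$. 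Your remark that $\chi_I=\mathrm{id}_{\mathds{C}}$ under empirical adequacy is correct and is indeed what justifies writing $\mathsf{Q}(u_A)=\mathds{C}(u_A)\circ\phi_A$ in the first place (the paper leaves this implicit).
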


\begin{proof}
    The only non-trivial part that demands proof and that does not immediately follow from Theorems~\ref{thm: complex_valued_structure_theorem} and~\ref{thm: connectivity_preserving theorem} is the statement present in Eq.~\eqref{eq: quasiprob_chi_deterministic}. This follows from the fact that, for every $A$,
    \begin{align*}
        \tikzfig{FIRST/Connectivity_preserving_sec_31} \quad \stackrel{\eqref{eq:determinist_effect_preservation}}{=} \quad  \tikzfig{FIRST/Structure_theorem_49} \quad=\quad \tikzfig{FIRST/Structure_theorem_50}
    \end{align*}
    which then implies that 
    \begin{align*}
        \tikzfig{FIRST/Structure_theorem_52}\quad =\quad \tikzfig{FIRST/Structure_theorem_53}\quad =\quad \tikzfig{FIRST/Structure_theorem_54}
    \end{align*}
where the first equation follows from Eq.~\eqref{eq:phi_left_inverse_CA}. 
\end{proof}

We can now take the opportunity to comment on our structure theorem and relate it to previous work. From our perspective, a key requirement for any quasiprobability representation of a GPT is that it respects the compositional structure of the theory. In particular, representations should not only assign quasiprobabilities to individual states and effects but must also  properly reflect how these objects compose in the theory. This requirement is crucial for ensuring that the representation captures the full operational content of the theory, rather than just reproducing individual probabilities in isolation. In Ref.~\cite{schmid2024kirkwood}, we introduced a class of functorial quasiprobability representations---specifically, those based on the Kirkwood–Dirac (KD) construction---which explicitly preserved the compositional and monoidal structure of quantum theory. In the next section we briefly revisit this construction. In the present work, we go beyond this specific construction: we prove that \emph{any} complex-valued quasiprobability representation (not only one constructed out of KD representations) of a tomographically-local finite-dimensional GPT (not only quantum theory) obeys the structure theorem established in Theorems~\ref{thm: complex_valued_structure_theorem} and~\ref{thm: connectivity_preserving theorem}.  

This result also generalizes the earlier finding from Ref.~\cite{schmid2024structure} that functorial representations are uniquely determined by their action on states. When the functoriality condition is relaxed to semi-functoriality, the action on states alone is \emph{no longer} sufficient to fix the representation: one must also specify how the identity is represented. 

\subsection{Quasiprobability representations of quantum theory}\label{sec:examples_quasiprobability_quantum_theory}

We can now discuss how the mappings $A \mapsto \chi_A, \phi_A$, which assign to each system a pair of $\mathds{C}$-linear maps, relate to the mappings $\mathcal{B}(\mathcal{H}) \mapsto F,G$, which assign to each quantum system a frame and its dual, providing any quantum system with a complex-valued frame representation. Consider a quasiprobability representation of quantum theory, viewed as a generalized probabilistic theory. This corresponds to a linearity-preserving semi-functor $$\mathsf{Q}:\FinQT \to \fincomplexquasisubstoch.$$ Recall from Eq.~\eqref{eq:complexification_self_adjoint} that $\mathds{C}(\mathcal{B}(\mathcal{
H})_{\rm sa}) \cong \mathcal{B}(\mathcal{H})$. By the structure theorem~\ref{corollary: quasiprobability representations structure theorem}, to each system $\mathcal{B}(\mathcal{H})_{\rm sa}$ there is a unique choice (dependent on the action of $\mathsf{Q}$ on quantum (unnormalized) states $\rho \in \mathcal{L}(\mathds{R},\mathcal{B}(\mathcal{H})^+)\cong \mathcal{B}(\mathcal{H})^+$) of a map $\chi_{\mathcal{H}}:\mathcal{B}(\mathcal{H}) \to \mathds{C}^\Lambda$ such that the quasiprobability representation of any state is given by
\begin{equation*}
    \mathsf{Q}(\rho) \quad =\quad  \tikzfig{FIRST/Structure_theorem_57} \quad = \quad \tikzfig{FIRST/Structure_theorem_58}.
\end{equation*}
Using the identity $\mathrm{id}_{\mathds{C}^\Lambda} = \sum_{\lambda} \vert \lambda \rangle \rangle  \langle \langle \lambda \vert$ (we explain this notation below) we end up with 
\begin{equation*}
    \mathsf Q (\rho) = \tikzfig{FIRST/Structure_theorem_58} \,= \, \tikzfig{FIRST/Structure_theorem_59}.
\end{equation*}

As a remark, we are following here a notion used by other references as well, e.g. Ref.~\cite{burkat2025structurepositivityclassicalsimulability}, where $\vert \lambda \rangle \rangle$  emphasizes that this is a vectorized representation of a matrix basis. For faithful non-overcomplete representations one has $|\Lambda| \blk = d^2 > d$, hence we are distinguishing vectors in $\mathcal{H} \simeq \mathds{C}^d$ (denoted $\vert \lambda \rangle$) from those in $\mathds{C}^{d^2}$ (denoted $\vert \lambda \rangle \rangle$). If we recall the description of a frame relative to the Hilbert--Schmidt inner product
\begin{align}
\mu(\lambda \vert \rho) = \langle F_\lambda,\rho \rangle_{\mathrm{HS}} = \mathrm{Tr}(F_\lambda^\dagger\rho)
\end{align}
we can then conclude that 
\begin{equation*}
    \mathsf Q(\rho) =\,\tikzfig{FIRST/Structure_theorem_60}\,=\,\sum_{\lambda \in \Lambda} \mu(\lambda \vert \rho)\vert \lambda \rangle \rangle.
\end{equation*}
Here, $\mathsf{Q}(\rho)$ is described as a vector in $\mathds{C}^\Lambda$ for a finite index set $\Lambda$.  

Therefore, using $\chi_\mathcal{H}: \mathds{C}(\mathcal{B}(\mathcal{H})_{\rm sa}) \to \mathds{C}^\Lambda$ we have defined $F_\lambda : \mathcal{B}(\mathcal{H}) \to \mathds{C}$ via 
\begin{align}
    F_\lambda \simeq \langle F_\lambda, \cdot  \rangle_{\rm HS} := \langle \langle \lambda \vert \chi_{\mathcal{H}}
\end{align}
where we have used the identification $\mathcal{B}(\mathcal{H}) \cong \mathds{C}(\mathcal{B}(\mathcal{H})_{\rm sa})$, so that 
\begin{equation}
    \langle \langle \lambda \vert \chi_\mathcal{H} \mathds{C}(\rho) = \langle F_{\lambda},\mathds{C}(\rho)\rangle_{\rm HS}.
\end{equation}
Recall, moreover, that $\mathds{C}(\rho) = \rho +i\cdot 0$. 

We can proceed similarly for the effects, and conclude that 
\begin{align}
\mathsf{Q}(E) &= \tikzfig{FIRST/Structure_theorem_61}\\
&=\sum_{\lambda}\xi(E\vert \lambda) \langle \langle \lambda \vert. 
\end{align}

Therefore, using $\phi_{\mathcal{H}}:\mathds{C}^\Lambda \to \mathds{C}(\mathcal{B}(\mathcal{H})_{\rm sa})$ we have defined the operators
\begin{equation}
    G_\lambda = \phi_\mathcal{H}\vert \lambda \rangle \rangle . 
\end{equation}

We now proceed to show that $F_\lambda$ and $G_\lambda$ are both a pair of frame and dual frame for $\mathcal{B}(\mathcal{H})$. To see this, we start noticing that $\{F_\lambda\}_\lambda$ as defined spans $\mathcal{B}(\mathcal{H})^*$ and $\{G_\lambda\}_\lambda$ spans $\mathcal{B}(\mathcal{H})$. This easily follows from the fact that, for every  $\mathcal
B (\mathcal{H})$ we have that $\phi_\mathcal{H}$ is the left inverse of $\chi_\mathcal{H}$. Let $A \in \mathcal{B}(\mathcal{H}) = \mathds{C}(\mathcal{B}(\mathcal{H})_{\rm sa})$ be any, we have that 
\begin{equation}
    \tikzfig{FIRST/Structure_theorem_62} \quad=\quad \tikzfig{FIRST/Structure_theorem_63} \quad = \quad \sum_{\lambda \in \Lambda} \,\,\tikzfig{FIRST/Structure_theorem_64}
\end{equation}
from which we conclude that $F_\lambda$ spans $\mathcal{B}(\mathcal{H})^*$. Similarly, we can conclude that $G_\lambda$ spans $\mathcal{B}(\mathcal{H})$. 

In fact, the calculations above conclusively show that $F$ and $G$ are a pair of frame and dual frame. When $\mathsf{Q}$ is a functor (as opposed to being a semi-functor) we have that $F$ and $G$ are nonovercomplete frames, as pointed out by Ref.~\cite{schmid2024structure}. Otherwise, we can also show that there is a relationship between the idempotents $D_{\mathcal{H}} := \mathsf{Q}(\mathrm{id}_{\mathcal{B}(\mathcal{H})})$ and the pair of frame and dual frame just defined. In fact, we have that these are given by

\begin{equation}
    \tikzfig{FIRST/Structure_theorem_65}\quad = \quad \tikzfig{FIRST/Structure_theorem_66}\quad =\quad \sum_{\lambda,\lambda'}\tikzfig{FIRST/Structure_theorem_67},
\end{equation}
and therefore we conclude that 
\begin{equation}
    D_{\mathcal{H}} = \mathsf{Q}(\mathrm{id}_{\mathcal{B}(\mathcal{H})}) = \sum_{\lambda,\lambda'}\langle F_{\lambda'},G_\lambda\rangle_{\rm HS} \vert \lambda'\rangle \rangle \langle  \langle \lambda\vert.
\end{equation}
From this relation it is clear that whenever $\mathsf{Q}$ is a functor (and not a semi-functor) we have $\phi_{\mathcal{H}} = \chi_{\mathcal{H}}^{-1}$, the frame and dual frame pair becomes a basis and cobasis pair, and the idempotent becomes $$D_{\mathcal{H}} = \sum_{\lambda,\lambda'}\delta_{\lambda,\lambda'}\vert \lambda' \rangle \rangle \langle \langle \lambda \vert= \sum_{\lambda}\vert \lambda \rangle \rangle \langle \langle \lambda \vert = \mathrm{id}_{\mathds{C}^\Lambda}.$$

This is precisely the case for the family of standard KD representations introduced in Ref.~\cite{schmid2024kirkwood}. As shown there, it is possible to extend the KDQs $\mu_{\rm KD}$ as a distribution of states to a functorial, empirically adequate, diagram-preserving representations of quantum theory. To do so, we define functors $$\mathsf{KD}_{\rm std}:\finitequantumtheory \to \fincomplexquasisubstoch$$ via the following specification: to every Hilbert space $\mathcal H$ take the standard basis $\mathbb{a}_{\mathcal H}:=\{\vert e_i \rangle \}_{i=1}^d$ and a rotated basis $\mathbb{b}_{\mathcal{H}} = V \mathbb{a}_{\mathcal H} := \{V \vert e_i \rangle \}_{i=1}^d$ such that $\langle a \vert b \rangle  \neq 0$ for every $(a,b) \in \mathbb{a}_{\mathcal H}\times \mathbb{b}_{\mathcal H}$. Each choice of $V$ defines a different $\mathsf{KD}_{\rm std}$ (exactly as it defines a different KDQ distribution).  Relative to that Hilbert space, define pairs of frame $F^{(\mathcal{H})}$ and dual frames $G^{(\mathcal{H})}$ as by Eqs.~\eqref{eq:KD_frame} and~\eqref{eq:KD_dual_frame}, respectively. Then, the action of $\mathsf{KD}_{\rm std}$ on systems is given by $$\mathcal{B}(\mathcal{H})_{\rm sa} \mapsto \mathds{C}^{\mathbb{a}_{\mathcal{H}}\times \mathbb{b}_{\mathcal{H}}}\cong \prod_{(a,b)\in \mathbb{a}_{\mathcal{H}}\times \mathbb{b}_{\mathcal{H}}}\mathds{C},$$ and the action of it on generic maps is defined by 
\begin{equation}
    \tikzfig{FIRST/Structure_theorem_68}\, := \sum_{a,b,a',b'}\tikzfig{FIRST/Structure_theorem_69}.
\end{equation}

As showed by Ref.~\cite{schmid2024kirkwood}, any such map $\mathsf{KD}_{\rm std}$ is indeed a functor, that is moreover empirically adequate and linearity-preserving. 

\section{Conclusion and outlook}\label{sec:conclusion}

In this work, we have developed a structural and diagrammatic framework to represent generalized probabilistic theories, particularly focusing on quasiprobability representations. Our approach emphasizes the compositional and categorical aspects of physical theories, leveraging the language of process theories and their diagrammatic calculus, and shows that any such quasiprobability representation needs to satisfy a structure theorem which extends the one described by Schmid et al.~\cite{schmid2024structure} in three different directions: first, we obtain a structure theorem that does not require diagram-preservation, but only the weaker form of representation provided by semi-functorial maps; second, we allow for infinite-dimensional representations; third, we consider \emph{complex-valued} representations. For example, our structure theorem applies to every faithful Kirkwood-Dirac quasiprobability representation as the one constructed in Ref.~\cite{schmid2024kirkwood}. 

More specifically, the structure theorem we present shows that any complex-valued quasiprobability representation of a finite-dimensional, tomographically-local GPT has a simple and constrained mathematical form---for each system, every such representation can be expressed by a pair of complex-linear maps that are completely characterized by their action on states and on the identity (equivalently, on effects) for that system. Of course, the case of primary interest is finite-dimensional quantum theory which serves as a paradigmatic example of a finite-dimensional, tomographically-local GPT. In this case, the pair of maps for a system simply correspond to a choice of frame and dual frame for the system. 

Beyond the results presented here, several compelling directions remain open for future investigation. For example, it would be interesting to explore extensions of our framework to encompass more exotic algebraic settings, such as those arising from the \emph{quaternionification} of vector spaces and therefore considering representations provided by the quaternions (as opposed to the complex or real fields).  The study of quantum theory over such fields---real, complex, and quaternion numbers---from a categorical perspective can be found in Ref.~\cite{baez2011division}. The most relevant challenge is that quaternionic vector spaces do not form a symmetric monoidal category, that is, currently  there is no good notion of the tensor product of two quaternionic vector spaces. While this would not be an obstruction to the structure theorem that we prove here which did not rely on the monoidal structure, it would be an obstruction towards using such a quaternionic representation when considering composite systems. 

Second, our work builds on the well-established identification of process theories with symmetric monoidal categories. However, recent advances~\cite{selby2025generalisedprocesstheories} have pointed to the possibility of formulating process theories in a broader and more operationally grounded manner, dispensing with some of the more rigid categorical assumptions. Investigating whether the structural features we analyze here persist in these more general settings---possibly beyond the reach of traditional symmetric monoidal categories, using the formalism of \emph{operad algebras}.

Third, a natural setting where our extension of semi-functorial maps $\mathsf{N}:\mathbf{G}\to\complexvectexamplep$ arises is in bosonic quantum error-correcting codes~\cite{albert2025bosoniccoding}: the encoding map embeds a discrete, finite-dimensional system into a continuous, infinite-dimensional bosonic Hilbert space. From the quasiprobability-representation viewpoint (cf.~Ref.~\cite{davis2024identifyingquantumresourcesencoded}) this encoding is precisely a map from the subprocess theory of $\finitequantumtheory$ determined by the code spaces into $\realvectexamplep$, and studying it therefore leads naturally to infinite-dimensional representations of originally finite systems. This is illustrated in Ref.~\cite{davis2024identifyingquantumresourcesencoded}, where the authors construct code-adapted quasiprobability representations (extending methods of Ref.~\cite{brief1999phasespace}) that preserve key code symmetries and enable a concrete study of nonclassicality (i.e.,  the need for negativity in the real-valued representation) in error-corrected computation. Their Zak---Gross Wigner representation (when restricted to the code space subtheory) is indeed a linearity-preserving and empirically-adequate semi-functorial map with idempotents provided by twirling maps~\cite{davis2024identifyingquantumresourcesencoded}. It would be interesting to develop this connection further and investigate whether our approach---combined with the insights of Ref.\cite{davis2024identifyingquantumresourcesencoded}---can provide a broadly applicable phase-space framework for quantum error correction, and clarify links between nonclassicality in fault-tolerant computation and negativity in quasiprobability distributions. 

Fourth, while our formulation has focused on finite-dimensional systems, an important generalization is to extend the framework to infinite-dimensional Hilbert spaces, i.e., considering semifunctors $\mathsf{M}: \mathbf{QT}\to \mathbf{Vect}_{\mathds{C}}$. Such an extension would be essential to connect our categorical and diagrammatic results with the infinite-dimensional systems used in quantum optics. Relaxing the structure theorem to semi-functorial maps is a first step towards understanding if similar results would apply to quasiprobability representations that notably do not satisfy the property of being described by a functor, such as those from Glauber--Sudarshan~\cite{glauber1963coherent,sudarshan1963equivalence}, and Husimi~\cite{husimi1940some} representations.

\

\begin{acknowledgements}
    We would like to thank Koenraad Audenaert, Rui Soares Barbosa, Jack Davis, Marco Erba, and Matt Wilson for useful discussions. 
    RW acknowledges support from the European Research Council (ERC) under the European Union's Horizon 2020 research and innovation programme (grant agreement No. 856432, HyperQ). JHS was funded by the European Commission by the QuantERA project ResourceQ under the grant agreement UMO-2023/05/Y/ST2/00143. RDB acknowledges support by the Digital Horizon Europe project FoQaCiA, Foundations of quantum computational advantage, GA No. 101070558, funded by the European Union, NSERC (Canada), and UKRI (UK). RDB, YY and DS are supported by Perimeter Institute for Theoretical Physics. Research at Perimeter Institute is supported in part by the Government of Canada through the Department of Innovation, Science and Economic Development and by the Province of Ontario through the Ministry of Colleges and Universities. YY is also supported by the Natural Sciences and Engineering Research Council of Canada (Grant No. RGPIN-2024-04419). MS is supported by National Science Centre, Poland (Preludium, Classicality and compositionality of general probabilistic theories in infinite dimensions, project no.2024/53/N/ST2/01192).
\end{acknowledgements}

\bibliography{bib}

\begin{thebibliography}{100}%
\makeatletter
\providecommand \@ifxundefined [1]{%
 \@ifx{#1\undefined}
}%
\providecommand \@ifnum [1]{%
 \ifnum #1\expandafter \@firstoftwo
 \else \expandafter \@secondoftwo
 \fi
}%
\providecommand \@ifx [1]{%
 \ifx #1\expandafter \@firstoftwo
 \else \expandafter \@secondoftwo
 \fi
}%
\providecommand \natexlab [1]{#1}%
\providecommand \enquote  [1]{``#1''}%
\providecommand \bibnamefont  [1]{#1}%
\providecommand \bibfnamefont [1]{#1}%
\providecommand \citenamefont [1]{#1}%
\providecommand \href@noop [0]{\@secondoftwo}%
\providecommand \href [0]{\begingroup \@sanitize@url \@href}%
\providecommand \@href[1]{\@@startlink{#1}\@@href}%
\providecommand \@@href[1]{\endgroup#1\@@endlink}%
\providecommand \@sanitize@url [0]{\catcode `\\12\catcode `\$12\catcode `\&12\catcode `\#12\catcode `\^12\catcode `\_12\catcode `\%12\relax}%
\providecommand \@@startlink[1]{}%
\providecommand \@@endlink[0]{}%
\providecommand \url  [0]{\begingroup\@sanitize@url \@url }%
\providecommand \@url [1]{\endgroup\@href {#1}{\urlprefix }}%
\providecommand \urlprefix  [0]{URL }%
\providecommand \Eprint [0]{\href }%
\providecommand \doibase [0]{https://doi.org/}%
\providecommand \selectlanguage [0]{\@gobble}%
\providecommand \bibinfo  [0]{\@secondoftwo}%
\providecommand \bibfield  [0]{\@secondoftwo}%
\providecommand \translation [1]{[#1]}%
\providecommand \BibitemOpen [0]{}%
\providecommand \bibitemStop [0]{}%
\providecommand \bibitemNoStop [0]{.\EOS\space}%
\providecommand \EOS [0]{\spacefactor3000\relax}%
\providecommand \BibitemShut  [1]{\csname bibitem#1\endcsname}%
\let\auto@bib@innerbib\@empty
\bibitem [{\citenamefont {Plávala}(2023)}]{plavala2023general}%
  \BibitemOpen
  \bibfield  {author} {\bibinfo {author} {\bibfnamefont {M.}~\bibnamefont {Plávala}},\ }\bibfield  {title} {\bibinfo {title} {General probabilistic theories: An introduction},\ }\href {https://doi.org/10.1016/j.physrep.2023.09.001} {\bibfield  {journal} {\bibinfo  {journal} {Physics Reports}\ }\textbf {\bibinfo {volume} {1033}},\ \bibinfo {pages} {1–64} (\bibinfo {year} {2023})}\BibitemShut {NoStop}%
\bibitem [{\citenamefont {Müller}(2021)}]{muller2021probabilistic}%
  \BibitemOpen
  \bibfield  {author} {\bibinfo {author} {\bibfnamefont {M.}~\bibnamefont {Müller}},\ }\bibfield  {title} {\bibinfo {title} {Probabilistic theories and reconstructions of quantum theory},\ }\href {https://doi.org/10.21468/scipostphyslectnotes.28} {\bibfield  {journal} {\bibinfo  {journal} {SciPost Physics Lecture Notes}\ ,\ \bibinfo {pages} {28}} (\bibinfo {year} {2021})}\BibitemShut {NoStop}%
\bibitem [{\citenamefont {Hardy}(2011)}]{hardy2011reformulatingreconstructingquantumtheory}%
  \BibitemOpen
  \bibfield  {author} {\bibinfo {author} {\bibfnamefont {L.}~\bibnamefont {Hardy}},\ }\href {https://arxiv.org/abs/1104.2066} {\bibinfo {title} {Reformulating and reconstructing quantum theory}} (\bibinfo {year} {2011}),\ \Eprint {https://arxiv.org/abs/1104.2066} {arXiv:1104.2066 [quant-ph]} \BibitemShut {NoStop}%
\bibitem [{\citenamefont {Hardy}(2001)}]{hardy2001quantumtheoryreasonableaxioms}%
  \BibitemOpen
  \bibfield  {author} {\bibinfo {author} {\bibfnamefont {L.}~\bibnamefont {Hardy}},\ }\href {https://arxiv.org/abs/quant-ph/0101012} {\bibinfo {title} {{Quantum Theory From Five Reasonable Axioms}}} (\bibinfo {year} {2001}),\ \Eprint {https://arxiv.org/abs/quant-ph/0101012} {arXiv:quant-ph/0101012 [quant-ph]} \BibitemShut {NoStop}%
\bibitem [{\citenamefont {Barrett}(2007)}]{barrett2007information}%
  \BibitemOpen
  \bibfield  {author} {\bibinfo {author} {\bibfnamefont {J.}~\bibnamefont {Barrett}},\ }\bibfield  {title} {\bibinfo {title} {Information processing in generalized probabilistic theories},\ }\href {https://doi.org/10.1103/PhysRevA.75.032304} {\bibfield  {journal} {\bibinfo  {journal} {Phys. Rev. A}\ }\textbf {\bibinfo {volume} {75}},\ \bibinfo {pages} {032304} (\bibinfo {year} {2007})}\BibitemShut {NoStop}%
\bibitem [{\citenamefont {Spekkens}(2007)}]{spekkens2007evidence}%
  \BibitemOpen
  \bibfield  {author} {\bibinfo {author} {\bibfnamefont {R.~W.}\ \bibnamefont {Spekkens}},\ }\bibfield  {title} {\bibinfo {title} {Evidence for the epistemic view of quantum states: A toy theory},\ }\href {https://doi.org/10.1103/PhysRevA.75.032110} {\bibfield  {journal} {\bibinfo  {journal} {Phys. Rev. A}\ }\textbf {\bibinfo {volume} {75}},\ \bibinfo {pages} {032110} (\bibinfo {year} {2007})}\BibitemShut {NoStop}%
\bibitem [{\citenamefont {Barnum}\ \emph {et~al.}(2007)\citenamefont {Barnum}, \citenamefont {Barrett}, \citenamefont {Leifer},\ and\ \citenamefont {Wilce}}]{barnum2007nobroadcasting}%
  \BibitemOpen
  \bibfield  {author} {\bibinfo {author} {\bibfnamefont {H.}~\bibnamefont {Barnum}}, \bibinfo {author} {\bibfnamefont {J.}~\bibnamefont {Barrett}}, \bibinfo {author} {\bibfnamefont {M.}~\bibnamefont {Leifer}},\ and\ \bibinfo {author} {\bibfnamefont {A.}~\bibnamefont {Wilce}},\ }\bibfield  {title} {\bibinfo {title} {{Generalized No-Broadcasting Theorem}},\ }\href {https://doi.org/10.1103/PhysRevLett.99.240501} {\bibfield  {journal} {\bibinfo  {journal} {Phys. Rev. Lett.}\ }\textbf {\bibinfo {volume} {99}},\ \bibinfo {pages} {240501} (\bibinfo {year} {2007})}\BibitemShut {NoStop}%
\bibitem [{\citenamefont {Erba}\ \emph {et~al.}(2024)\citenamefont {Erba}, \citenamefont {Perinotti}, \citenamefont {Rolino},\ and\ \citenamefont {Tosini}}]{erba2024measurement}%
  \BibitemOpen
  \bibfield  {author} {\bibinfo {author} {\bibfnamefont {M.}~\bibnamefont {Erba}}, \bibinfo {author} {\bibfnamefont {P.}~\bibnamefont {Perinotti}}, \bibinfo {author} {\bibfnamefont {D.}~\bibnamefont {Rolino}},\ and\ \bibinfo {author} {\bibfnamefont {A.}~\bibnamefont {Tosini}},\ }\bibfield  {title} {\bibinfo {title} {Measurement incompatibility is strictly stronger than disturbance},\ }\href {https://doi.org/10.1103/PhysRevA.109.022239} {\bibfield  {journal} {\bibinfo  {journal} {Phys. Rev. A}\ }\textbf {\bibinfo {volume} {109}},\ \bibinfo {pages} {022239} (\bibinfo {year} {2024})}\BibitemShut {NoStop}%
\bibitem [{\citenamefont {Aubrun}\ \emph {et~al.}(2022)\citenamefont {Aubrun}, \citenamefont {Lami}, \citenamefont {Palazuelos},\ and\ \citenamefont {Pl\'avala}}]{aubrun2022entanglement}%
  \BibitemOpen
  \bibfield  {author} {\bibinfo {author} {\bibfnamefont {G.}~\bibnamefont {Aubrun}}, \bibinfo {author} {\bibfnamefont {L.}~\bibnamefont {Lami}}, \bibinfo {author} {\bibfnamefont {C.}~\bibnamefont {Palazuelos}},\ and\ \bibinfo {author} {\bibfnamefont {M.}~\bibnamefont {Pl\'avala}},\ }\bibfield  {title} {\bibinfo {title} {Entanglement and superposition are equivalent concepts in any physical theory},\ }\href {https://doi.org/10.1103/PhysRevLett.128.160402} {\bibfield  {journal} {\bibinfo  {journal} {Phys. Rev. Lett.}\ }\textbf {\bibinfo {volume} {128}},\ \bibinfo {pages} {160402} (\bibinfo {year} {2022})}\BibitemShut {NoStop}%
\bibitem [{\citenamefont {Aubrun}\ \emph {et~al.}(2021)\citenamefont {Aubrun}, \citenamefont {Lami}, \citenamefont {Palazuelos},\ and\ \citenamefont {Plávala}}]{aubrun2021entangleability}%
  \BibitemOpen
  \bibfield  {author} {\bibinfo {author} {\bibfnamefont {G.}~\bibnamefont {Aubrun}}, \bibinfo {author} {\bibfnamefont {L.}~\bibnamefont {Lami}}, \bibinfo {author} {\bibfnamefont {C.}~\bibnamefont {Palazuelos}},\ and\ \bibinfo {author} {\bibfnamefont {M.}~\bibnamefont {Plávala}},\ }\bibfield  {title} {\bibinfo {title} {Entangleability of cones},\ }\href {https://doi.org/10.1007/s00039-021-00565-5} {\bibfield  {journal} {\bibinfo  {journal} {Geometric and Functional Analysis}\ }\textbf {\bibinfo {volume} {31}},\ \bibinfo {pages} {181–205} (\bibinfo {year} {2021})}\BibitemShut {NoStop}%
\bibitem [{\citenamefont {Aubrun}\ \emph {et~al.}(2024)\citenamefont {Aubrun}, \citenamefont {Müller-Hermes},\ and\ \citenamefont {Plávala}}]{aubrun2024monogamy}%
  \BibitemOpen
  \bibfield  {author} {\bibinfo {author} {\bibfnamefont {G.}~\bibnamefont {Aubrun}}, \bibinfo {author} {\bibfnamefont {A.}~\bibnamefont {Müller-Hermes}},\ and\ \bibinfo {author} {\bibfnamefont {M.}~\bibnamefont {Plávala}},\ }\bibfield  {title} {\bibinfo {title} {Monogamy of entanglement between cones},\ }\href {https://doi.org/10.1007/s00208-024-02935-4} {\bibfield  {journal} {\bibinfo  {journal} {Mathematische Annalen}\ }\textbf {\bibinfo {volume} {391}},\ \bibinfo {pages} {1591–1609} (\bibinfo {year} {2024})}\BibitemShut {NoStop}%
\bibitem [{\citenamefont {D'Ariano}\ \emph {et~al.}(2020)\citenamefont {D'Ariano}, \citenamefont {Erba},\ and\ \citenamefont {Perinotti}}]{dariano2020classical}%
  \BibitemOpen
  \bibfield  {author} {\bibinfo {author} {\bibfnamefont {G.~M.}\ \bibnamefont {D'Ariano}}, \bibinfo {author} {\bibfnamefont {M.}~\bibnamefont {Erba}},\ and\ \bibinfo {author} {\bibfnamefont {P.}~\bibnamefont {Perinotti}},\ }\bibfield  {title} {\bibinfo {title} {Classical theories with entanglement},\ }\href {https://doi.org/10.1103/PhysRevA.101.042118} {\bibfield  {journal} {\bibinfo  {journal} {Phys. Rev. A}\ }\textbf {\bibinfo {volume} {101}},\ \bibinfo {pages} {042118} (\bibinfo {year} {2020})}\BibitemShut {NoStop}%
\bibitem [{\citenamefont {Hardy}(1999)}]{hardy1999disentanglingnonlocalityteleportation}%
  \BibitemOpen
  \bibfield  {author} {\bibinfo {author} {\bibfnamefont {L.}~\bibnamefont {Hardy}},\ }\href {https://arxiv.org/abs/quant-ph/9906123} {\bibinfo {title} {{Disentangling Nonlocality and Teleportation}}} (\bibinfo {year} {1999}),\ \Eprint {https://arxiv.org/abs/quant-ph/9906123} {arXiv:quant-ph/9906123 [quant-ph]} \BibitemShut {NoStop}%
\bibitem [{\citenamefont {Barnum}\ \emph {et~al.}(2008)\citenamefont {Barnum}, \citenamefont {Barrett}, \citenamefont {Leifer},\ and\ \citenamefont {Wilce}}]{barnum2008teleportationgeneralprobabilistictheories}%
  \BibitemOpen
  \bibfield  {author} {\bibinfo {author} {\bibfnamefont {H.}~\bibnamefont {Barnum}}, \bibinfo {author} {\bibfnamefont {J.}~\bibnamefont {Barrett}}, \bibinfo {author} {\bibfnamefont {M.}~\bibnamefont {Leifer}},\ and\ \bibinfo {author} {\bibfnamefont {A.}~\bibnamefont {Wilce}},\ }\href {https://arxiv.org/abs/0805.3553} {\bibinfo {title} {{Teleportation in General Probabilistic Theories}}} (\bibinfo {year} {2008}),\ \Eprint {https://arxiv.org/abs/0805.3553} {arXiv:0805.3553 [quant-ph]} \BibitemShut {NoStop}%
\bibitem [{\citenamefont {Wigner}(1932)}]{wigner1932onthequantumcorrection}%
  \BibitemOpen
  \bibfield  {author} {\bibinfo {author} {\bibfnamefont {E.}~\bibnamefont {Wigner}},\ }\bibfield  {title} {\bibinfo {title} {On the quantum correction for thermodynamic equilibrium},\ }\href {https://doi.org/10.1103/PhysRev.40.749} {\bibfield  {journal} {\bibinfo  {journal} {Phys. Rev.}\ }\textbf {\bibinfo {volume} {40}},\ \bibinfo {pages} {749} (\bibinfo {year} {1932})}\BibitemShut {NoStop}%
\bibitem [{\citenamefont {Kolmogorov}(2018)}]{kolmogorov2018foundations}%
  \BibitemOpen
  \bibfield  {author} {\bibinfo {author} {\bibfnamefont {A.~N.}\ \bibnamefont {Kolmogorov}},\ }\href@noop {} {\emph {\bibinfo {title} {Foundations of the theory of probability: Second English Edition}}}\ (\bibinfo  {publisher} {Courier Dover Publications},\ \bibinfo {year} {2018})\BibitemShut {NoStop}%
\bibitem [{\citenamefont {Ferrie}\ and\ \citenamefont {Emerson}(2008)}]{ferrie2008frame}%
  \BibitemOpen
  \bibfield  {author} {\bibinfo {author} {\bibfnamefont {C.}~\bibnamefont {Ferrie}}\ and\ \bibinfo {author} {\bibfnamefont {J.}~\bibnamefont {Emerson}},\ }\bibfield  {title} {\bibinfo {title} {Frame representations of quantum mechanics and the necessity of negativity in quasi-probability representations},\ }\href {https://doi.org/10.1088/1751-8113/41/35/352001} {\bibfield  {journal} {\bibinfo  {journal} {Journal of Physics A: Mathematical and Theoretical}\ }\textbf {\bibinfo {volume} {41}},\ \bibinfo {pages} {352001} (\bibinfo {year} {2008})}\BibitemShut {NoStop}%
\bibitem [{\citenamefont {Ferrie}\ and\ \citenamefont {Emerson}(2009)}]{ferrie2009framed}%
  \BibitemOpen
  \bibfield  {author} {\bibinfo {author} {\bibfnamefont {C.}~\bibnamefont {Ferrie}}\ and\ \bibinfo {author} {\bibfnamefont {J.}~\bibnamefont {Emerson}},\ }\bibfield  {title} {\bibinfo {title} {Framed {H}ilbert space: hanging the quasi-probability pictures of quantum theory},\ }\href {https://doi.org/10.1088/1367-2630/11/6/063040} {\bibfield  {journal} {\bibinfo  {journal} {New Journal of Physics}\ }\textbf {\bibinfo {volume} {11}},\ \bibinfo {pages} {063040} (\bibinfo {year} {2009})}\BibitemShut {NoStop}%
\bibitem [{\citenamefont {Ferrie}(2011)}]{ferrie2011quasiprobabilitiesreview}%
  \BibitemOpen
  \bibfield  {author} {\bibinfo {author} {\bibfnamefont {C.}~\bibnamefont {Ferrie}},\ }\bibfield  {title} {\bibinfo {title} {Quasi-probability representations of quantum theory with applications to quantum information science},\ }\href {https://doi.org/10.1088/0034-4885/74/11/116001} {\bibfield  {journal} {\bibinfo  {journal} {Reports on Progress in Physics}\ }\textbf {\bibinfo {volume} {74}},\ \bibinfo {pages} {116001} (\bibinfo {year} {2011})}\BibitemShut {NoStop}%
\bibitem [{\citenamefont {Christensen}(2016)}]{christensen2016introduction}%
  \BibitemOpen
  \bibfield  {author} {\bibinfo {author} {\bibfnamefont {O.}~\bibnamefont {Christensen}},\ }\href {https://books.google.ca/books?id=KGRBDAAAQBAJ} {\emph {\bibinfo {title} {An Introduction to Frames and Riesz Bases}}},\ Applied and Numerical Harmonic Analysis\ (\bibinfo  {publisher} {Springer International Publishing},\ \bibinfo {year} {2016})\BibitemShut {NoStop}%
\bibitem [{\citenamefont {Husimi}(1940{\natexlab{a}})}]{husimi1940some}%
  \BibitemOpen
  \bibfield  {author} {\bibinfo {author} {\bibfnamefont {K.}~\bibnamefont {Husimi}},\ }\bibfield  {title} {\bibinfo {title} {Some formal properties of the density matrix},\ }\href {https://doi.org/10.11429/ppmsj1919.22.4_264} {\bibfield  {journal} {\bibinfo  {journal} {Proceedings of the Physico-Mathematical Society of Japan. 3rd Series}\ }\textbf {\bibinfo {volume} {22}},\ \bibinfo {pages} {264} (\bibinfo {year} {1940}{\natexlab{a}})}\BibitemShut {NoStop}%
\bibitem [{\citenamefont {Glauber}(1963)}]{glauber1963coherent}%
  \BibitemOpen
  \bibfield  {author} {\bibinfo {author} {\bibfnamefont {R.~J.}\ \bibnamefont {Glauber}},\ }\bibfield  {title} {\bibinfo {title} {Coherent and incoherent states of the radiation field},\ }\href {https://doi.org/10.1103/PhysRev.131.2766} {\bibfield  {journal} {\bibinfo  {journal} {Phys. Rev.}\ }\textbf {\bibinfo {volume} {131}},\ \bibinfo {pages} {2766} (\bibinfo {year} {1963})}\BibitemShut {NoStop}%
\bibitem [{\citenamefont {Sudarshan}(1963)}]{sudarshan1963equivalence}%
  \BibitemOpen
  \bibfield  {author} {\bibinfo {author} {\bibfnamefont {E.~C.~G.}\ \bibnamefont {Sudarshan}},\ }\bibfield  {title} {\bibinfo {title} {Equivalence of semiclassical and quantum mechanical descriptions of statistical light beams},\ }\href {https://doi.org/10.1103/PhysRevLett.10.277} {\bibfield  {journal} {\bibinfo  {journal} {Phys. Rev. Lett.}\ }\textbf {\bibinfo {volume} {10}},\ \bibinfo {pages} {277} (\bibinfo {year} {1963})}\BibitemShut {NoStop}%
\bibitem [{\citenamefont {Margenau}\ and\ \citenamefont {Hill}(1961)}]{margenau1961correlation}%
  \BibitemOpen
  \bibfield  {author} {\bibinfo {author} {\bibfnamefont {H.}~\bibnamefont {Margenau}}\ and\ \bibinfo {author} {\bibfnamefont {R.~N.}\ \bibnamefont {Hill}},\ }\bibfield  {title} {\bibinfo {title} {Correlation between measurements in quantum theory},\ }\href {https://doi.org/10.1143/ptp.26.722} {\bibfield  {journal} {\bibinfo  {journal} {Progress of Theoretical Physics}\ }\textbf {\bibinfo {volume} {26}},\ \bibinfo {pages} {722–738} (\bibinfo {year} {1961})}\BibitemShut {NoStop}%
\bibitem [{\citenamefont {Terletsky}(1937)}]{terletsky1937limiting}%
  \BibitemOpen
  \bibfield  {author} {\bibinfo {author} {\bibfnamefont {Y.~P.}\ \bibnamefont {Terletsky}},\ }\bibfield  {title} {\bibinfo {title} {The limiting transition from quantum to classical mechanics},\ }\href@noop {} {\bibfield  {journal} {\bibinfo  {journal} {J. Exp. Theor. Phys}\ }\textbf {\bibinfo {volume} {7}},\ \bibinfo {pages} {1290} (\bibinfo {year} {1937})}\BibitemShut {NoStop}%
\bibitem [{\citenamefont {Schmid}\ \emph {et~al.}(2024{\natexlab{a}})\citenamefont {Schmid}, \citenamefont {Selby}, \citenamefont {Pusey},\ and\ \citenamefont {Spekkens}}]{schmid2024structure}%
  \BibitemOpen
  \bibfield  {author} {\bibinfo {author} {\bibfnamefont {D.}~\bibnamefont {Schmid}}, \bibinfo {author} {\bibfnamefont {J.~H.}\ \bibnamefont {Selby}}, \bibinfo {author} {\bibfnamefont {M.~F.}\ \bibnamefont {Pusey}},\ and\ \bibinfo {author} {\bibfnamefont {R.~W.}\ \bibnamefont {Spekkens}},\ }\bibfield  {title} {\bibinfo {title} {A structure theorem for generalized-noncontextual ontological models},\ }\href {https://doi.org/10.22331/q-2024-03-14-1283} {\bibfield  {journal} {\bibinfo  {journal} {Quantum}\ }\textbf {\bibinfo {volume} {8}},\ \bibinfo {pages} {1283} (\bibinfo {year} {2024}{\natexlab{a}})}\BibitemShut {NoStop}%
\bibitem [{\citenamefont {van~de Wetering}(2018)}]{vandeWetering2018quantum}%
  \BibitemOpen
  \bibfield  {author} {\bibinfo {author} {\bibfnamefont {J.}~\bibnamefont {van~de Wetering}},\ }\bibfield  {title} {\bibinfo {title} {{Quantum Theory is a Quasi-stochastic Process Theory}},\ }\href {https://doi.org/10.4204/eptcs.266.12} {\bibfield  {journal} {\bibinfo  {journal} {Electronic Proceedings in Theoretical Computer Science}\ }\textbf {\bibinfo {volume} {266}},\ \bibinfo {pages} {179–196} (\bibinfo {year} {2018})}\BibitemShut {NoStop}%
\bibitem [{\citenamefont {Gheorghiu}\ and\ \citenamefont {Heunen}(2020)}]{gheorghiu2020ontological}%
  \BibitemOpen
  \bibfield  {author} {\bibinfo {author} {\bibfnamefont {A.}~\bibnamefont {Gheorghiu}}\ and\ \bibinfo {author} {\bibfnamefont {C.}~\bibnamefont {Heunen}},\ }\bibfield  {title} {\bibinfo {title} {Ontological models for quantum theory as functors},\ }\href {https://doi.org/10.4204/eptcs.318.12} {\bibfield  {journal} {\bibinfo  {journal} {Electronic Proceedings in Theoretical Computer Science}\ }\textbf {\bibinfo {volume} {318}},\ \bibinfo {pages} {196–212} (\bibinfo {year} {2020})}\BibitemShut {NoStop}%
\bibitem [{\citenamefont {Fritz}(2020)}]{fritz2020synthetic}%
  \BibitemOpen
  \bibfield  {author} {\bibinfo {author} {\bibfnamefont {T.}~\bibnamefont {Fritz}},\ }\bibfield  {title} {\bibinfo {title} {A synthetic approach to {M}arkov kernels, conditional independence and theorems on sufficient statistics},\ }\href {https://doi.org/10.1016/j.aim.2020.107239} {\bibfield  {journal} {\bibinfo  {journal} {Advances in Mathematics}\ }\textbf {\bibinfo {volume} {370}},\ \bibinfo {pages} {107239} (\bibinfo {year} {2020})}\BibitemShut {NoStop}%
\bibitem [{\citenamefont {Mac~Lane}(1998)}]{mac1998categories}%
  \BibitemOpen
  \bibfield  {author} {\bibinfo {author} {\bibfnamefont {S.}~\bibnamefont {Mac~Lane}},\ }\href {https://doi.org/https://doi.org/10.1007/978-1-4757-4721-8} {\emph {\bibinfo {title} {Categories for the working mathematician}}},\ Vol.~\bibinfo {volume} {5}\ (\bibinfo  {publisher} {Springer Science \& Business Media},\ \bibinfo {year} {1998})\BibitemShut {NoStop}%
\bibitem [{\citenamefont {Awodey}(2010)}]{awodey2010category}%
  \BibitemOpen
  \bibfield  {author} {\bibinfo {author} {\bibfnamefont {S.}~\bibnamefont {Awodey}},\ }\href {https://doi.org/https://dl.acm.org/doi/10.5555/2060081} {\emph {\bibinfo {title} {Category theory}}},\ Vol.~\bibinfo {volume} {52}\ (\bibinfo  {publisher} {OUP Oxford},\ \bibinfo {year} {2010})\BibitemShut {NoStop}%
\bibitem [{\citenamefont {Heunen}\ and\ \citenamefont {Vicary}(2019)}]{heunen2019categories}%
  \BibitemOpen
  \bibfield  {author} {\bibinfo {author} {\bibfnamefont {C.}~\bibnamefont {Heunen}}\ and\ \bibinfo {author} {\bibfnamefont {J.}~\bibnamefont {Vicary}},\ }\href {https://doi.org/10.1093/oso/9780198739623.001.0001} {\emph {\bibinfo {title} {Categories for Quantum Theory: an introduction}}}\ (\bibinfo  {publisher} {Oxford University Press},\ \bibinfo {year} {2019})\BibitemShut {NoStop}%
\bibitem [{\citenamefont {Selby}(2017)}]{selby2017process}%
  \BibitemOpen
  \bibfield  {author} {\bibinfo {author} {\bibfnamefont {J.~H.}\ \bibnamefont {Selby}},\ }\emph {\bibinfo {title} {A process theoretic triptych}},\ \href {https://www.cs.ox.ac.uk/people/aleks.kissinger/theses/bob/Selby_PhD.pdf} {Ph.D. thesis},\ \bibinfo  {school} {Imperial College London} (\bibinfo {year} {2017})\BibitemShut {NoStop}%
\bibitem [{\citenamefont {Coecke}\ and\ \citenamefont {Kissinger}(2017)}]{coecke2017picturing}%
  \BibitemOpen
  \bibfield  {author} {\bibinfo {author} {\bibfnamefont {B.}~\bibnamefont {Coecke}}\ and\ \bibinfo {author} {\bibfnamefont {A.}~\bibnamefont {Kissinger}},\ }\href {https://doi.org/10.1017/9781316219317} {\emph {\bibinfo {title} {Picturing Quantum Processes: A First Course in Quantum Theory and Diagrammatic Reasoning}}}\ (\bibinfo  {publisher} {Cambridge University Press},\ \bibinfo {year} {2017})\BibitemShut {NoStop}%
\bibitem [{\citenamefont {Selby}\ \emph {et~al.}(2025)\citenamefont {Selby}, \citenamefont {Stasinou}, \citenamefont {Wilson},\ and\ \citenamefont {Coecke}}]{selby2025generalisedprocesstheories}%
  \BibitemOpen
  \bibfield  {author} {\bibinfo {author} {\bibfnamefont {J.~H.}\ \bibnamefont {Selby}}, \bibinfo {author} {\bibfnamefont {M.~E.}\ \bibnamefont {Stasinou}}, \bibinfo {author} {\bibfnamefont {M.}~\bibnamefont {Wilson}},\ and\ \bibinfo {author} {\bibfnamefont {B.}~\bibnamefont {Coecke}},\ }\href {https://doi.org/https://doi.org/10.48550/arXiv.2502.10368} {\bibinfo {title} {{Generalised Process Theories}}} (\bibinfo {year} {2025}),\ \Eprint {https://arxiv.org/abs/2502.10368} {arXiv:2502.10368 [math.CT]} \BibitemShut {NoStop}%
\bibitem [{\citenamefont {Li}\ and\ \citenamefont {Tsing}(1992)}]{li1992introlinearpreserver}%
  \BibitemOpen
  \bibfield  {author} {\bibinfo {author} {\bibfnamefont {C.-K.}\ \bibnamefont {Li}}\ and\ \bibinfo {author} {\bibfnamefont {N.-K.}\ \bibnamefont {Tsing}},\ }\bibfield  {title} {\bibinfo {title} {Linear preserver problems: A brief introduction and some special techniques},\ }\href {https://doi.org/10.1016/0024-3795(92)90377-m} {\bibfield  {journal} {\bibinfo  {journal} {Linear Algebra and its Applications}\ }\textbf {\bibinfo {volume} {162–164}},\ \bibinfo {pages} {217–235} (\bibinfo {year} {1992})}\BibitemShut {NoStop}%
\bibitem [{\citenamefont {Li}\ and\ \citenamefont {Pierce}(2001)}]{li2001linearpreserverproblems}%
  \BibitemOpen
  \bibfield  {author} {\bibinfo {author} {\bibfnamefont {C.-K.}\ \bibnamefont {Li}}\ and\ \bibinfo {author} {\bibfnamefont {S.}~\bibnamefont {Pierce}},\ }\bibfield  {title} {\bibinfo {title} {Linear preserver problems},\ }\href {https://doi.org/10.1080/00029890.2001.11919790} {\bibfield  {journal} {\bibinfo  {journal} {The American Mathematical Monthly}\ }\textbf {\bibinfo {volume} {108}},\ \bibinfo {pages} {591–605} (\bibinfo {year} {2001})}\BibitemShut {NoStop}%
\bibitem [{\citenamefont {Fošner}\ \emph {et~al.}(2013)\citenamefont {Fošner}, \citenamefont {Huang}, \citenamefont {Li},\ and\ \citenamefont {Sze}}]{fosner2013linearpreserversquantuminfo}%
  \BibitemOpen
  \bibfield  {author} {\bibinfo {author} {\bibfnamefont {A.}~\bibnamefont {Fošner}}, \bibinfo {author} {\bibfnamefont {Z.}~\bibnamefont {Huang}}, \bibinfo {author} {\bibfnamefont {C.-K.}\ \bibnamefont {Li}},\ and\ \bibinfo {author} {\bibfnamefont {N.-S.}\ \bibnamefont {Sze}},\ }\bibfield  {title} {\bibinfo {title} {Linear preservers and quantum information science},\ }\href {https://doi.org/10.1080/03081087.2012.740029} {\bibfield  {journal} {\bibinfo  {journal} {Linear and Multilinear Algebra}\ }\textbf {\bibinfo {volume} {61}},\ \bibinfo {pages} {1377–1390} (\bibinfo {year} {2013})}\BibitemShut {NoStop}%
\bibitem [{\citenamefont {Frobenius}(1897)}]{frobenius1897uber}%
  \BibitemOpen
  \bibfield  {author} {\bibinfo {author} {\bibfnamefont {G.}~\bibnamefont {Frobenius}},\ }\bibfield  {title} {\bibinfo {title} {{Über die Darstellung der endlichen Gruppen durch lineare Substitutionen}},\ }\href {https://doi.org/10.3931/e-rara-18879} {\bibfield  {journal} {\bibinfo  {journal} {Sitzungsberichte der K\"oniglichen Gesellschaft der Wissenschaften, G\"ottingen}\ ,\ \bibinfo {pages} {1}} (\bibinfo {year} {1897})}\BibitemShut {NoStop}%
\bibitem [{\citenamefont {Beasley}(1988)}]{Beasley1988}%
  \BibitemOpen
  \bibfield  {author} {\bibinfo {author} {\bibfnamefont {L.~B.}\ \bibnamefont {Beasley}},\ }\bibfield  {title} {\bibinfo {title} {Linear operators on matrices: The invariance of rank-$k$ matrices},\ }\href {https://doi.org/10.1016/0024-3795(88)90242-x} {\bibfield  {journal} {\bibinfo  {journal} {Linear Algebra and its Applications}\ }\textbf {\bibinfo {volume} {107}},\ \bibinfo {pages} {161–167} (\bibinfo {year} {1988})}\BibitemShut {NoStop}%
\bibitem [{\citenamefont {Li}\ \emph {et~al.}(2002)\citenamefont {Li}, \citenamefont {Tam},\ and\ \citenamefont {Tsing}}]{li2002linearmapspreservingpermutation}%
  \BibitemOpen
  \bibfield  {author} {\bibinfo {author} {\bibfnamefont {C.-K.}\ \bibnamefont {Li}}, \bibinfo {author} {\bibfnamefont {B.-S.}\ \bibnamefont {Tam}},\ and\ \bibinfo {author} {\bibfnamefont {N.-K.}\ \bibnamefont {Tsing}},\ }\bibfield  {title} {\bibinfo {title} {Linear maps preserving permutation and stochastic matrices},\ }\href {https://doi.org/10.1016/s0024-3795(00)00242-1} {\bibfield  {journal} {\bibinfo  {journal} {Linear Algebra and its Applications}\ }\textbf {\bibinfo {volume} {341}},\ \bibinfo {pages} {5–22} (\bibinfo {year} {2002})}\BibitemShut {NoStop}%
\bibitem [{\citenamefont {Bre{\v{s}}ar}\ and\ \citenamefont {{\v{S}}emrl}(1997)}]{bresar_semrl1997linearpreservers}%
  \BibitemOpen
  \bibfield  {author} {\bibinfo {author} {\bibfnamefont {M.}~\bibnamefont {Bre{\v{s}}ar}}\ and\ \bibinfo {author} {\bibfnamefont {P.}~\bibnamefont {{\v{S}}emrl}},\ }\bibfield  {title} {\bibinfo {title} {{Linear preservers on $\mathcal{B}(X)$}},\ }\href {https://bibliotekanauki.pl/articles/1358665.pdf} {\bibfield  {journal} {\bibinfo  {journal} {Banach Center Publications}\ }\textbf {\bibinfo {volume} {38}},\ \bibinfo {pages} {49} (\bibinfo {year} {1997})}\BibitemShut {NoStop}%
\bibitem [{\citenamefont {Kirkwood}(1933)}]{kirkwood1933quantum}%
  \BibitemOpen
  \bibfield  {author} {\bibinfo {author} {\bibfnamefont {J.~G.}\ \bibnamefont {Kirkwood}},\ }\bibfield  {title} {\bibinfo {title} {Quantum statistics of almost classical assemblies},\ }\href {https://doi.org/10.1103/PhysRev.44.31} {\bibfield  {journal} {\bibinfo  {journal} {Phys. Rev.}\ }\textbf {\bibinfo {volume} {44}},\ \bibinfo {pages} {31} (\bibinfo {year} {1933})}\BibitemShut {NoStop}%
\bibitem [{\citenamefont {Dirac}(1945)}]{dirac1945ontheanalogy}%
  \BibitemOpen
  \bibfield  {author} {\bibinfo {author} {\bibfnamefont {P.~A.~M.}\ \bibnamefont {Dirac}},\ }\bibfield  {title} {\bibinfo {title} {On the analogy between classical and quantum mechanics},\ }\href {https://doi.org/10.1103/RevModPhys.17.195} {\bibfield  {journal} {\bibinfo  {journal} {Rev. Mod. Phys.}\ }\textbf {\bibinfo {volume} {17}},\ \bibinfo {pages} {195} (\bibinfo {year} {1945})}\BibitemShut {NoStop}%
\bibitem [{\citenamefont {Arvidsson-Shukur}\ \emph {et~al.}(2024)\citenamefont {Arvidsson-Shukur}, \citenamefont {Braasch~Jr}, \citenamefont {De~Bièvre}, \citenamefont {Dressel}, \citenamefont {Jordan}, \citenamefont {Langrenez}, \citenamefont {Lostaglio}, \citenamefont {Lundeen},\ and\ \citenamefont {Halpern}}]{arvidssonShukur2024properties}%
  \BibitemOpen
  \bibfield  {author} {\bibinfo {author} {\bibfnamefont {D.~R.~M.}\ \bibnamefont {Arvidsson-Shukur}}, \bibinfo {author} {\bibfnamefont {W.~F.}\ \bibnamefont {Braasch~Jr}}, \bibinfo {author} {\bibfnamefont {S.}~\bibnamefont {De~Bièvre}}, \bibinfo {author} {\bibfnamefont {J.}~\bibnamefont {Dressel}}, \bibinfo {author} {\bibfnamefont {A.~N.}\ \bibnamefont {Jordan}}, \bibinfo {author} {\bibfnamefont {C.}~\bibnamefont {Langrenez}}, \bibinfo {author} {\bibfnamefont {M.}~\bibnamefont {Lostaglio}}, \bibinfo {author} {\bibfnamefont {J.~S.}\ \bibnamefont {Lundeen}},\ and\ \bibinfo {author} {\bibfnamefont {N.~Y.}\ \bibnamefont {Halpern}},\ }\bibfield  {title} {\bibinfo {title} {Properties and applications of the {K}irkwood–{D}irac distribution},\ }\href {https://doi.org/10.1088/1367-2630/ada05d} {\bibfield  {journal} {\bibinfo  {journal} {New Journal of Physics}\ }\textbf {\bibinfo {volume} {26}},\ \bibinfo {pages} {121201} (\bibinfo {year} {2024})}\BibitemShut {NoStop}%
\bibitem [{\citenamefont {Wagner}\ \emph {et~al.}(2024)\citenamefont {Wagner}, \citenamefont {Schwartzman-Nowik}, \citenamefont {Paiva}, \citenamefont {Te’eni}, \citenamefont {Ruiz-Molero}, \citenamefont {Barbosa}, \citenamefont {Cohen},\ and\ \citenamefont {Galvão}}]{wagner2024quantumcircuits}%
  \BibitemOpen
  \bibfield  {author} {\bibinfo {author} {\bibfnamefont {R.}~\bibnamefont {Wagner}}, \bibinfo {author} {\bibfnamefont {Z.}~\bibnamefont {Schwartzman-Nowik}}, \bibinfo {author} {\bibfnamefont {I.~L.}\ \bibnamefont {Paiva}}, \bibinfo {author} {\bibfnamefont {A.}~\bibnamefont {Te’eni}}, \bibinfo {author} {\bibfnamefont {A.}~\bibnamefont {Ruiz-Molero}}, \bibinfo {author} {\bibfnamefont {R.~S.}\ \bibnamefont {Barbosa}}, \bibinfo {author} {\bibfnamefont {E.}~\bibnamefont {Cohen}},\ and\ \bibinfo {author} {\bibfnamefont {E.~F.}\ \bibnamefont {Galvão}},\ }\bibfield  {title} {\bibinfo {title} {Quantum circuits for measuring weak values, {K}irkwood–{D}irac quasiprobability distributions, and state spectra},\ }\href {https://doi.org/10.1088/2058-9565/ad124c} {\bibfield  {journal} {\bibinfo  {journal} {Quantum Science and Technology}\ }\textbf {\bibinfo {volume} {9}},\ \bibinfo {pages} {015030} (\bibinfo {year} {2024})}\BibitemShut {NoStop}%
\bibitem [{\citenamefont {Yunger~Halpern}\ \emph {et~al.}(2018)\citenamefont {Yunger~Halpern}, \citenamefont {Swingle},\ and\ \citenamefont {Dressel}}]{halpern2018quasiprobability}%
  \BibitemOpen
  \bibfield  {author} {\bibinfo {author} {\bibfnamefont {N.}~\bibnamefont {Yunger~Halpern}}, \bibinfo {author} {\bibfnamefont {B.}~\bibnamefont {Swingle}},\ and\ \bibinfo {author} {\bibfnamefont {J.}~\bibnamefont {Dressel}},\ }\bibfield  {title} {\bibinfo {title} {Quasiprobability behind the out-of-time-ordered correlator},\ }\href {https://doi.org/10.1103/PhysRevA.97.042105} {\bibfield  {journal} {\bibinfo  {journal} {Phys. Rev. A}\ }\textbf {\bibinfo {volume} {97}},\ \bibinfo {pages} {042105} (\bibinfo {year} {2018})}\BibitemShut {NoStop}%
\bibitem [{\citenamefont {Gherardini}\ and\ \citenamefont {De~Chiara}(2024)}]{gherardini2024quasiprobability}%
  \BibitemOpen
  \bibfield  {author} {\bibinfo {author} {\bibfnamefont {S.}~\bibnamefont {Gherardini}}\ and\ \bibinfo {author} {\bibfnamefont {G.}~\bibnamefont {De~Chiara}},\ }\bibfield  {title} {\bibinfo {title} {Quasiprobabilities in quantum thermodynamics and many-body systems},\ }\href {https://doi.org/10.1103/PRXQuantum.5.030201} {\bibfield  {journal} {\bibinfo  {journal} {PRX Quantum}\ }\textbf {\bibinfo {volume} {5}},\ \bibinfo {pages} {030201} (\bibinfo {year} {2024})}\BibitemShut {NoStop}%
\bibitem [{\citenamefont {Schmid}\ \emph {et~al.}(2024{\natexlab{b}})\citenamefont {Schmid}, \citenamefont {Baldij\~ao}, \citenamefont {Y\ifmmode~\bar{\imath}\else \={\i}\fi{}ng}, \citenamefont {Wagner},\ and\ \citenamefont {Selby}}]{schmid2024kirkwood}%
  \BibitemOpen
  \bibfield  {author} {\bibinfo {author} {\bibfnamefont {D.}~\bibnamefont {Schmid}}, \bibinfo {author} {\bibfnamefont {R.~D.}\ \bibnamefont {Baldij\~ao}}, \bibinfo {author} {\bibfnamefont {Y.}~\bibnamefont {Y\ifmmode~\bar{\imath}\else \={\i}\fi{}ng}}, \bibinfo {author} {\bibfnamefont {R.}~\bibnamefont {Wagner}},\ and\ \bibinfo {author} {\bibfnamefont {J.~H.}\ \bibnamefont {Selby}},\ }\bibfield  {title} {\bibinfo {title} {Kirkwood--{D}irac representations beyond quantum states and their relation to noncontextuality},\ }\href {https://doi.org/10.1103/PhysRevA.110.052206} {\bibfield  {journal} {\bibinfo  {journal} {Phys. Rev. A}\ }\textbf {\bibinfo {volume} {110}},\ \bibinfo {pages} {052206} (\bibinfo {year} {2024}{\natexlab{b}})}\BibitemShut {NoStop}%
\bibitem [{\citenamefont {Arvidsson-Shukur}\ \emph {et~al.}(2020)\citenamefont {Arvidsson-Shukur}, \citenamefont {Yunger~Halpern}, \citenamefont {Lepage}, \citenamefont {Lasek}, \citenamefont {Barnes},\ and\ \citenamefont {Lloyd}}]{arvidssonshukur2020quantumadvantage}%
  \BibitemOpen
  \bibfield  {author} {\bibinfo {author} {\bibfnamefont {D.~R.~M.}\ \bibnamefont {Arvidsson-Shukur}}, \bibinfo {author} {\bibfnamefont {N.}~\bibnamefont {Yunger~Halpern}}, \bibinfo {author} {\bibfnamefont {H.~V.}\ \bibnamefont {Lepage}}, \bibinfo {author} {\bibfnamefont {A.~A.}\ \bibnamefont {Lasek}}, \bibinfo {author} {\bibfnamefont {C.~H.~W.}\ \bibnamefont {Barnes}},\ and\ \bibinfo {author} {\bibfnamefont {S.}~\bibnamefont {Lloyd}},\ }\bibfield  {title} {\bibinfo {title} {Quantum advantage in postselected metrology},\ }\href {https://doi.org/10.1038/s41467-020-17559-w} {\bibfield  {journal} {\bibinfo  {journal} {Nature Communications}\ }\textbf {\bibinfo {volume} {11}},\ \bibinfo {pages} {3775} (\bibinfo {year} {2020})}\BibitemShut {NoStop}%
\bibitem [{\citenamefont {Lupu-Gladstein}\ \emph {et~al.}(2022)\citenamefont {Lupu-Gladstein}, \citenamefont {Yilmaz}, \citenamefont {Arvidsson-Shukur}, \citenamefont {Brodutch}, \citenamefont {Pang}, \citenamefont {Steinberg},\ and\ \citenamefont {Halpern}}]{lupu2022negative}%
  \BibitemOpen
  \bibfield  {author} {\bibinfo {author} {\bibfnamefont {N.}~\bibnamefont {Lupu-Gladstein}}, \bibinfo {author} {\bibfnamefont {Y.~B.}\ \bibnamefont {Yilmaz}}, \bibinfo {author} {\bibfnamefont {D.~R.~M.}\ \bibnamefont {Arvidsson-Shukur}}, \bibinfo {author} {\bibfnamefont {A.}~\bibnamefont {Brodutch}}, \bibinfo {author} {\bibfnamefont {A.~O.~T.}\ \bibnamefont {Pang}}, \bibinfo {author} {\bibfnamefont {A.~M.}\ \bibnamefont {Steinberg}},\ and\ \bibinfo {author} {\bibfnamefont {N.~Y.}\ \bibnamefont {Halpern}},\ }\bibfield  {title} {\bibinfo {title} {Negative quasiprobabilities enhance phase estimation in quantum-optics experiment},\ }\href {https://doi.org/10.1103/PhysRevLett.128.220504} {\bibfield  {journal} {\bibinfo  {journal} {Phys. Rev. Lett.}\ }\textbf {\bibinfo {volume} {128}},\ \bibinfo {pages} {220504} (\bibinfo {year} {2022})}\BibitemShut {NoStop}%
\bibitem [{\citenamefont {Das}\ \emph {et~al.}(2023)\citenamefont {Das}, \citenamefont {Modak},\ and\ \citenamefont {Bera}}]{das2023saturating}%
  \BibitemOpen
  \bibfield  {author} {\bibinfo {author} {\bibfnamefont {S.}~\bibnamefont {Das}}, \bibinfo {author} {\bibfnamefont {S.}~\bibnamefont {Modak}},\ and\ \bibinfo {author} {\bibfnamefont {M.~N.}\ \bibnamefont {Bera}},\ }\bibfield  {title} {\bibinfo {title} {Saturating quantum advantages in postselected metrology with the positive {K}irkwood--{D}irac distribution},\ }\href {https://doi.org/10.1103/PhysRevA.107.042413} {\bibfield  {journal} {\bibinfo  {journal} {Phys. Rev. A}\ }\textbf {\bibinfo {volume} {107}},\ \bibinfo {pages} {042413} (\bibinfo {year} {2023})}\BibitemShut {NoStop}%
\bibitem [{\citenamefont {Jenne}\ and\ \citenamefont {Arvidsson-Shukur}(2022)}]{jenne2022unbounded}%
  \BibitemOpen
  \bibfield  {author} {\bibinfo {author} {\bibfnamefont {J.~H.}\ \bibnamefont {Jenne}}\ and\ \bibinfo {author} {\bibfnamefont {D.~R.~M.}\ \bibnamefont {Arvidsson-Shukur}},\ }\bibfield  {title} {\bibinfo {title} {Unbounded and lossless compression of multiparameter quantum information},\ }\href {https://doi.org/10.1103/PhysRevA.106.042404} {\bibfield  {journal} {\bibinfo  {journal} {Phys. Rev. A}\ }\textbf {\bibinfo {volume} {106}},\ \bibinfo {pages} {042404} (\bibinfo {year} {2022})}\BibitemShut {NoStop}%
\bibitem [{\citenamefont {Burkat}\ and\ \citenamefont {Strelchuk}(2025)}]{burkat2025structurepositivityclassicalsimulability}%
  \BibitemOpen
  \bibfield  {author} {\bibinfo {author} {\bibfnamefont {J.}~\bibnamefont {Burkat}}\ and\ \bibinfo {author} {\bibfnamefont {S.}~\bibnamefont {Strelchuk}},\ }\href {https://arxiv.org/abs/2502.11784} {\bibinfo {title} {{Structure, Positivity and Classical Simulability of Kirkwood-Dirac Distributions}}} (\bibinfo {year} {2025}),\ \Eprint {https://arxiv.org/abs/2502.11784} {arXiv:2502.11784 [quant-ph]} \BibitemShut {NoStop}%
\bibitem [{\citenamefont {Thio}\ \emph {et~al.}(2025)\citenamefont {Thio}, \citenamefont {Yang}, \citenamefont {Bièvre}, \citenamefont {Barnes},\ and\ \citenamefont {Arvidsson-Shukur}}]{thio2025kirkwooddiracnonpositivitynecessaryresource}%
  \BibitemOpen
  \bibfield  {author} {\bibinfo {author} {\bibfnamefont {J.~J.}\ \bibnamefont {Thio}}, \bibinfo {author} {\bibfnamefont {S.}~\bibnamefont {Yang}}, \bibinfo {author} {\bibfnamefont {S.~D.}\ \bibnamefont {Bièvre}}, \bibinfo {author} {\bibfnamefont {C.~H.~W.}\ \bibnamefont {Barnes}},\ and\ \bibinfo {author} {\bibfnamefont {D.~R.~M.}\ \bibnamefont {Arvidsson-Shukur}},\ }\href {https://arxiv.org/abs/2506.08092} {\bibinfo {title} {{Kirkwood-Dirac Nonpositivity is a Necessary Resource for Quantum Computing}}} (\bibinfo {year} {2025}),\ \Eprint {https://arxiv.org/abs/2506.08092} {arXiv:2506.08092 [quant-ph]} \BibitemShut {NoStop}%
\bibitem [{\citenamefont {Yunger~Halpern}(2017)}]{halpern2017jarzynski}%
  \BibitemOpen
  \bibfield  {author} {\bibinfo {author} {\bibfnamefont {N.}~\bibnamefont {Yunger~Halpern}},\ }\bibfield  {title} {\bibinfo {title} {Jarzynski-like equality for the out-of-time-ordered correlator},\ }\href {https://doi.org/10.1103/PhysRevA.95.012120} {\bibfield  {journal} {\bibinfo  {journal} {Phys. Rev. A}\ }\textbf {\bibinfo {volume} {95}},\ \bibinfo {pages} {012120} (\bibinfo {year} {2017})}\BibitemShut {NoStop}%
\bibitem [{\citenamefont {Levy}\ and\ \citenamefont {Lostaglio}(2020)}]{levy2020quasiprobability}%
  \BibitemOpen
  \bibfield  {author} {\bibinfo {author} {\bibfnamefont {A.}~\bibnamefont {Levy}}\ and\ \bibinfo {author} {\bibfnamefont {M.}~\bibnamefont {Lostaglio}},\ }\bibfield  {title} {\bibinfo {title} {Quasiprobability distribution for heat fluctuations in the quantum regime},\ }\href {https://doi.org/10.1103/PRXQuantum.1.010309} {\bibfield  {journal} {\bibinfo  {journal} {PRX Quantum}\ }\textbf {\bibinfo {volume} {1}},\ \bibinfo {pages} {010309} (\bibinfo {year} {2020})}\BibitemShut {NoStop}%
\bibitem [{\citenamefont {Donati}\ \emph {et~al.}(2024)\citenamefont {Donati}, \citenamefont {Cataliotti},\ and\ \citenamefont {Gherardini}}]{donati2024energetics}%
  \BibitemOpen
  \bibfield  {author} {\bibinfo {author} {\bibfnamefont {L.}~\bibnamefont {Donati}}, \bibinfo {author} {\bibfnamefont {F.~S.}\ \bibnamefont {Cataliotti}},\ and\ \bibinfo {author} {\bibfnamefont {S.}~\bibnamefont {Gherardini}},\ }\bibfield  {title} {\bibinfo {title} {Energetics and quantumness of {F}ano coherence generation},\ }\href {https://doi.org/10.1038/s41598-024-67037-2} {\bibfield  {journal} {\bibinfo  {journal} {Scientific Reports}\ }\textbf {\bibinfo {volume} {14}},\ \bibinfo {pages} {20145} (\bibinfo {year} {2024})}\BibitemShut {NoStop}%
\bibitem [{\citenamefont {Hern\'andez-G\'omez}\ \emph {et~al.}(2024)\citenamefont {Hern\'andez-G\'omez}, \citenamefont {Gherardini}, \citenamefont {Belenchia}, \citenamefont {Lostaglio}, \citenamefont {Levy},\ and\ \citenamefont {Fabbri}}]{hernandez2024projective}%
  \BibitemOpen
  \bibfield  {author} {\bibinfo {author} {\bibfnamefont {S.}~\bibnamefont {Hern\'andez-G\'omez}}, \bibinfo {author} {\bibfnamefont {S.}~\bibnamefont {Gherardini}}, \bibinfo {author} {\bibfnamefont {A.}~\bibnamefont {Belenchia}}, \bibinfo {author} {\bibfnamefont {M.}~\bibnamefont {Lostaglio}}, \bibinfo {author} {\bibfnamefont {A.}~\bibnamefont {Levy}},\ and\ \bibinfo {author} {\bibfnamefont {N.}~\bibnamefont {Fabbri}},\ }\bibfield  {title} {\bibinfo {title} {Projective measurements can probe nonclassical work extraction and time correlations},\ }\href {https://doi.org/10.1103/PhysRevResearch.6.023280} {\bibfield  {journal} {\bibinfo  {journal} {Phys. Rev. Res.}\ }\textbf {\bibinfo {volume} {6}},\ \bibinfo {pages} {023280} (\bibinfo {year} {2024})}\BibitemShut {NoStop}%
\bibitem [{\citenamefont {Pezzutto}\ \emph {et~al.}(2025)\citenamefont {Pezzutto}, \citenamefont {Chiara},\ and\ \citenamefont {Gherardini}}]{pezzutto2025nonpositiveenergyquasidistributionscoherent}%
  \BibitemOpen
  \bibfield  {author} {\bibinfo {author} {\bibfnamefont {M.}~\bibnamefont {Pezzutto}}, \bibinfo {author} {\bibfnamefont {G.~D.}\ \bibnamefont {Chiara}},\ and\ \bibinfo {author} {\bibfnamefont {S.}~\bibnamefont {Gherardini}},\ }\href {https://arxiv.org/abs/2503.07759} {\bibinfo {title} {Non-positive energy quasidistributions in coherent collision models}} (\bibinfo {year} {2025}),\ \Eprint {https://arxiv.org/abs/2503.07759} {arXiv:2503.07759 [quant-ph]} \BibitemShut {NoStop}%
\bibitem [{\citenamefont {Santini}\ \emph {et~al.}(2023)\citenamefont {Santini}, \citenamefont {Solfanelli}, \citenamefont {Gherardini},\ and\ \citenamefont {Collura}}]{santini2023work}%
  \BibitemOpen
  \bibfield  {author} {\bibinfo {author} {\bibfnamefont {A.}~\bibnamefont {Santini}}, \bibinfo {author} {\bibfnamefont {A.}~\bibnamefont {Solfanelli}}, \bibinfo {author} {\bibfnamefont {S.}~\bibnamefont {Gherardini}},\ and\ \bibinfo {author} {\bibfnamefont {M.}~\bibnamefont {Collura}},\ }\bibfield  {title} {\bibinfo {title} {Work statistics, quantum signatures, and enhanced work extraction in quadratic fermionic models},\ }\href {https://doi.org/10.1103/PhysRevB.108.104308} {\bibfield  {journal} {\bibinfo  {journal} {Phys. Rev. B}\ }\textbf {\bibinfo {volume} {108}},\ \bibinfo {pages} {104308} (\bibinfo {year} {2023})}\BibitemShut {NoStop}%
\bibitem [{\citenamefont {Gonz\'alez~Alonso}\ \emph {et~al.}(2019)\citenamefont {Gonz\'alez~Alonso}, \citenamefont {Yunger~Halpern},\ and\ \citenamefont {Dressel}}]{alonso2019out}%
  \BibitemOpen
  \bibfield  {author} {\bibinfo {author} {\bibfnamefont {J.~R.}\ \bibnamefont {Gonz\'alez~Alonso}}, \bibinfo {author} {\bibfnamefont {N.}~\bibnamefont {Yunger~Halpern}},\ and\ \bibinfo {author} {\bibfnamefont {J.}~\bibnamefont {Dressel}},\ }\bibfield  {title} {\bibinfo {title} {Out-of-time-ordered-correlator quasiprobabilities robustly witness scrambling},\ }\href {https://doi.org/10.1103/PhysRevLett.122.040404} {\bibfield  {journal} {\bibinfo  {journal} {Phys. Rev. Lett.}\ }\textbf {\bibinfo {volume} {122}},\ \bibinfo {pages} {040404} (\bibinfo {year} {2019})}\BibitemShut {NoStop}%
\bibitem [{\citenamefont {Gao}\ \emph {et~al.}(2023)\citenamefont {Gao}, \citenamefont {Li}, \citenamefont {Mishra}, \citenamefont {Yan}, \citenamefont {Simonov},\ and\ \citenamefont {Chiribella}}]{gao2023measuring}%
  \BibitemOpen
  \bibfield  {author} {\bibinfo {author} {\bibfnamefont {N.}~\bibnamefont {Gao}}, \bibinfo {author} {\bibfnamefont {D.}~\bibnamefont {Li}}, \bibinfo {author} {\bibfnamefont {A.}~\bibnamefont {Mishra}}, \bibinfo {author} {\bibfnamefont {J.}~\bibnamefont {Yan}}, \bibinfo {author} {\bibfnamefont {K.}~\bibnamefont {Simonov}},\ and\ \bibinfo {author} {\bibfnamefont {G.}~\bibnamefont {Chiribella}},\ }\bibfield  {title} {\bibinfo {title} {Measuring incompatibility and clustering quantum observables with a quantum switch},\ }\href {https://doi.org/10.1103/PhysRevLett.130.170201} {\bibfield  {journal} {\bibinfo  {journal} {Phys. Rev. Lett.}\ }\textbf {\bibinfo {volume} {130}},\ \bibinfo {pages} {170201} (\bibinfo {year} {2023})}\BibitemShut {NoStop}%
\bibitem [{\citenamefont {Ban}(2021)}]{ban2021onsequential}%
  \BibitemOpen
  \bibfield  {author} {\bibinfo {author} {\bibfnamefont {M.}~\bibnamefont {Ban}},\ }\bibfield  {title} {\bibinfo {title} {On sequential measurements with indefinite causal order},\ }\href {https://doi.org/10.1016/j.physleta.2021.127383} {\bibfield  {journal} {\bibinfo  {journal} {Physics Letters A}\ }\textbf {\bibinfo {volume} {403}},\ \bibinfo {pages} {127383} (\bibinfo {year} {2021})}\BibitemShut {NoStop}%
\bibitem [{\citenamefont {Azado}\ \emph {et~al.}(2025)\citenamefont {Azado}, \citenamefont {Wagner}, \citenamefont {Barbosa},\ and\ \citenamefont {Galvão}}]{azado2025measuringunitaryinvariantsquantum}%
  \BibitemOpen
  \bibfield  {author} {\bibinfo {author} {\bibfnamefont {P.~C.}\ \bibnamefont {Azado}}, \bibinfo {author} {\bibfnamefont {R.}~\bibnamefont {Wagner}}, \bibinfo {author} {\bibfnamefont {R.~S.}\ \bibnamefont {Barbosa}},\ and\ \bibinfo {author} {\bibfnamefont {E.~F.}\ \bibnamefont {Galvão}},\ }\href {https://doi.org/https://doi.org/10.48550/arXiv.2508.02345} {\bibinfo {title} {Measuring unitary invariants with the quantum switch}} (\bibinfo {year} {2025}),\ \Eprint {https://arxiv.org/abs/2508.02345} {arXiv:2508.02345 [quant-ph]} \BibitemShut {NoStop}%
\bibitem [{\citenamefont {Wagner}\ and\ \citenamefont {Galv\~ao}(2023)}]{wagner2023simple}%
  \BibitemOpen
  \bibfield  {author} {\bibinfo {author} {\bibfnamefont {R.}~\bibnamefont {Wagner}}\ and\ \bibinfo {author} {\bibfnamefont {E.~F.}\ \bibnamefont {Galv\~ao}},\ }\bibfield  {title} {\bibinfo {title} {Simple proof that anomalous weak values require coherence},\ }\href {https://doi.org/10.1103/PhysRevA.108.L040202} {\bibfield  {journal} {\bibinfo  {journal} {Phys. Rev. A}\ }\textbf {\bibinfo {volume} {108}},\ \bibinfo {pages} {L040202} (\bibinfo {year} {2023})}\BibitemShut {NoStop}%
\bibitem [{\citenamefont {de~Gosson}\ and\ \citenamefont {de~Gosson}(2012)}]{degosson2012weakvalues}%
  \BibitemOpen
  \bibfield  {author} {\bibinfo {author} {\bibfnamefont {M.~A.}\ \bibnamefont {de~Gosson}}\ and\ \bibinfo {author} {\bibfnamefont {S.~M.}\ \bibnamefont {de~Gosson}},\ }\bibfield  {title} {\bibinfo {title} {Weak values of a quantum observable and the cross-{W}igner distribution},\ }\href {https://doi.org/10.1016/j.physleta.2011.11.007} {\bibfield  {journal} {\bibinfo  {journal} {Physics Letters A}\ }\textbf {\bibinfo {volume} {376}},\ \bibinfo {pages} {293–296} (\bibinfo {year} {2012})}\BibitemShut {NoStop}%
\bibitem [{\citenamefont {Hofmann}(2011)}]{hofmann2011ontherole}%
  \BibitemOpen
  \bibfield  {author} {\bibinfo {author} {\bibfnamefont {H.~F.}\ \bibnamefont {Hofmann}},\ }\bibfield  {title} {\bibinfo {title} {On the role of complex phases in the quantum statistics of weak measurements},\ }\href {https://doi.org/10.1088/1367-2630/13/10/103009} {\bibfield  {journal} {\bibinfo  {journal} {New Journal of Physics}\ }\textbf {\bibinfo {volume} {13}},\ \bibinfo {pages} {103009} (\bibinfo {year} {2011})}\BibitemShut {NoStop}%
\bibitem [{\citenamefont {Hofmann}(2012)}]{hofmann2012complex}%
  \BibitemOpen
  \bibfield  {author} {\bibinfo {author} {\bibfnamefont {H.~F.}\ \bibnamefont {Hofmann}},\ }\bibfield  {title} {\bibinfo {title} {Complex joint probabilities as expressions of reversible transformations in quantum mechanics},\ }\href {https://doi.org/10.1088/1367-2630/14/4/043031} {\bibfield  {journal} {\bibinfo  {journal} {New Journal of Physics}\ }\textbf {\bibinfo {volume} {14}},\ \bibinfo {pages} {043031} (\bibinfo {year} {2012})}\BibitemShut {NoStop}%
\bibitem [{\citenamefont {Hofmann}(2014)}]{hofmann2014derivation}%
  \BibitemOpen
  \bibfield  {author} {\bibinfo {author} {\bibfnamefont {H.~F.}\ \bibnamefont {Hofmann}},\ }\bibfield  {title} {\bibinfo {title} {Derivation of quantum mechanics from a single fundamental modification of the relations between physical properties},\ }\href {https://doi.org/10.1103/PhysRevA.89.042115} {\bibfield  {journal} {\bibinfo  {journal} {Phys. Rev. A}\ }\textbf {\bibinfo {volume} {89}},\ \bibinfo {pages} {042115} (\bibinfo {year} {2014})}\BibitemShut {NoStop}%
\bibitem [{\citenamefont {Cunha}\ \emph {et~al.}(2001)\citenamefont {Cunha}, \citenamefont {Man’ko},\ and\ \citenamefont {Scully}}]{terracunha2001}%
  \BibitemOpen
  \bibfield  {author} {\bibinfo {author} {\bibfnamefont {M.~O.~T.}\ \bibnamefont {Cunha}}, \bibinfo {author} {\bibfnamefont {V.~I.}\ \bibnamefont {Man’ko}},\ and\ \bibinfo {author} {\bibfnamefont {M.~O.}\ \bibnamefont {Scully}},\ }\bibfield  {title} {\bibinfo {title} {{Quasiprobability and Probability Distributions for Spin-$1/2$ States}},\ }\href {https://doi.org/10.1023/a:1012373419313} {\bibfield  {journal} {\bibinfo  {journal} {Foundations of Physics Letters}\ }\textbf {\bibinfo {volume} {14}},\ \bibinfo {pages} {103–117} (\bibinfo {year} {2001})}\BibitemShut {NoStop}%
\bibitem [{\citenamefont {Chiribella}\ \emph {et~al.}(2010)\citenamefont {Chiribella}, \citenamefont {D'Ariano},\ and\ \citenamefont {Perinotti}}]{dariano2017probabilistic}%
  \BibitemOpen
  \bibfield  {author} {\bibinfo {author} {\bibfnamefont {G.}~\bibnamefont {Chiribella}}, \bibinfo {author} {\bibfnamefont {G.~M.}\ \bibnamefont {D'Ariano}},\ and\ \bibinfo {author} {\bibfnamefont {P.}~\bibnamefont {Perinotti}},\ }\bibfield  {title} {\bibinfo {title} {Probabilistic theories with purification},\ }\href {https://doi.org/10.1103/PhysRevA.81.062348} {\bibfield  {journal} {\bibinfo  {journal} {Phys. Rev. A}\ }\textbf {\bibinfo {volume} {81}},\ \bibinfo {pages} {062348} (\bibinfo {year} {2010})}\BibitemShut {NoStop}%
\bibitem [{\citenamefont {D'Ariano}\ \emph {et~al.}(2017)\citenamefont {D'Ariano}, \citenamefont {Chiribella},\ and\ \citenamefont {Perinotti}}]{dariano2017quantum}%
  \BibitemOpen
  \bibfield  {author} {\bibinfo {author} {\bibfnamefont {G.~M.}\ \bibnamefont {D'Ariano}}, \bibinfo {author} {\bibfnamefont {G.}~\bibnamefont {Chiribella}},\ and\ \bibinfo {author} {\bibfnamefont {P.}~\bibnamefont {Perinotti}},\ }\href {https://doi.org/https://doi.org/10.1017/9781107338340} {\emph {\bibinfo {title} {Quantum theory from first principles: an informational approach}}}\ (\bibinfo  {publisher} {Cambridge University Press},\ \bibinfo {year} {2017})\BibitemShut {NoStop}%
\bibitem [{\citenamefont {Lami}(2018)}]{lami2018nonclassicalcorrelationsquantummechanics}%
  \BibitemOpen
  \bibfield  {author} {\bibinfo {author} {\bibfnamefont {L.}~\bibnamefont {Lami}},\ }\href {https://arxiv.org/abs/1803.02902} {\bibinfo {title} {Non-classical correlations in quantum mechanics and beyond}} (\bibinfo {year} {2018}),\ \Eprint {https://arxiv.org/abs/1803.02902} {arXiv:1803.02902 [quant-ph]} \BibitemShut {NoStop}%
\bibitem [{\citenamefont {Gogioso}\ and\ \citenamefont {Scandolo}(2018)}]{gogioso2018categorical}%
  \BibitemOpen
  \bibfield  {author} {\bibinfo {author} {\bibfnamefont {S.}~\bibnamefont {Gogioso}}\ and\ \bibinfo {author} {\bibfnamefont {C.~M.}\ \bibnamefont {Scandolo}},\ }\bibfield  {title} {\bibinfo {title} {Categorical probabilistic theories},\ }\href {https://doi.org/10.4204/eptcs.266.23} {\bibfield  {journal} {\bibinfo  {journal} {Electronic Proceedings in Theoretical Computer Science}\ }\textbf {\bibinfo {volume} {266}},\ \bibinfo {pages} {367–385} (\bibinfo {year} {2018})}\BibitemShut {NoStop}%
\bibitem [{\citenamefont {Selby}\ \emph {et~al.}(2021)\citenamefont {Selby}, \citenamefont {Scandolo},\ and\ \citenamefont {Coecke}}]{selby2021reconstructing}%
  \BibitemOpen
  \bibfield  {author} {\bibinfo {author} {\bibfnamefont {J.~H.}\ \bibnamefont {Selby}}, \bibinfo {author} {\bibfnamefont {C.~M.}\ \bibnamefont {Scandolo}},\ and\ \bibinfo {author} {\bibfnamefont {B.}~\bibnamefont {Coecke}},\ }\bibfield  {title} {\bibinfo {title} {Reconstructing quantum theory from diagrammatic postulates},\ }\href {https://doi.org/10.22331/q-2021-04-28-445} {\bibfield  {journal} {\bibinfo  {journal} {{Quantum}}\ }\textbf {\bibinfo {volume} {5}},\ \bibinfo {pages} {445} (\bibinfo {year} {2021})}\BibitemShut {NoStop}%
\bibitem [{\citenamefont {Coecke}\ and\ \citenamefont {Kissinger}(2018)}]{coecke2018categorical}%
  \BibitemOpen
  \bibfield  {author} {\bibinfo {author} {\bibfnamefont {B.}~\bibnamefont {Coecke}}\ and\ \bibinfo {author} {\bibfnamefont {A.}~\bibnamefont {Kissinger}},\ }\href {https://doi.org/10.1093/oso/9780198748991.003.0012} {\emph {\bibinfo {title} {Categorical Quantum Mechanics I: Causal Quantum Processes}}}\ (\bibinfo  {publisher} {Oxford University Press},\ \bibinfo {year} {2018})\ p.\ \bibinfo {pages} {286–328}\BibitemShut {NoStop}%
\bibitem [{\citenamefont {Schmid}\ \emph {et~al.}(2021{\natexlab{a}})\citenamefont {Schmid}, \citenamefont {Selby},\ and\ \citenamefont {Spekkens}}]{schmid2021unscramblingomelettecausationinference}%
  \BibitemOpen
  \bibfield  {author} {\bibinfo {author} {\bibfnamefont {D.}~\bibnamefont {Schmid}}, \bibinfo {author} {\bibfnamefont {J.~H.}\ \bibnamefont {Selby}},\ and\ \bibinfo {author} {\bibfnamefont {R.~W.}\ \bibnamefont {Spekkens}},\ }\href {https://arxiv.org/abs/2009.03297} {\bibinfo {title} {Unscrambling the omelette of causation and inference: The framework of causal-inferential theories}} (\bibinfo {year} {2021}{\natexlab{a}}),\ \Eprint {https://arxiv.org/abs/2009.03297} {arXiv:2009.03297 [quant-ph]} \BibitemShut {NoStop}%
\bibitem [{\citenamefont {Barnum}\ and\ \citenamefont {Hilgert}(2019)}]{barnum2019stronglysymmetricspectralconvex}%
  \BibitemOpen
  \bibfield  {author} {\bibinfo {author} {\bibfnamefont {H.}~\bibnamefont {Barnum}}\ and\ \bibinfo {author} {\bibfnamefont {J.}~\bibnamefont {Hilgert}},\ }\href {https://arxiv.org/abs/1904.03753} {\bibinfo {title} {Strongly symmetric spectral convex bodies are {J}ordan algebra state spaces}} (\bibinfo {year} {2019}),\ \Eprint {https://arxiv.org/abs/1904.03753} {arXiv:1904.03753 [math-ph]} \BibitemShut {NoStop}%
\bibitem [{\citenamefont {Barnum}\ \emph {et~al.}(2020)\citenamefont {Barnum}, \citenamefont {Graydon},\ and\ \citenamefont {Wilce}}]{barnum2020composites}%
  \BibitemOpen
  \bibfield  {author} {\bibinfo {author} {\bibfnamefont {H.}~\bibnamefont {Barnum}}, \bibinfo {author} {\bibfnamefont {M.~A.}\ \bibnamefont {Graydon}},\ and\ \bibinfo {author} {\bibfnamefont {A.}~\bibnamefont {Wilce}},\ }\bibfield  {title} {\bibinfo {title} {Composites and {C}ategories of {E}uclidean {J}ordan {A}lgebras},\ }\href {https://doi.org/10.22331/q-2020-11-08-359} {\bibfield  {journal} {\bibinfo  {journal} {{Quantum}}\ }\textbf {\bibinfo {volume} {4}},\ \bibinfo {pages} {359} (\bibinfo {year} {2020})}\BibitemShut {NoStop}%
\bibitem [{\citenamefont {Baldij{\~a}o}\ \emph {et~al.}(2022)\citenamefont {Baldij{\~a}o} \emph {et~al.}}]{baldijao2022quantum}%
  \BibitemOpen
  \bibfield  {author} {\bibinfo {author} {\bibfnamefont {R.~D.}\ \bibnamefont {Baldij{\~a}o}} \emph {et~al.},\ }\emph {\bibinfo {title} {Quantum {D}arwinism and contextuality}},\ \href {https://repositorio.unicamp.br/acervo/detalhe/1238593} {Ph.D. thesis},\ \bibinfo  {school} {Universidade Estadual de Campinas (Unicamp). Instituto de F{\'\i}sica Gleb Wataghin} (\bibinfo {year} {2022})\BibitemShut {NoStop}%
\bibitem [{\citenamefont {Erba}\ and\ \citenamefont {Perinotti}(2024)}]{erba2024compositionrulequantumsystems}%
  \BibitemOpen
  \bibfield  {author} {\bibinfo {author} {\bibfnamefont {M.}~\bibnamefont {Erba}}\ and\ \bibinfo {author} {\bibfnamefont {P.}~\bibnamefont {Perinotti}},\ }\href {https://arxiv.org/abs/2411.15964} {\bibinfo {title} {The composition rule for quantum systems is not the only possible one}} (\bibinfo {year} {2024}),\ \Eprint {https://arxiv.org/abs/2411.15964} {arXiv:2411.15964 [quant-ph]} \BibitemShut {NoStop}%
\bibitem [{\citenamefont {Barnum}\ \emph {et~al.}(2023)\citenamefont {Barnum}, \citenamefont {Ududec},\ and\ \citenamefont {van~de Wetering}}]{barnum2023selfdualityjordanstructurequantum}%
  \BibitemOpen
  \bibfield  {author} {\bibinfo {author} {\bibfnamefont {H.}~\bibnamefont {Barnum}}, \bibinfo {author} {\bibfnamefont {C.}~\bibnamefont {Ududec}},\ and\ \bibinfo {author} {\bibfnamefont {J.}~\bibnamefont {van~de Wetering}},\ }\href {https://arxiv.org/abs/2306.00362} {\bibinfo {title} {{Self-duality and Jordan structure of quantum theory follow from homogeneity and pure transitivity}}} (\bibinfo {year} {2023}),\ \Eprint {https://arxiv.org/abs/2306.00362} {arXiv:2306.00362 [quant-ph]} \BibitemShut {NoStop}%
\bibitem [{\citenamefont {Spekkens}(2008)}]{spekkens2008negativity}%
  \BibitemOpen
  \bibfield  {author} {\bibinfo {author} {\bibfnamefont {R.~W.}\ \bibnamefont {Spekkens}},\ }\bibfield  {title} {\bibinfo {title} {Negativity and contextuality are equivalent notions of nonclassicality},\ }\href {https://doi.org/10.1103/PhysRevLett.101.020401} {\bibfield  {journal} {\bibinfo  {journal} {Phys. Rev. Lett.}\ }\textbf {\bibinfo {volume} {101}},\ \bibinfo {pages} {020401} (\bibinfo {year} {2008})}\BibitemShut {NoStop}%
\bibitem [{\citenamefont {Husimi}(1940{\natexlab{b}})}]{husimi1940density}%
  \BibitemOpen
  \bibfield  {author} {\bibinfo {author} {\bibfnamefont {K.}~\bibnamefont {Husimi}},\ }\bibfield  {title} {\bibinfo {title} {{Some Formal Properties of the Density Matrix}},\ }\href {https://doi.org/10.11429/ppmsj1919.22.4_264} {\bibfield  {journal} {\bibinfo  {journal} {Proceedings of the Physico-Mathematical Society of Japan. 3rd Series}\ }\textbf {\bibinfo {volume} {22}},\ \bibinfo {pages} {264} (\bibinfo {year} {1940}{\natexlab{b}})}\BibitemShut {NoStop}%
\bibitem [{\citenamefont {Bru}\ \emph {et~al.}(2023)\citenamefont {Bru}, \citenamefont {de~Siqueira~Pedra} \emph {et~al.}}]{bru2023cstaralgebras}%
  \BibitemOpen
  \bibfield  {author} {\bibinfo {author} {\bibfnamefont {J.~B.}\ \bibnamefont {Bru}}, \bibinfo {author} {\bibfnamefont {W.}~\bibnamefont {de~Siqueira~Pedra}}, \emph {et~al.},\ }\href {https://doi.org/https://doi.org/10.1007/978-3-031-28949-1} {\emph {\bibinfo {title} {$C^*$-Algebras and Mathematical Foundations of Quantum Statistical Mechanics}}}\ (\bibinfo  {publisher} {Springer},\ \bibinfo {year} {2023})\BibitemShut {NoStop}%
\bibitem [{\citenamefont {Schmid}\ \emph {et~al.}(2021{\natexlab{b}})\citenamefont {Schmid}, \citenamefont {Selby}, \citenamefont {Wolfe}, \citenamefont {Kunjwal},\ and\ \citenamefont {Spekkens}}]{schmid2021characterization}%
  \BibitemOpen
  \bibfield  {author} {\bibinfo {author} {\bibfnamefont {D.}~\bibnamefont {Schmid}}, \bibinfo {author} {\bibfnamefont {J.~H.}\ \bibnamefont {Selby}}, \bibinfo {author} {\bibfnamefont {E.}~\bibnamefont {Wolfe}}, \bibinfo {author} {\bibfnamefont {R.}~\bibnamefont {Kunjwal}},\ and\ \bibinfo {author} {\bibfnamefont {R.~W.}\ \bibnamefont {Spekkens}},\ }\bibfield  {title} {\bibinfo {title} {Characterization of noncontextuality in the framework of generalized probabilistic theories},\ }\href {https://doi.org/10.1103/PRXQuantum.2.010331} {\bibfield  {journal} {\bibinfo  {journal} {PRX Quantum}\ }\textbf {\bibinfo {volume} {2}},\ \bibinfo {pages} {010331} (\bibinfo {year} {2021}{\natexlab{b}})}\BibitemShut {NoStop}%
\bibitem [{\citenamefont {Conrad}(2014)}]{conrad2014complexification}%
  \BibitemOpen
  \bibfield  {author} {\bibinfo {author} {\bibfnamefont {K.}~\bibnamefont {Conrad}},\ }\href {https://kconrad.math.uconn.edu/blurbs/linmultialg/complexification.pdf} {\bibinfo {title} {Complexification}},\ \bibinfo {howpublished} {Personal website} (\bibinfo {year} {2014})\BibitemShut {NoStop}%
\bibitem [{\citenamefont {Landsman}(2012)}]{landsman2012mathematical}%
  \BibitemOpen
  \bibfield  {author} {\bibinfo {author} {\bibfnamefont {N.~P.}\ \bibnamefont {Landsman}},\ }\href {https://doi.org/https://doi.org/10.1007/978-1-4612-1680-3} {\emph {\bibinfo {title} {Mathematical topics between classical and quantum mechanics}}}\ (\bibinfo  {publisher} {Springer Science \& Business Media},\ \bibinfo {year} {2012})\BibitemShut {NoStop}%
\bibitem [{\citenamefont {Landsman}(2017)}]{landsman2017foundations}%
  \BibitemOpen
  \bibfield  {author} {\bibinfo {author} {\bibfnamefont {K.}~\bibnamefont {Landsman}},\ }\href {https://doi.org/10.1007/978-3-319-51777-3} {\emph {\bibinfo {title} {{Foundations of Quantum Theory}}}}\ (\bibinfo  {publisher} {Springer International Publishing},\ \bibinfo {year} {2017})\BibitemShut {NoStop}%
\bibitem [{\citenamefont {Li}\ \emph {et~al.}(2025)\citenamefont {Li}, \citenamefont {Wagner},\ and\ \citenamefont {Zhang}}]{li2025multistateimaginaritycoherencequbit}%
  \BibitemOpen
  \bibfield  {author} {\bibinfo {author} {\bibfnamefont {M.-S.}\ \bibnamefont {Li}}, \bibinfo {author} {\bibfnamefont {R.}~\bibnamefont {Wagner}},\ and\ \bibinfo {author} {\bibfnamefont {L.}~\bibnamefont {Zhang}},\ }\href {https://doi.org/https://doi.org/10.48550/arXiv.2507.14878} {\bibinfo {title} {Multi-state imaginarity and coherence in qubit systems}} (\bibinfo {year} {2025}),\ \Eprint {https://arxiv.org/abs/2507.14878} {arXiv:2507.14878 [quant-ph]} \BibitemShut {NoStop}%
\bibitem [{\citenamefont {Fernandes}\ \emph {et~al.}(2024)\citenamefont {Fernandes}, \citenamefont {Wagner}, \citenamefont {Novo},\ and\ \citenamefont {Galv\~ao}}]{fernandes2024unitary}%
  \BibitemOpen
  \bibfield  {author} {\bibinfo {author} {\bibfnamefont {C.}~\bibnamefont {Fernandes}}, \bibinfo {author} {\bibfnamefont {R.}~\bibnamefont {Wagner}}, \bibinfo {author} {\bibfnamefont {L.}~\bibnamefont {Novo}},\ and\ \bibinfo {author} {\bibfnamefont {E.~F.}\ \bibnamefont {Galv\~ao}},\ }\bibfield  {title} {\bibinfo {title} {Unitary-invariant witnesses of quantum imaginarity},\ }\href {https://doi.org/10.1103/PhysRevLett.133.190201} {\bibfield  {journal} {\bibinfo  {journal} {Phys. Rev. Lett.}\ }\textbf {\bibinfo {volume} {133}},\ \bibinfo {pages} {190201} (\bibinfo {year} {2024})}\BibitemShut {NoStop}%
\bibitem [{\citenamefont {Pratapsi}\ \emph {et~al.}(2025)\citenamefont {Pratapsi}, \citenamefont {Gouveia}, \citenamefont {Novo},\ and\ \citenamefont {Galvão}}]{pratapsi2025elementarycharacterizationbargmanninvariants}%
  \BibitemOpen
  \bibfield  {author} {\bibinfo {author} {\bibfnamefont {S.~S.}\ \bibnamefont {Pratapsi}}, \bibinfo {author} {\bibfnamefont {J.}~\bibnamefont {Gouveia}}, \bibinfo {author} {\bibfnamefont {L.}~\bibnamefont {Novo}},\ and\ \bibinfo {author} {\bibfnamefont {E.~F.}\ \bibnamefont {Galvão}},\ }\href {https://arxiv.org/abs/2506.17132} {\bibinfo {title} {{An Elementary Characterization of Bargmann Invariants}}} (\bibinfo {year} {2025}),\ \Eprint {https://arxiv.org/abs/2506.17132} {arXiv:2506.17132 [quant-ph]} \BibitemShut {NoStop}%
\bibitem [{\citenamefont {Xu}(2025)}]{xu2025numericalrangesbargmanninvariants}%
  \BibitemOpen
  \bibfield  {author} {\bibinfo {author} {\bibfnamefont {J.}~\bibnamefont {Xu}},\ }\href {https://arxiv.org/abs/2506.13266} {\bibinfo {title} {Numerical ranges of {B}argmann invariants}} (\bibinfo {year} {2025}),\ \Eprint {https://arxiv.org/abs/2506.13266} {arXiv:2506.13266 [quant-ph]} \BibitemShut {NoStop}%
\bibitem [{\citenamefont {Zhang}\ \emph {et~al.}(2025)\citenamefont {Zhang}, \citenamefont {Xie},\ and\ \citenamefont {Li}}]{zhang2025geometry}%
  \BibitemOpen
  \bibfield  {author} {\bibinfo {author} {\bibfnamefont {L.}~\bibnamefont {Zhang}}, \bibinfo {author} {\bibfnamefont {B.}~\bibnamefont {Xie}},\ and\ \bibinfo {author} {\bibfnamefont {B.}~\bibnamefont {Li}},\ }\bibfield  {title} {\bibinfo {title} {{Geometry of sets of Bargmann invariants}},\ }\href {https://doi.org/10.1103/PhysRevA.111.042417} {\bibfield  {journal} {\bibinfo  {journal} {Phys. Rev. A}\ }\textbf {\bibinfo {volume} {111}},\ \bibinfo {pages} {042417} (\bibinfo {year} {2025})}\BibitemShut {NoStop}%
\bibitem [{\citenamefont {Li}\ and\ \citenamefont {Tan}(2025)}]{li2025bargmann}%
  \BibitemOpen
  \bibfield  {author} {\bibinfo {author} {\bibfnamefont {M.-S.}\ \bibnamefont {Li}}\ and\ \bibinfo {author} {\bibfnamefont {Y.-X.}\ \bibnamefont {Tan}},\ }\bibfield  {title} {\bibinfo {title} {Bargmann invariants for quantum imaginarity},\ }\href {https://doi.org/10.1103/PhysRevA.111.022409} {\bibfield  {journal} {\bibinfo  {journal} {Phys. Rev. A}\ }\textbf {\bibinfo {volume} {111}},\ \bibinfo {pages} {022409} (\bibinfo {year} {2025})}\BibitemShut {NoStop}%
\bibitem [{\citenamefont {Baez}(2011)}]{baez2011division}%
  \BibitemOpen
  \bibfield  {author} {\bibinfo {author} {\bibfnamefont {J.~C.}\ \bibnamefont {Baez}},\ }\bibfield  {title} {\bibinfo {title} {Division algebras and quantum theory},\ }\href {https://doi.org/10.1007/s10701-011-9566-z} {\bibfield  {journal} {\bibinfo  {journal} {Foundations of Physics}\ }\textbf {\bibinfo {volume} {42}},\ \bibinfo {pages} {819–855} (\bibinfo {year} {2011})}\BibitemShut {NoStop}%
\bibitem [{\citenamefont {Albert}(2025)}]{albert2025bosoniccoding}%
  \BibitemOpen
  \bibfield  {author} {\bibinfo {author} {\bibfnamefont {V.~V.}\ \bibnamefont {Albert}},\ }\bibinfo {title} {Bosonic coding: introduction and use cases},\ in\ \href {https://doi.org/10.3254/ENFI250007} {\emph {\bibinfo {booktitle} {Quantum Fluids of Light and Matter}}},\ \bibinfo {series} {International School of Physics “Enrico Fermi”}, Vol.\ \bibinfo {volume} {209},\ \bibinfo {editor} {edited by\ \bibinfo {editor} {\bibfnamefont {A.}~\bibnamefont {Bramati}}, \bibinfo {editor} {\bibfnamefont {I.}~\bibnamefont {Carusotto}},\ and\ \bibinfo {editor} {\bibfnamefont {C.}~\bibnamefont {Ciuti}}}\ (\bibinfo  {publisher} {IOS Press},\ \bibinfo {address} {NL},\ \bibinfo {year} {2025})\ p.\ \bibinfo {pages} {79–107}\BibitemShut {NoStop}%
\bibitem [{\citenamefont {Davis}\ \emph {et~al.}(2024)\citenamefont {Davis}, \citenamefont {Fabre},\ and\ \citenamefont {Chabaud}}]{davis2024identifyingquantumresourcesencoded}%
  \BibitemOpen
  \bibfield  {author} {\bibinfo {author} {\bibfnamefont {J.}~\bibnamefont {Davis}}, \bibinfo {author} {\bibfnamefont {N.}~\bibnamefont {Fabre}},\ and\ \bibinfo {author} {\bibfnamefont {U.}~\bibnamefont {Chabaud}},\ }\href {https://doi.org/https://doi.org/10.48550/arXiv.2407.18394} {\bibinfo {title} {Identifying quantum resources in encoded computations}} (\bibinfo {year} {2024}),\ \Eprint {https://arxiv.org/abs/2407.18394} {arXiv:2407.18394 [quant-ph]} \BibitemShut {NoStop}%
\bibitem [{\citenamefont {Brif}\ and\ \citenamefont {Mann}(1999)}]{brief1999phasespace}%
  \BibitemOpen
  \bibfield  {author} {\bibinfo {author} {\bibfnamefont {C.}~\bibnamefont {Brif}}\ and\ \bibinfo {author} {\bibfnamefont {A.}~\bibnamefont {Mann}},\ }\bibfield  {title} {\bibinfo {title} {Phase-space formulation of quantum mechanics and quantum-state reconstruction for physical systems with {L}ie-group symmetries},\ }\href {https://doi.org/10.1103/PhysRevA.59.971} {\bibfield  {journal} {\bibinfo  {journal} {Phys. Rev. A}\ }\textbf {\bibinfo {volume} {59}},\ \bibinfo {pages} {971} (\bibinfo {year} {1999})}\BibitemShut {NoStop}%
\end{thebibliography}%

\begin{appendix}

\section{Proof that the complexification functor preserves generating sets}\label{app:span_proofs}

We denote by $\mathrm{span}_{\mathds{K}}(\{s_i\}_i)$ the \emph{algebraic} linear span of the set $\{s_i\}_i\subseteq A$, i.e., the set of all finite $\mathds{K}$-linear combinations of the elements $s_i$. 

\begin{lemma}\label{lemma:span_preservation_states}
    Let $A$ be a real vector space and $\{s_i\}_i \subseteq A$ such that $\mathrm{span}_{\mathds{R}}(\{s_i\}_i) = A$. Then, $\mathrm{span}_{\mathds{C}}(\{\mathds{C}(s_i)\}_i) = \mathds{C}(A)$.
\end{lemma}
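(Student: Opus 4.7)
The plan is to unpack the notation and then verify the claim by a direct computation, in a manner entirely parallel to how one shows that a spanning set for a real vector space gives a spanning set after extension of scalars. First I would reconcile the two conventions in play: via the standard identification $\mathbf{Vect}_{\mathds{R}}(\mathds{R},A) \cong A$, a state $s_i \in A$ corresponds to the linear map $\mathds{R} \to A$, $r \mapsto r s_i$, and by Def.~\ref{def: complexification_linear_map} its complexification $\mathds{C}(s_i):\mathds{C} \to \mathds{C}(A)$ sends $1 \mapsto (s_i, 0) = e_{\mathds{C}}(s_i)$. Hence $\mathds{C}(s_i)$, viewed as a vector in $\mathds{C}(A)$, is precisely $e_{\mathds{C}}(s_i)$, and the statement becomes $\mathrm{span}_{\mathds{C}}\{e_{\mathds{C}}(s_i)\}_i = \mathds{C}(A)$.

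The inclusion $\mathrm{span}_{\mathds{C}}\{e_{\mathds{C}}(s_i)\}_i \subseteq \mathds{C}(A)$ is immediate because each $e_{\mathds{C}}(s_i) \in \mathds{C}(A)$ and $\mathds{C}(A)$ is a $\mathds{C}$-vector space. For the reverse inclusion, I would take an arbitrary $v \in \mathds{C}(A)$ and write it, by Def.~\ref{def: complexification_vector_spaces}, in the form $v = (a,b) = e_{\mathds{C}}(a) + i\, e_{\mathds{C}}(b)$ with $a,b \in A$. Using the hypothesis that $\{s_i\}_i$ spans $A$ over $\mathds{R}$, decompose $a = \sum_i r_i s_i$ and $b = \sum_i r'_i s_i$ as finite $\mathds{R}$-linear combinations. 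Since $e_{\mathds{C}}$ is $\mathds{R}$-linear (straight from the definition $e_{\mathds{C}}(w) = (w,0)$), this yields
\begin{equation*}
v \;=\; \sum_i r_i\, e_{\mathds{C}}(s_i) \;+\; i \sum_i r'_i\, e_{\mathds{C}}(s_i) \;=\; \sum_i (r_i + i r'_i)\, e_{\mathds{C}}(s_i),
\end{equation*}
exhibiting $v$ as a finite $\mathds{C}$-linear combination of the $e_{\mathds{C}}(s_i) = \mathds{C}(s_i)$.

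There is no real obstacle here: the only step requiring care is the bookkeeping between the ``state as vector'' and ``state as morphism $\mathds{R} \to A$'' perspectives, so that the symbol $\mathds{C}(s_i)$ in the statement is correctly matched with the vector $e_{\mathds{C}}(s_i) \in \mathds{C}(A)$ used in the computation. Once that identification is made explicit, the lemma reduces to the elementary observation that $\mathds{C}(A) = A + iA$ is generated, as a complex vector space, by the image of any $\mathds{R}$-spanning subset of $A$ under $e_{\mathds{C}}$.
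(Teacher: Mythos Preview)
Your proof is correct and follows essentially the same approach as the paper: both note the trivial inclusion, then take an arbitrary element of $\mathds{C}(A)$, write it as $e_{\mathds{C}}(a)+i\,e_{\mathds{C}}(b)$, expand $a,b$ in the real spanning set, and combine the coefficients into a single $\mathds{C}$-linear combination of the $e_{\mathds{C}}(s_i)$. Your explicit unpacking of the identification $\mathds{C}(s_i)=e_{\mathds{C}}(s_i)$ is a helpful addition that the paper leaves implicit.
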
 


\begin{proof}
    The inclusion $\mathrm{span}_{\mathds{C}}(\{\mathds{C}(s_i)\}_i) \subseteq \mathds{C}(A)$ is trivial. Suppose now that $a \in \mathds{C}(A)$ which implies that $a = (a_1,a_2)$ such that $a_1,a_2 \in A$. Since $\mathrm{span}_{\mathds{R}}(\{s_i\}_i) = A$ there exists $\{\alpha_1^i\}_i,\{\alpha_2^i\}_i \subseteq \mathds{R}$ such that $a_k = \sum_i \alpha_k^i s_i$ for $k =1,2$. Therefore, 
    \begin{align*}
        a &= (a_1,a_2) = e_{\mathds{C}}(a_1)+ie_{\mathds{C}}(a_2)\\
        &= \sum_j \alpha_1^j e_{\mathds{C}}(s_j)+i\sum_j \alpha_2^j e_{\mathds{C}}(s_j)\\
        &=\sum_j (\alpha_1^j+i\alpha_2^j)e_{\mathds{C}}(s_j)\\
        &=\sum_j \alpha_j \mathds{C}(s_j),
    \end{align*}
    where $\alpha_j\in\mathbb{C}$. 
    From the above we conclude that $a \in \mathrm{span}_{\mathds{C}}(\{\mathds{C}(s_i)\}_i)$.  
\end{proof}

Note that above we have not restricted $A$ to be finite-dimensional. The same holds true for covectors (immediately for the case of finite-dimensional spaces).

\begin{lemma}\label{lemma:span_preservation_effects}
    Let $A$ be a real vector space and $A^*$ its dual space. Let $\{e_i\}_i$ be a set of linear functionals $e_i:A \to \mathds{R}$ such that $\mathrm{span}_{\mathds{R}}(\{e_i\}_i) = A^*$. Then, $\mathrm{span}_{\mathds{C}}(\{\mathds{C}(e_i)\}_i) = \mathds{C}(A)^*$.
\end{lemma}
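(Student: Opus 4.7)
The plan is to mirror the argument of Lemma~\ref{lemma:span_preservation_states} for covectors, with the subtlety that now one must relate $\mathds{C}$-linear functionals on $\mathds{C}(A)$ to $\mathds{R}$-linear functionals on $A$. This is achieved using the universal property of complexification recorded in Theorem~\ref{theorem: unique_complex_extension}. The inclusion $\mathrm{span}_{\mathds{C}}(\{\mathds{C}(e_i)\}_i) \subseteq \mathds{C}(A)^*$ is immediate, since each $\mathds{C}(e_i): \mathds{C}(A) \to \mathds{C}(\mathds{R}) = \mathds{C}$ is already a $\mathds{C}$-linear functional on $\mathds{C}(A)$ and the right-hand side is a $\mathds{C}$-linear subspace.

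For the reverse inclusion, I would fix any $\varphi \in \mathds{C}(A)^*$ and consider its restriction $\hat{\varphi} := \varphi \circ e_{\mathds{C}}: A \to \mathds{C}$, which is $\mathds{R}$-linear. Decomposing into real and imaginary parts gives $\hat{\varphi} = \varphi_r + i\varphi_i$ with $\varphi_r, \varphi_i \in A^*$. The hypothesis $\mathrm{span}_{\mathds{R}}(\{e_i\}_i) = A^*$ then lets me expand $\varphi_r = \sum_j \alpha^j e_j$ and $\varphi_i = \sum_j \beta^j e_j$ as finite $\mathds{R}$-linear combinations, so that $\hat{\varphi} = \sum_j (\alpha^j + i\beta^j)\, e_j$ as $\mathds{R}$-linear maps $A \to \mathds{C}$ (viewing each $e_j$ as composed with the standard inclusion $\mathds{R} \hookrightarrow \mathds{C}$). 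By Theorem~\ref{theorem: unique_complex_extension}, both $\varphi$ and $\sum_j (\alpha^j + i\beta^j)\, \mathds{C}(e_j)$ are $\mathds{C}$-linear extensions of $\hat{\varphi}$ to $\mathds{C}(A)$, and uniqueness forces $\varphi = \sum_j (\alpha^j + i\beta^j)\, \mathds{C}(e_j)$.

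The only mild technicality, which I expect to be the main point to verify, is that $\mathds{C}(e_j)$ as defined by the complexification functor (i.e., $(a_1,a_2)\mapsto (e_j(a_1), e_j(a_2))$) indeed agrees with the unique $\mathds{C}$-linear extension of $e_j: A \to \mathds{R} \hookrightarrow \mathds{C}$ provided by Theorem~\ref{theorem: unique_complex_extension}. This is a one-line check: both send $(a_1,a_2) = e_{\mathds{C}}(a_1) + i e_{\mathds{C}}(a_2)$ to $e_j(a_1) + i e_j(a_2)$. As the parenthetical in the lemma statement suggests, for finite-dimensional $A$ one can alternatively invoke a choice of isomorphism $A \cong A^*$ and quote Lemma~\ref{lemma:span_preservation_states} directly; the route sketched above, however, requires no finite-dimensionality assumption and therefore matches the generality already obtained for states.
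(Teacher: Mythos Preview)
Your proposal is correct and follows essentially the same approach as the paper: restrict a $\mathds{C}$-linear functional along $e_{\mathds{C}}$, decompose into real and imaginary parts lying in $A^*$, expand each in the spanning set $\{e_i\}_i$, and invoke the uniqueness in Theorem~\ref{theorem: unique_complex_extension} to conclude. Your extra remark verifying that $\mathds{C}(e_j)$ coincides with the unique $\mathds{C}$-linear extension of $e_j$ is a point the paper leaves implicit, so your write-up is, if anything, slightly more careful.
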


\begin{proof}
    The inclusion $\mathrm{span}_{\mathds{C}}(\{\mathds{C}(e_i)\}_i) \subseteq \mathds{C}(A)^*$ is trivial. Denote $\mathcal{L}_{\mathds{R}}(A,\mathds C)$ the set of $\mathds{R}$-linear maps from $A$ to $\mathds{C}$, i.e. all maps $\phi: A\to \mathds{C}$ satisfying \begin{align}\phi(\alpha a_1+\beta a_2) &= e_{\mathds{C}}(\alpha)\phi(a_1)+e_{\mathds{C}}(\beta)\phi(a_2)\label{eq:linearity_real_to_complex} \\&\equiv \alpha \phi(a_1)+\beta \phi(a_2).\nonumber\end{align} Consider the restriction map $r:\mathds{C}(A)^* \to \mathcal{L}_{\mathds{R}}(A,\mathds C)$ defined by $r(\Phi) = \Phi\vert_A = \phi$. From Theorem~\ref{theorem: unique_complex_extension} we know that $r(\Phi)$ uniquely defines $\Phi$. For any $\Phi \in \mathds{C}(A)^*$, we  decompose $r(\Phi) = \phi$ into its real and imaginary parts $\phi = \phi_1+i\phi_2$. In this way, $\phi_1,\phi_2 \in A^*$, from which we have that $\phi_k = \sum_j \alpha_k^j e_j$ for $k=1,2$. Therefore, 
    \begin{equation*}
        \phi = \sum_j \alpha_1^je_j+i\sum_j\alpha_2^je_j = \sum_j\alpha_je_j
    \end{equation*}
    where $\alpha_j = \alpha_1^j+i\alpha_2^j \in \mathds C$ for all $j$. Clearly, by construction, we have that $\mathds{C}(r(\Phi)) = \mathds{C}(\phi) = \Phi$. Defining
    \begin{equation*}
        \Psi := \sum_j \alpha_j \mathds{C}(e_j),
    \end{equation*}
    it is simple to see that $\Phi = \Psi \in \mathrm{span}_{\mathds{C}}(\{\mathds{C}(e_i)\}_i)$ from which we conclude that $\mathds{C}(A)^* \subseteq \mathrm{span}_{\mathds{C}}(\{\mathds{C}(e_i)\}_i)$. 
\end{proof}

Note that if $A$ is a finite-dimensional real vector space, then $\mathds{C}(A^*)\simeq \mathds{C}(A)^*$. 

\section{Proof that the complexification functor is a faithful strong monoidal functor}\label{app: complexification proof}

We now proceed to give a detailed proof that the complexification functor $\mathds{C}: \realvectexamplep \to \mathbf{Vect}_{\mathds{C}}$ is indeed a faithful strong monoidal functor. 

We start noting that by the very nature of the complexification functor, we have that $\mathds{C} \cong \mathds{R}+i\mathds{R} \equiv \mathds{R}_{\mathds{C}}$, from which we have that the identity object (a.k.a. the monoidal unit) $\mathds{R}$\footnote{Note that here we take the monoidal structure  (a.k.a. parallel composition) of $\mathbf{Vect}_{\mathds{K}}$ to be given by the tensor product of vector spaces, from which we conclude that the identity object in this case is given by the space $\mathds{K}\in\{\mathds{R},\mathds{C}\}$.} in $\realvectexamplep$ is sent to the identity object $\mathds{C}$ in $\mathbf{Vect}_{\mathds{C}}$. We now proceed to show in detail that $\mathds{C}$ is a faithful strong monoidal functor:
    
\subsection{The complexification functor is faithful}
    
    Let any $f,g \in \mathbf{Vect}_{\mathds{R}}(W,V)$ then 
    \begin{align*}
        &\mathds{C}(f) = \mathds{C}(g) \iff \forall w \in \mathds{C}(W),\mathds{C}(f)(w) = \mathds{C}(g)(w) \\
        &\iff 
        \forall (w_1,w_2),(f(w_1),f(w_2)) = (g(w_1),g(w_2))\\
        & \iff \forall w, f(w)=g(w),\\
        & \iff f=g.
    \end{align*}
    Note that this does not mean that for every $f$ we must have that $\mathds{C}(f)$ is injective. Diagrammatically, this shows that 
    \begin{equation}
        \tikzfig{FIRST/App_1}=\tikzfig{FIRST/App_2} \iff \tikzfig{FIRST/App_3}=\tikzfig{FIRST/App_4}.
    \end{equation}
    
    \subsection{Defining coherent maps $\varepsilon$ and $\mu_{W,V}$}
    
    We need to construct natural isomorphisms $\varepsilon$ and $\mu_{V,W}$ for every $V,W$ satisfying certain coherence conditions. Recall that the monoidal units are $I_{\mathds{K}} = \mathds{K}$. We define the morphism $\varepsilon: I_{\mathds{C}} \to \mathds{C}(I_{\mathds{R}})$ via 
    \begin{equation}
        \varepsilon(x+iy) = (x,y),
    \end{equation}
    which is clearly $\mathds{C}$-linear. 
    
    Let $W$ and $V$ be any pair of objects of $\realvectexamplep$, we can define $\hat\mu_{W,V}: \mathds{C}(W) \otimes \mathds{C}(V) \to \mathds{C}(W \otimes V)$ as
    \begin{align*}
        &\hat\mu_{W,V}((w_1,w_2) \otimes (v_1,v_2)) =\\
        &= (w_1 \otimes v_1-w_2 \otimes v_2, w_1 \otimes v_2+w_2 \otimes v_1).
    \end{align*}
    Note that, stricly speaking, the two tensor products in the definition of $\hat{\mu}$ are different, and that to avoid an unnecessarily heavy notation we simply write $\otimes$ for both, instead of writing $\otimes_{\mathds{R}}$ and $\otimes_{\mathds{c}}$. The intuition is, obviously, that we want the action of $\hat\mu$ on $(w_1, w_2)\otimes (v_1 , v_2)$ to mimic 
    the distributive property 
    \begin{align*}
    (w_1+i w_2)\otimes (v_1 +iv_2) &= w_1\otimes v_1-w_2\otimes v_2\\&+i\left(w_2\otimes v_1 + w_1 \otimes v_2 \right)
    \end{align*}
    between the tensor product and the summation. We then extend this map to a map  $\mu_{W,V}$ that is $\mathds{C}$-linear, i.e., we let $\mu_{W,V}$ to be the unique $\mathds{C}$-linear extension of $\hat \mu_{W,V}$, for every pair of vector spaces $W$ and $V$. 

    \subsection{Showing that $\varepsilon$ and $\mu_{W,V}$ are isomorphisms}
    
    Clearly, $\varepsilon$ is invertible, with inverse given by  $\varepsilon^{-1}(x,y) = x+iy.$ Now, let $V,W \in \realvectexamplep$, we want to show that $\mu_{W,V}$ is invertible. Let $\{w_i\}_i \subseteq W, \{v_i\}_i \subseteq V$ be arbitrary basis of these vector spaces, then $\{(w_i \otimes v_j,0),(0,w_i \otimes v_j)\}_{i,j}$ is a basis for $\mathds{C}(W \otimes V)$. We will define the action of the inverse map on these, and then extend it $\mathds{C}$-linearly over the entire space $\mathds{C}(W \otimes V)$. 
    
    Define the map $\hat\mu_{W,V}^{-1}:\mathds{C}(W \otimes V) \to \mathds{C}(W)\otimes \mathds{C}(V)$ via its action on the basis elements 
    \begin{equation}
        \hat\mu_{W,V}^{-1}((w_i \otimes v_j,0)) := (w_i,0)\otimes (v_j,0),
    \end{equation}
    since in this case we have that 
    \begin{align*}
        &\hat\mu_{W,V}^{-1}\circ \mu_{W,V}((w_i,0)\otimes (v_j,0))= \\&= \hat\mu_{W,V}^{-1}((w_i\otimes v_j-0\otimes 0,w_i\otimes 0+0\otimes v_j))\\
        &=\hat\mu_{W,V}^{-1}((w_i\otimes v_j,0))\\
        &=(w_i,0) \otimes (v_j,0),
    \end{align*}
    and also 
    \begin{equation}
        \hat\mu_{W,V}^{-1}(0,w_i \otimes v_j) := (w_i,0) \otimes (0,v_j) + (0,w_i)\otimes (v_j,0),
    \end{equation}
    since we have moreover that,
    \begin{align*}
        &\hat\mu_{W,V}^{-1}\circ \mu_{W,V}((0,w_i)\otimes (0,v_j))=\\
        &=\hat\mu_{W,V}^{-1}(0 \otimes 0-w_i \otimes v_j,0 \otimes v_j+w_i \otimes 0)\\
        &=\hat\mu_{W,V}^{-1}(-w_i \otimes v_j,0)\\
        &=i(w_i,0)\otimes i(v_j,0)\\
        &=(0,w_i)\otimes (0,v_j),
    \end{align*}
    where we have used that $i^2=-1$ and $i(a,0)=(0,a)$. 

    Now, we take $\mu_{W,V}^{-1}$ to be the unique $\mathds{C}$-linear extension of $\hat \mu_{W,V}^{-1}$, in which case we have that the extension is the inverse of $\mu_{W,V}$ from which we conclude that $\mu_{W,V}$ is an isomorphism. 
    
    \subsection{$\mu_{W,V}$ is a natural transformation}
    
    We need to show that for every pair of $\mathds{R}$-linear maps $f: W \to W', g: V \to V'$ we have that  
    \begin{equation}
        \mathds{C}(f \otimes g) \circ \mu_{W,V} = \mu_{W',V'}\circ (\mathds{C}(f) \otimes \mathds{C}(g)).
    \end{equation}
    On one side, we have that, for any element $\sum_i \alpha_i (w_i^1,w_i^2)\otimes (v_i^1,v_i^2) \in \mathds{C}(W)\otimes \mathds{C}(V)$ it holds that  
    \begin{align*}
        &\mu_{W',V'}( (\mathds{C}(f) \otimes \mathds{C}(g))(\sum_i \alpha_i((w_i^1,w_i^2) \otimes (v_i^1,v_i^2)))) = \\
        &=\sum_i\alpha_i\mu_{W',V'}\left((f(w_i^1),f(w_i^2))\otimes (g(v_i^1),g(v_i^2))\right)\\
        &=\sum_i\alpha_i\hat\mu_{W',V'}\left((f(w_i^1),f(w_i^2))\otimes (g(v_i^1),g(v_i^2))\right)\\
        &=\sum_i\alpha_i\Bigr(f(w_i^1)\otimes g(v_i^1)-f(w_i^2)\otimes g(v_i^2),\\&f(w_i^1)\otimes g(v_i^2)+f(w_i^2)\otimes g(v_i^1)\Bigr). 
    \end{align*}
    On the other, we have that for every element 
    \begin{align*}
        &\mathds{C}(f \otimes g)\left(\mu_{W,V}\left(\sum_i \alpha_i (w_i^1,w_i^2) \otimes (v_i^1,v_i^2)\right)\right) =\\
        &=\sum_i \alpha_i \mathds{C}(f \otimes g)\left(w_i^1 \otimes v_i^1-w_i^2 \otimes v_i^2,w_i^1 \otimes v_i^2+w_i^2 \otimes v_i^1\right)\\
        &=\sum_i \alpha_i\Bigr(f\otimes g\left(w_i^1 \otimes v_i^1-w_i^2 \otimes v_i^2\right),\\& f \otimes g \left(w_i^1 \otimes v_i^2+w_i^2 \otimes v_i^1\right)\Bigr)\\
        &=\sum_i \alpha_i\Bigr(f(w_i^1) \otimes g(v_i^1)-f(w_i^2) \otimes g(v_i^2),\\& f(w_i^1) \otimes g(v_i^2)+f(w_i^2) \otimes g(v_i^1)\Bigr).
    \end{align*}
    
    \subsection{Associativity} 
    
    We now need to show that the coherence condition of associativity which is described by the condition that for every object $V,W,Z \in \realvectexamplep$ the following diagram
    \begin{widetext}
    \begin{equation*}
       \begin{tikzcd}
    (\mathds{C}(V) \otimes \mathds{C}(W)) \otimes \mathds{C}(Z) \arrow[rr, "\cong"'] \arrow[dd, "{\mu_{V,W} \otimes \mathrm{id}}"] &  & \mathds{C}(V) \otimes (\mathds{C}(W) \otimes \mathds{C}(Z)) \arrow[dd, "{\mathrm{id} \otimes \mu_{W,Z}}"] \\
                                                                                                                                    &  &                                                                                                           \\
    (\mathds{C}(V \otimes W) \otimes \mathds{C}(Z) \arrow[dd, "{\mu_{V \otimes W,Z}}"]                                              &  & \mathds{C}(V) \otimes \mathds{C}(W \otimes Z) \arrow[dd, "{\mu_{V,W \otimes Z}}"]                         \\
                                                                                                                                    &  &                                                                                                           \\
    (\mathds{C}((V \otimes W) \otimes Z) \arrow[rr]                                                                                 &  &  \mathds{C}(V \otimes (W \otimes Z))                                                                     
    \end{tikzcd}
    \end{equation*}
    \end{widetext}
    commute. The first horizontal arrow denotes the associator isomorphism in $\mathbf{Vect}_{\mathds{C}}$ (guaranteed to exist since $\mathbf{Vect}_{\mathds{C}}$ is a monoidal category). The second horizontal arrow denotes the action of the complexification functor on the associator isomorphism in $\realvectexamplep$. Therefore, showing that the above diagram commutes is equivalent to show that 
    \begin{align}
        \mu_{V,W \otimes Z} \circ \mathrm{id} \otimes \mu_{W,Z} \circ a^{\mathds{C}}_{\mathds{C}(V),\mathds{C}(W),\mathds{C}(Z)} = \label{eq: associativity}\\
        \mathds{C}(a^{\mathds{R}}_{V,W,Z})\circ \mu_{V \otimes W,Z}\circ \mu_{V,W}\otimes \mathrm{id}_{\mathds{C}(Z)} \nonumber 
    \end{align}
    Since everything extends linearly, we can just verify these on the simple tensors $(v_1,v_2)\otimes (w_1,w_2) \otimes (z_1,z_2)$. If we consider a generic such element the left-hand side of Eq.~\eqref{eq: associativity} becomes
        \begin{widetext}
        \begin{align*}
        &\mu_{V,W \otimes Z} \circ \mathrm{id} \otimes \mu_{W,Z} \circ a^{\mathds{C}}_{\mathds{C}(V),\mathds{C}(W),\mathds{C}(Z)} \left(((v_1,v_2)\otimes (w_1,w_2)) \otimes (z_1,z_2) \right) =  \mu_{V,W \otimes Z} \left(  \left((v_1,v_2)\otimes  \mu_{W,Z}((w_1,w_2) \otimes (z_1,z_2))\right) \right)\\
        &=\mu_{V,W\otimes Z}\left((v_1,v_2)\otimes (w_1 \otimes z_1-w_2\otimes z_2,w_1\otimes z_2+w_2\otimes z_1)\right)\\
        &=\left(v_1\otimes (w_1 \otimes z_1-w_2\otimes z_2)-v_2\otimes (w_1\otimes z_2+w_2\otimes z_1), v_1\otimes (w_1\otimes z_2+w_2\otimes z_1)+v_2\otimes (w_1 \otimes z_1-w_2\otimes z_2)\right)\\
        &=\Bigr(v_1 \otimes (w_1 \otimes z_1)-v_1 \otimes (w_2 \otimes z_2)-v_2 \otimes (w_1 \otimes z_2)-v_2 \otimes (w_2 \otimes z_1),\\&
        v_1\otimes (w_1 \otimes z_2)+v_1\otimes (w_2 \otimes z_1)+v_2 \otimes (w_1 \otimes z_1)-v_2\otimes  (w_2 \otimes z_2)\Bigr)
        \end{align*}
        \end{widetext}
        and the right-hand side becomes
        \begin{widetext}
        \begin{align*}
            &\mathds{C}(a_{V,W,Z}^{\mathds{R}})\circ \mu_{V\otimes W,Z}\circ \mu_{V,W}\otimes \mathrm{id}_{\mathds{C}(Z)}(((v_1,v_2)\otimes (w_1,w_2))\otimes (z_1,z_2)) \\
            &= \mathds{C}(a_{V,W,Z}^{\mathds{R}})\left(\mu_{V \otimes W,Z}\left((v_1 \otimes w_1-v_2\otimes w_2,v_1\otimes w_2+v_2\otimes w_1)\otimes (z_1,z_2)\right)\right)\\
            &=\mathds{C}(a_{V,W,Z}^{\mathds{R}})\left( (v_1 \otimes w_1-v_2\otimes w_2)\otimes z_1-(v_1\otimes w_2+v_2\otimes w_1)\otimes z_2,(v_1 \otimes w_1-v_2\otimes w_2)\otimes z_2+(v_1\otimes w_2+v_2\otimes w_1)\otimes z_1\right)\\
            &=\Bigr(v_1\otimes (w_1 \otimes z_1)-v_2\otimes (w_2\otimes z_1)-v_1\otimes (w_2\otimes z_2)-v_2\otimes (w_1\otimes z_2),\\&
            v_1 \otimes (w_1\otimes z_2)-v_2 \otimes (w_2 \otimes z_2)+v_1 \otimes (w_2 \otimes z_1)+v_2 \otimes (w_1 \otimes z_1)\Bigr)
        \end{align*}
        \end{widetext}
    The proof is then complete by extending this result $\mathds{C}$-linearly to every tensor product. Note that some of the brackets introduced were ``unnatural'', and used for clarity of the proof, since we know that $\mathbf{Vect}_{\mathds{K}}$ is a symmetric monoidal category. 

    Diagrammatically, this shows that we can represent parallel processes in the natural manner, i.e., 
    \begin{equation}
        \tikzfig{FIRST/App_5}
    \end{equation}
    where we do not need to keep track of artificial brakets. 
    
    \textit{Unitality.}---To complete the unitality (triangle) coherence condition we need to show that for every real vector space $V$ the two natural ways of inserting the unit agree. With respect to the isomorphism $\varepsilon: I_{\mathds{C}} \to \mathds{C}(I_{\mathds{R}})$ defined above, we must show that two diagrams commute. The first one is (where we denote the monoidal units as $I_{\mathds{R}}$ and $I_{\mathds{C}}$):
    \begin{equation}
        \begin{tikzcd}[column sep=huge, row sep=large]
        \mathds{C} \otimes \mathds{C}(V)
          \arrow[r,"\;\varepsilon\otimes\mathrm{id}\;"]
          \arrow[d,"\ell^{\mathds{C} }_{\mathds{C} (V)}"']
        & 
        \mathds{C} (\R)\otimes \mathds{C} (V)
          \arrow[d,"\mu_{\R,V}"]
        \\
        \mathds{C} (V)
          & 
        \mathds{C} (\R\otimes V)
          \arrow[l,"{\mathds{C} (\,\ell^{\R}_{V}\,)}"']
        \end{tikzcd}
    \end{equation}
    where we have denoted $\ell^{\mathds{C}}_{\mathds{C}(V)}$ the left-unitor for $\mathbf{Vect}_{\mathds{C}}$, i.e, the natural isomorphism such that for every element $\mathds{C}(V)$ of $\mathbf{Vect}_{\mathds{C}}$ one has that $\ell^{\mathds{C}}_{\mathds{C}(V)}: I_{\mathds{C}} \otimes \mathds{C}(V) \to \mathds{C}(V)$. The diagram above commutes since, for any element $\alpha \otimes (v_1,v_2) \in I_{\mathds{C}}\otimes \mathds{C}(V)$, we have that   
    \begin{align*}
        &\mathds{C}(\ell_V^\mathds{R})\circ\mu_{\mathds{R},V}\circ \varepsilon \otimes \mathrm{id}_{\mathds{C}(V)}(\alpha \otimes (v_1,v_2)) \\
        &= \mathds{C}(\ell_V^\mathds{R})\circ \mu_{\mathds{R},V}((\alpha_1,\alpha_2) \otimes (v_1,v_2))\\
        &=\mathds{C}(\ell_V^\mathds{R})(\alpha_1 \otimes v_1-\alpha_2 \otimes v_2, \alpha_1 \otimes v_2 + \alpha_2 \otimes v_1)\\
        &=\mathds{C}(\ell_V^\mathds{R})(1 \otimes \alpha_1 v_1-1 \otimes \alpha_2 v_2, 1 \otimes \alpha_1 v_2 + 1 \otimes \alpha_2v_1)\\
        &=(\alpha_1 v_1-\alpha_2 v_2, \alpha_1 v_2 + \alpha_2 v_1)\\
        &=\ell_{\mathds{C}(V)}^{\mathds{C}} \left (1 \otimes (\alpha_1 v_1-\alpha_2 v_2, \alpha_1 v_2 + \alpha_2 v_1) \right)\\
        &=\ell_{\mathds{C}(V)}^{\mathds{C}} \left (1 \otimes \alpha (v_1,v_2)\right)\\
        &=\ell_{\mathds{C}(V)}^{\mathds{C}} \left (\alpha \otimes (v_1,v_2) \right),
    \end{align*}
    where in the last steps we have used that 
    $$(\alpha_1+i\alpha_2)(v_1,v_2) = (\alpha_1v_1-\alpha_2v_2,\alpha_2v_1+\alpha_1v_2)$$ as by Def.~\ref{def: complexification_vector_spaces}, together with properties of $\otimes$. 
    The second one is:
    \begin{equation}
        \begin{tikzcd}[column sep=huge, row sep=large]
        \mathds{C}(V)\otimes \mathds{C}
          \arrow[r,"\;\mathrm{id}\otimes\varepsilon\;"]
          \arrow[d,"r^{\mathds{C}}_{\mathds{C}(V)}"']
        &
        \mathds{C}(V)\otimes \mathds{C}(\R)
          \arrow[d,"\mu_{V,\R}"]
        \\
        \mathds{C}(V)
        &
        \mathds{C}(V\otimes \R)
          \arrow[l,"{\mathds{C}(\,r^{\R}_{V}\,)}"']
        \end{tikzcd}
    \end{equation}
    and we do not repeat the argument as it follows trivially the same steps as for the left unitor, changing the order of the tensor products. 
    
    Diagrammatically, we conclude that we can represent the action of the complexification functor on the empty wire as another empty wire.

In the proof above, we have showed that $\mathds{C}$ is a faithful strong monoidal functor. This implies that, as a map between different categories, this is (up to an extension as explained in the main text) a diagram-preserving map. 

\end{appendix}

\end{document}